\numberwithin{equation}{section} 
\newtheorem{theorem}{Theorem}[section]
\newtheorem{corollary}[theorem]{Corollary}
\newtheorem{lemma}{Lemma}[section]
\newtheorem{proposition}{Proposition}[section]
\theoremstyle{remark}
\newtheorem{remark}{Remark}[section]
\def\Xint#1{\mathchoice
{\XXint\displaystyle\textstyle{#1}}%
{\XXint\textstyle\scriptstyle{#1}}%
{\XXint\scriptstyle\scriptscriptstyle{#1}}%
{\XXint\scriptscriptstyle\scriptscriptstyle{#1}}%
\!\int}
\def\XXint#1#2#3{{\setbox0=\hbox{$#1{#2#3}{\int}$}
\vcenter{\hbox{$#2#3$}}\kern-.5\wd0}}
\def\pvint{\Xint-}
\newcommand{\im}{\mathrm{Im} }
\newcommand{\aR}{a^\mathrm{R}}
\newcommand{\aI}{a^{\mathrm{I}}}
\newcommand{\ii}{\mathrm{i}}
\newcommand{\dd}{\mathrm{d}}
\newcommand{\R}{{\mathbb R}}
\newcommand{\C}{{\mathbb C}}
\newcommand{\Z}{{\mathbb Z}}
\newcommand{\tr}{\mathrm{tr}}
\newcommand{\bzero}{\mathbf{0}}
\newcommand{\ve}{\mathbf{e}}
\newcommand{\vf}{\mathbf{f}}
\newcommand{\bm}{\mathbf{m}}
\newcommand{\bn}{\mathbf{n}}
\newcommand{\bsigma}{\boldsymbol{\sigma}}
\newcommand{\cA}{\mathcal{A}}
\newcommand{\cF}{\mathcal{F}}
\newcommand{\cG}{\mathcal{G}}
\newcommand{\cH}{\mathcal{H}} 
\newcommand{\cN}{\mathcal {N}}
\newcommand{\cT}{\mathcal{T}}
\newcommand{\cU}{\mathcal{U}}
\newcommand{\cV}{\mathcal{V}}
\newcommand{\tT}{\tilde{T}}
\newcommand{\mA}{\mathsf{A}}
\newcommand{\mB}{\mathsf{B}}
\newcommand{\mC}{\mathsf{C}}
\newcommand{\mP}{\mathsf{P}}
\newcommand{\mQ}{\mathsf{Q}}
\newcommand{\mU}{\mathsf{U}}
\newcommand{\mV}{\mathsf{V}}
\newcommand{\ocomma}{\,\overset{\circ}{,}\,}
\title{Spin generalizations of the Benjamin-Ono equation}
\author[1]{Bjorn K. Berntson}
\author[1,2]{Edwin Langmann}
\author[3]{Jonatan Lenells}
\date{January 18, 2022}
\affil[1]{Department of Physics, KTH Royal Institute of Technology, SE-106 91 Stockholm, Sweden}
\affil[2]{Nordita, KTH Royal Institute of Technology and Stockholm University, SE-106 91 Stockholm, Sweden}
\affil[3]{Department of Mathematics, KTH Royal Institute of Technology, SE-100 44 Stockholm, Sweden}
\begin{document}

\maketitle
  
\begin{abstract}
We present new soliton equations related to the $A$-type spin Calogero-Moser (CM) systems introduced by Gibbons and Hermsen. 
These equations are spin generalizations of the Benjamin-Ono (BO) equation and the recently introduced non-chiral intermediate long-wave (ncILW) equation. 
We obtain multi-soliton solutions of these spin generalizations of the BO equation and the ncILW equation via a spin-pole ansatz where the spin-pole dynamics is governed by the spin CM system  in the rational and hyperbolic cases, respectively. 
We also propose physics applications of the new equations,  and we introduce a spin generalization of the standard intermediate long-wave equation which interpolates between the matrix Korteweg-de Vries equation, the Heisenberg ferromagnet equation, and the spin BO equation. 
\end{abstract} 

\section{Introduction}
\label{sec:intro}
In this paper, we introduce and solve a new exactly solvable nonlinear partial integro-differential equation which is a natural spin generalization of the Benjamin-Ono (BO) equation \cite{benjamin1967,ono1975} and which we therefore call the {\em spin BO (sBO) equation}. We present arguments that the sBO equation is not only interesting from a mathematical point of view but also for physics. We also introduce, discuss, and present results about a closely related spin generalization of the non-chiral intermediate long-wave (ncILW) equation recently introduced and studied by us in \cite{berntson2020,berntsonlangmann2020,berntsonlangmann2021}.

The sBO equation describes the time evolution of a square matrix-valued function $\mU=\mU(x,t)$ depending on position and time variables $x\in\R$ and $t\in\R$, respectively, and it is given by 
\begin{equation} 
\label{eq:sBO} 
\boxed{ \mU_t + \{\mU,\mU_x\} +  H\mU_{xx}  +\ii [\mU,H\mU_x]=0 }
\end{equation} 
where $\mU_t$ is short for $\frac{\partial}{\partial t}\mU$ etc., $H$ is the usual Hilbert transform (which, for functions $f$ of $x\in\R$ is given by 
\begin{equation}
\label{eq:H}  
(Hf)(x)\coloneqq \frac1{\pi}\pvint_{\R} \frac1{x'-x} f(x')\,\dd{x'}
\end{equation} 
where $\pvint$ is the principal value integral), $\ii\coloneqq\sqrt{-1}$, and $[\cdot,\cdot]$ and $\{\cdot,\cdot\}$ denote the commutator and anti-commutator of square matrices, respectively (the matrix dimension $d\times d$ is arbitrary, with $d=1$ corresponding to the standard BO equation). 
 The sBO equation is interesting for several reasons:  (i) It is exactly solvable, and it includes well-known soliton equations as limiting cases; in particular, the sBO equation \eqref{eq:sBO} not only generalizes the BO equation but also the half-wave maps (HWM) equation \cite{zhou2015,lenzmann2018,lenzmann2018b}. 
(ii) It describes interesting physics beyond what has previously been described by exactly solvable equations; in particular, the sBO equation provides a hydrodynamic description of interacting particles with internal spin degrees of freedom, and it determines the evolution of the nonlinearly coupled charge- and spin-densities of this particle system. As will be explained in more detail, integrable systems of this kind are relevant for the quantum Hall effect in situations where the electron spin cannot be ignored, a topic which has received considerable interest in the physics literature in recent years;  see e.g.\ \cite{senthil1999,mccann2006,bernevig2006}. 
(iii) The sBO equation has multi-soliton solutions constructed via a simple spin-pole ansatz where the time evolution of the spins and poles is determined by the spin generalization  of the $A$-type Calogero-Moser (CM) system due to Gibbons and Hermsen \cite{gibbons1984}, which we call the spin CM (sCM) system; see also \cite{wojciechowski1987}. (iv) It provides another example of a correspondence between Calogero-Moser-Sutherland type systems and soliton equations; in particular, it is known that the BO equation and the ncILW equation are naturally related to $A$-type Calogero-Moser-Sutherland systems in several different ways \cite{berntson2020}, and while \eqref{eq:sBO} is a spin generalization of the BO equation, we also present a spin generalization of the ncILW equation and thus propose a correspondence between soliton equations and sCM systems in all four cases: rational (I), trigonometric (II), hyperbolic (III), elliptic (IV) (see, e.g., \cite{olshanetsky1981} for basic facts about CM systems).
We substantiate this proposal in cases I--III; case IV is technically more demanding and thus left for future work. 
(v) Our results suggest that this equation is exactly solvable in the same strong sense as the BO and HWM equations, and this opens up possibilities for several future research projects. 

In the following, we explain the above points (i)--(v) in more detail; we emphasize that these points are complementary and, depending on the interests of the reader, some of them can be ignored without loss of continuity. 

\paragraph{(i) Special cases.} 
The BO equation is an exactly solvable nonlinear partial integro-differential equation that describes one-dimensional internal waves in deep water. In particular, physically relevant $N$-soliton solutions of the BO equation can be obtained by a simple pole ansatz where the dynamics of the poles is governed by an $A_{N-1}$ CM system \cite{chen1979}. 
Thus, the BO equation on the real line is related to the rational CM system, while the BO equation with periodic boundary conditions is related to the trigonometric CM system.

The HWM equation is a nonlinear partial integro-differential equation that describes a spin density, represented by an $S^2$-valued function, propagating in one dimension. As two of us found recently in collaboration with Klabbers  \cite{berntsonklabbers2020}, the HWM equation is similar to the BO equation in that it has $N$-soliton solutions obtained via a spin-pole ansatz where the dynamics of the poles and of the spin degrees of freedom are determined by the $A_{N-1}$ sCM system and where, again, the real-line and periodic problems for the HWM equation correspond to the rational and trigonometric cases of this sCM system, respectively; see also \cite{ matsuno2022}. 
 It is known that the BO equation is related to a hydrodynamic description of the $A_{N-1}$ CM system \cite{abanov2009}, and that the HWM equation can be derived as a continuum limit of a spin chain which corresponds to a limiting case of the $A_{N-1}$ sCM system \cite{zhou2015}. This suggested to us that there should exist a more general soliton equation which includes the BO and HMW equations as limiting cases. As shown in the next two paragraphs, the sBO equation \eqref{eq:sBO} is such an equation. We found the sBO equation by generalizing the logic in \cite{abanov2009} to the sCM system, making use of a known B\"acklund transformation of $A$-type sCM systems \cite{gibbons1983}.  

The BO equation is given by 
\begin{equation} 
\label{eq:BO} 
u_t+2uu_x+Hu_{xx}=0 
\end{equation} 
for an $\R$-valued function $u=u(x,t)$. We make the ansatz $\mU(x,t)=u(x,t)\mP$ with $\mP$ a non-zero hermitian matrix satisfying $\mP^2=\mP$; then the sBO equation \eqref{eq:sBO} is satisfied if and only if $u(x,t)$ satisfies the BO equation \eqref{eq:BO}.  It is also interesting to note that by restricting $\mU$ to diagonal matrices, the sBO equation is reduced to $d$ decoupled BO equations: $\mU=\mathrm{diag}(u_1,\ldots,u_d)$ satisfies \eqref{eq:sBO} if and only if $u_\mu=u_\mu(x,t)$ solves the BO equation \eqref{eq:BO} for $\mu=1,\ldots,d$. 

The HWM equation describes the time evolution of an $\R^3$-valued function $\bm=(m^1,m^2,m^3)=\bm(x,t)$ satisfying the constraint $\bm^2\coloneqq (m^1)^2+(m^2)^2+(m^3)^2=1$ as follows, 
\begin{equation}
\label{eq:HWM}  
\bm_t = \bm\wedge H \bm_x 
\end{equation} 
where $\bm\wedge\bn \coloneqq (m^2n^3-m^3n^2,m^3n^1-m^1n^3,m^1n^2-m^2n^1)$ is the usual wedge product of three-vectors. By scaling $\mU(x,t)\to \lambda\mU(x,2\lambda t)$ and changing variables $2\lambda t\to t$ with $\lambda>0$ a scaling parameter, the sBO equation \eqref{eq:sBO} becomes 
\begin{equation} 
\label{eq:sBO1} 
\mU_t + \frac12 \{\mU,\mU_x\} + \frac1{2\lambda}H\mU_{xx}+\frac{\ii}{2}[\mU,H\mU_x] =0 . 
\end{equation}   
This reduces to a generalization of the HWM equation in the limit $\lambda\to\infty$ if we  impose the constraint $\mU^2=I$, where $I$ denotes the identity matrix.  
Indeed, $\mU^2=I$ implies $ \{\mU,\mU_x\}=0$, and by specializing $\mU$ to $2\times 2$ traceless hermitian matrices using the parametrization 
\begin{equation} 
\mU = \bm\cdot\bsigma = \begin{pmatrix} m^3 & m^1-\ii m^2\\ m^1+\ii m^2 & -m^3 \end{pmatrix} 
\end{equation} 
with $\bsigma=(\sigma_1,\sigma_2,\sigma_3)$ the Pauli matrices,  \eqref{eq:HWM} is obtained from \eqref{eq:sBO1} in the limit $\lambda\to\infty$, while $\mU^2=I$ is equivalent to $\bm^2=1$. We note that, while this reduction of the sBO equation to the HWM equation is mathematically simple, there is another similar but more complicated reduction explained in (ii) below which is more interesting from a physics point of view. 

\paragraph{(ii) Physics applications.} While hydrodynamics was initially developed to describe the propagation of fluids, recent developments in physics have established that hydrodynamic equations can provide a powerful tool to compute transport properties of strongly correlated electron systems like the cuprates or graphene; see e.g.\ \cite{hartnoll2007,andreev2011,svintsov2013}. Moreover, there exists a variety of topological such systems where the physical behavior is independent of model details and, in such a situation,  one can expect a successful description by integrable hydrodynamic equations (well-known arguments to explain this relation between universality and integrability in the context of soliton equations were given by Calogero \cite{calogero1991}); as an example, we mention the use of the BO equation to describe nonlinear waves at the boundary of fractional quantum Hall effect systems \cite{bettelheim2006,wiegmann2012}. 

Real electrons have spin and, for this reason, standard hydrodynamic equations describing the time evolution of a scalar density can only account for situations where the electron spin can be ignored. While this is the case for many conventional fractional quantum Hall effect systems, there also exist interesting such systems where the electron spin is important \cite{senthil1999,mccann2006,bernevig2006}. For such a system, one is interested in a hydrodynamic description by a soliton equation describing the time evolution of a fluid of particles carrying spin. The sBO equation \eqref{eq:sBO}, in the simplest non-trivial case when $\mU$ is a hermitian $2\times 2$ matrix, is a natural candidate for such an equation: as shown below, the sBO equation in this case can be written as a coupled system describing the time evolution of a charge- and a spin-density. Thus, we believe that it would be interesting to investigate if (a quantum version of) the sBO equation can be used to describe phenomena observed in quantum Hall effect systems where spin cannot be ignored, in generalization of results in \cite{bettelheim2006,wiegmann2012}. We recently proposed that the ncILW equation is relevant for parity invariant fractional Hall effect systems \cite{berntson2020}, and this suggests to us that its spin generalization presented in this paper (see \eqref{eq:sncILW}) will find applications in the context of the quantum spin Hall effect \cite{bernevig2006}.  

As another possible application of the sBO equation in physics, we mention the relation of the BO equation to conformal field theory of spin-less fermions \cite{abanov2005}. While spin-less fermions\footnote{To be more precise: spin-less chiral fermions in $1+1$ spacetime dimensions.} are among the simplest examples of a conformal field theory, there are conformal field theories that are natural spin generalizations of these models known as Wess-Zumino-Witten models \cite{difrancesco1997} which can take electron spin (and more complicated internal degrees of freedom) into account. We  expect that, in a similar way as the BO equation is related to spin-less fermions  \cite{abanov2005}, the sBO equation can be related to Wess-Zumino-Witten models. It would be interesting to substantiate this expectation. 

We now rewrite the sBO equation in the case where $\mU$ is a hermitian $2\times 2$ matrix as a coupled system describing the time evolution of a charge density $u$ and a spin density $\bm$. For that, we parametrize $\mU$ as follows, 
\begin{equation} 
\label{eq:mUansatz}
\mU = \frac{u}{2}\left(  I + \bm\cdot\bsigma\right)  = \frac{u}{2} \begin{pmatrix} 1 +m^3 & m^1-\ii m^2\\ m^1+\ii m^2 & 1-m^3 \end{pmatrix} 
\end{equation} 
where $u=u(x,t)$ and $\bm=(m^1,m^2,m^3)=\bm(x,t)$ are $\R$- and $\R^3$-valued functions, respectively. 
With this parametrization, we find after some computations that \eqref{eq:sBO} is equivalent to
\begin{equation} 
\label{eq:utbmt} 
\begin{split} 
u_t + (1+\bm^2)uu_x +\bm\cdot\bm_x u^2 +Hu_{xx} &= 0 ,
\\
\bm_t + u_x\bm(1-\bm^2)+u[\bm_x-\bm(\bm\cdot\bm_x)] +\frac1u[H(u\bm)_{xx} -\bm Hu_{xx}]   -\bm\wedge H(u\bm)_{x} 
 &=\bzero .
\end{split} 
\end{equation}
This system combines and generalizes the physics of the BO equation and of the HWM equation in a non-trivial way, preserving the exact solvability. Indeed, setting $\bm(x,t)=\bm_0$ (constant) such that $\bm_0^2=1$, the first equation in \eqref{eq:utbmt} becomes the BO equation \eqref{eq:BO}, while the second equation is trivially fulfilled. On the other hand, setting $u(x,t)=u_0$ (constant) and transforming $\bm(x,t)\to \bm(x-u_0t,u_0t)$, the first equation in  \eqref{eq:utbmt} is satisfied if we impose the condition $\bm^2=1$ and, with that, the second equations becomes the HWM equation \eqref{eq:HWM} in the limit $u_0\to \infty$. This suggests that \eqref{eq:utbmt} can be well approximated by the BO equation if the variation of the spin density in space and time can be ignored, while the HWM equation is a good approximation to \eqref{eq:utbmt} (on an appropriate time scale) if the charge density is large and only deviates slightly from a constant background $u_0$.

\paragraph{(iii) Multi-soliton solutions.} Following Ref.~\cite{gibbons1984}, we use the Dirac bra-ket notation \cite{dirac1939} and denote by $|e\rangle$ and $\langle f|$ vectors in some $d$-dimensional complex vector space $\cV$ and its dual $\cV^*$,  respectively; in particular, $|e\rangle\langle f|$ represents a $d\times d$ matrix with complex entries, and $|e\rangle\langle f|^\dag = |f\rangle\langle e|$ is the hermitian conjugate of this matrix. Moreover, $*$ is complex conjugation. (Readers not familiar with this notation can identify $|e\rangle\in \cV$ with $(e_{\mu})_{\mu=1}^d\in\C^d$, $\langle f|\in \cV^*$ with  $(f^*_{\mu})_{\mu=1}^d\in\C^d$, $\langle f|e \rangle$ with the scalar product $\sum_{\mu=1}^d f_\mu^*e^{\phantom*}_\mu$, and $|e\rangle\langle f|$ with the matrix $(e^{\phantom*}_\mu f_\nu^*)_{\mu,\nu=1}^d$.) 

We show in Section~\ref{sec:sBO} that, for arbitrary integer $N\geq 1$ and $x\in\R$,  
\begin{equation}
\label{eq:mUintro}  
\mU(x,t) = \ii \sum_{j=1}^N |e_j(t)\rangle \langle f_j(t)| \frac{1}{x-a_j(t)} - \ii \sum_{j=1}^N |f_j(t)\rangle \langle e_j(t)| \frac{1}{x-a_j(t)^*}
\end{equation} 
is a solution of the sBO equation \eqref{eq:sBO} provided that the variables $a_j=a_j(t)\in\C$, $|e_j\rangle=|e_j(t)\rangle\in\cV$ and $\langle f_j|=\langle f_j(t)|\in\cV^*$ satisfy the following time evolution equations,\footnote{$\sum_{k\neq j}^N$ is short for $\sum_{k=1,k\neq j}^N$.} 
\begin{equation} 
\label{eq:sCMintro} 
\begin{split} 
\frac{\dd^2}{\dd t^2} a_j = & \, 8\sum_{k\neq j}^N\frac{\langle f_j|e_k\rangle\langle f_k|e_j\rangle }{(a_j-a_k)^3},\\
\frac{\dd}{\dd t}|e_j\rangle = &\, 2\ii\sum_{k\neq j}^N \frac{|e_k\rangle\langle f_k|e_j\rangle}{(a_j-a_k)^2},\\ 
\frac{\dd}{\dd t}\langle f_j|  = &\,  -2\ii\sum_{k\neq j}^N\frac{\langle f_j|e_k\rangle \langle f_k|}{(a_j-a_k)^2} 
\end{split} 
\end{equation} 
for $j=1,\ldots,N$, with initial conditions that satisfy certain constraints; see Theorem~\ref{thm:sBO} for a precise formulation. 
Moreover, we show that initial conditions satisfying the pertinent constraints can be constructed by solving a linear algebra problem involving a $N d\times N d$ hermitian matrix, and that the solution \eqref{eq:mUintro} depends on $N d$ complex parameters; see Section~\ref{sec:NsolitonssBO} for details. 

It is important to note that, up to a rescaling of time, the time evolution equations \eqref{eq:sCMintro} and one of the above-mentioned constraints on the initial conditions (given in \eqref{eq:fjej}) defines the sCM model discovered by Gibbons and Hermsen \cite{gibbons1984}, and another constraint (given in \eqref{eq:BTt=0}) is a known B\"acklund transformation of the sCM model \cite{gibbons1983}.
Moreover, for $d=1$, the solution above reduces to the $N$-soliton solution of the BO equation found in \cite{chen1979} that relates the BO equation to the rational $A_{N-1}$ CM model. Thus, in the same sense as the sCM model is a natural generalization of the simplest non-trivial CM model, the sBO equation is a natural generalization of the BO equation. 

As shown by Gibbons and Hermsen in their original paper \cite[Section 6]{gibbons1984}, the sCM model describes the time evolution of poles and spins of rational solutions of the boomeron equation, which is a soliton equation introduced and studied in \cite{calogero1976coupled}. One might wonder if there is a relation between this fact and our work. We cannot exclude this possibility: it is conceivable to us that it is possible to derive the boomeron equation as a local limit $\delta\to\infty $ of the sncILW equation \eqref{eq:sncILW} or the sILW equation \eqref{eq:sILW} introduced below and, in this way, there could be an indirect relation. However, even if this is the case, we expect that it would be challenging to make this relation precise. Indeed, while the boomeron equation is similar to \eqref{eq:utbmt} in that it is a system describing the time evolution of a vector coupled to a scalar, it is different in other important ways; in particular, \eqref{eq:utbmt} is rotation invariant (and this is also the case for the certain local limits of these equations that we derive, see Section~\ref{sec:locallimit} for details), while the boomeron equation is not.

\paragraph{(iv) Spin generalization of ncILW equation and soliton-CM correspondence.} For $\delta>0$, we define the following integral operators acting on functions $f$ of $x\in\R$, 
\begin{equation} 
\label{eq:TT}
\begin{split} 
(Tf)(x) &= \frac1{2\delta}\pvint_{\R}\coth\left(\frac{\pi}{2\delta}(x'-x)\right)f(x')\,\dd{x'}, \\
(\tT f)(x) &= \frac1{2\delta}\int_{\R}\tanh\left(\frac{\pi}{2\delta}(x'-x)\right)f(x')\,\dd{x'}.
\end{split} 
\end{equation} 
The ncILW equation was introduced in \cite{berntson2020} as a non-chiral version of the ILW equation and is given for two scalar-valued functions $u(x,t)$ and $v(x,t)$ by
\begin{equation} 
\label{eq:ncILW} 
\begin{split} 
&u_t + 2 u u_x + Tu_{xx}+\tilde{T}v_{xx}=0,\\
&v_t - 2 v v_x - Tv_{xx}-\tilde{T}u_{xx}=0.
\end{split} 
\end{equation} 
Here, we introduce the following spin generalization of the ncILW equation given for two square matrix-valued functions $\mU=\mU(x,t)$ and $\mV=\mV(x,t)$ by
\begin{equation} 
\label{eq:sncILW} 
\boxed{
\begin{aligned} 
\mU_t &+ \{\mU,\mU_x\} + T\mU_{xx}+\tT \mV_{xx} +\ii [\mU,T\mU_x]+\ii [\mU,\tT \mV_x] =0,\\
\mV_t &- \{\mV,\mV_x\} - T\mV_{xx}-\tT \mU_{xx} +\ii [\mV,T\mV_x]+\ii[\mV,\tT \mU_x] =0
\end{aligned} 
}
\end{equation} 
(again, the matrix size $d\times d$ of $\mU$ and $\mV$ is arbitrary; $d=1$ corresponds to the ncILW equation \eqref{eq:ncILW}).
We refer to \eqref{eq:sncILW} as the {\em spin ncILW (sncILW) equation}. 
Our main result on the sncILW equation is a construction of $N$-soliton solutions obtained via a spin-pole ansatz where the dynamics of the spins and poles are determined by the $A_{N-1}$ sCM model in the hyperbolic case (III), in natural generalization of a known result about the ncILW equation \cite{berntson2020}; see Theorem~\ref{thm:sncILW}. 

Note that $\lim_{\delta\to+\infty}\tT =0$ \cite{berntson2020} and thus, clearly, the two equations in \eqref{eq:sncILW} decouple in the limit $\delta\to\infty$. 
Moreover, since the limit $\delta\to\infty$ of $T$ coincides with the Hilbert transform $H$ given in \eqref{eq:H} \cite{kodama1981} (see also \cite{berntson2020}), the first of these decoupled equations is identical to the sBO equation \eqref{eq:sBO}, and the second equation is obtained from the sBO equation  \eqref{eq:sBO} by the parity transformation $\mU(x,t)\to \mV(x,t)=\mU(-x,t)$. Since the sBO equation and the one obtained from it by this parity transformation are different, the sBO equation   \eqref{eq:sBO} is chiral, and we therefore regard \eqref{eq:sncILW} as a non-chiral generalization of the sBO equation where the two chiral degrees of freedom, $\mU$ and $\mV$, are coupled by the operator $\tT$.

We also introduce periodic versions of the sBO and sncILW equations: the former is given by \eqref{eq:sBO} for an $L$-periodic, square matrix-valued function $\mU$, $\mU(x+L,t)=\mU(x,t)$ with $L>0$ a fixed parameter, and the Hilbert transform
\begin{equation}
\label{eq:Hp}  
(Hf)(x)\coloneqq \frac1{L}\pvint_{-L/2}^{L/2} \cot\left(\frac{\pi}{L}(x'-x) \right)f(x')\,\dd{x'}; 
\end{equation} 
the latter is given by \eqref{eq:sncILW} for $L$-periodic, square matrix-valued functions $\mU$ and $\mV$ and the integral operators 
\begin{equation} 
\label{eq:TTe}
\begin{split} 
(Tf)(x) &= \frac1{\pi}\pvint_{L/2}^{L/2}\zeta_1(x'-x)f(x')\,\dd{x'}, \\
(\tT f)(x) &= \frac1{\pi}\int_{-L/2}^{L/2}\zeta_1(x'-x+\ii\delta)f(x')\,\dd{x'},
\end{split} 
\end{equation} 
where 
\begin{equation}
\label{eq:zeta1} 
\zeta_{1}(z)\coloneqq  \zeta(z)-\frac{\eta_{1}}{\omega_{1}}z
\end{equation} 
with $\zeta(z)$ the Weierstrass $\zeta$-function with half-periods $(\omega_1,\omega_2)=(L/2,\ii\delta)$ and $\eta_1=\zeta(\omega_1)$ \cite{DLMF}. 
Similarly as in the real-line case, the periodic sBO equation is chiral, and the periodic sncILW equation reduces to two decoupled periodic sBO equations of opposite chirality in the limit $\delta\to\infty$ (details can be found in \cite{berntson2020}).  

To explain the relation between the soliton equations discussed above and CM systems, we recall that each CM system comes in four versions which can be distinguished by the special function
\begin{equation} 
\label{eq:alpha} 
\alpha(z)\coloneqq \begin{cases} 1/z & \text{(I: rational case)}\\
(\pi/L)\cot(\pi z/L) & \text{(II: trigonometric case)}\\
(\pi/2\delta)\coth(\pi z/2\delta) & \text{(III: hyperbolic case)}
\end{cases} \quad (z\in\C) 
\end{equation} 
with $L>0$ and $\delta>0$ fixed parameters;  the fourth version corresponds to 
\begin{equation} 
\alpha(z)\coloneqq \zeta_1(z) \quad  \text{(IV: elliptic case)}, 
\end{equation} 
but we do not include it in \eqref{eq:alpha} since it is more complicated and, for this reason, our results in this paper are restricted to cases I--III.  
This function $\alpha(z)$ is important since, for example, it determines the corresponding CM interaction potential as $V(z)=-\alpha'(z)$ \cite{olshanetsky1981}. 
Moreover, using these special functions, one can define integral operators
\begin{equation} 
\label{eq:TTalpha}
\begin{split} 
(Tf)(x) &\coloneqq \frac1\pi \pvint \alpha(x'-x)f(x')\,\dd{x'} \quad \text{(cases I--IV)}, \\
(\tT f)(x)&\coloneqq \frac1\pi \pvint \alpha(x'-x+\ii\delta)f(x')\,\dd{x'} \quad \text{(cases III and IV)}, 
\end{split} 
\end{equation} 
where the integrations are over $\R$ in cases I and III and over $[-L/2,L/2]$ in cases II and IV. 
We note that, in cases I and II, $T$ is identical to the Hilbert transform $H$ in \eqref{eq:H} and \eqref{eq:Hp}, respectively, suggesting that the real-line and periodic versions of the sBO equation are related to the $A$-type sCM system in the rational and trigonometric cases, respectively. Similarly, in cases III and IV, the operators $T$ and $\tT$ in \eqref{eq:TTalpha} are identical to the operators in \eqref{eq:TT}  and \eqref{eq:TTe}, suggesting that the real-line and periodic versions of the sncILW equation are related to the $A$-type sCM system in the hyperbolic and elliptic cases, respectively. The $N$-soliton solutions of these equations obtained in this paper confirm these expectations in cases I--III, and we conjecture that this result can be generalized to case IV. Thus, the equations proposed in the present paper extend the relation between soliton equations and CM systems proposed in \cite{berntson2020} to the spin setting. 


As shown by two of us in collaboration with Klabbers \cite{berntsonklabbers2021}, there exists a non-chiral variant of an intermediate generalization of the Heisenberg ferromagnetic equation which generalizes the HWM equation and which has soliton solutions given by a spin-pole ansatz governed by the hyperbolic sCM model, in generalization of a result in \cite{berntsonklabbers2020} for the HWM equation mentioned above; this {\em non-chiral intermediate Heisenberg ferromagnetic equation} is given by 
\begin{equation} 
\begin{split} 
\bm_t & =\, + \bm\wedge T\bm_{x} - \bm\wedge \tT\bn_{x},\\
\bn_t & =\, - \bn\wedge T\bn_{x} + \bn\wedge \tT\bm_{x},\\
\end{split} 
\end{equation} 
for two $\R^3$-valued functions $\bm=\bm(x,t)$ and $\bn=\bn(x,t)$ satisfying $\bm^2=\bn^2=1$  \cite{berntsonklabbers2021}, and we checked that it is obtained from the sncILW equation \eqref{eq:sncILW} in a similar way as the HWM equation is obtained from the sBO equation \eqref{eq:sBO} (as explained in the paragraph containing \eqref{eq:sBO1}), after changing the sign of $\mV$ (the latter is merely a convention).

Our results in the present paper, together with results in the literature on the standard (chiral) ILW equation \cite{kodama1981} and on the non-chiral intermediate Heisenberg ferromagnet equation \cite{berntsonklabbers2021}, suggest that the following is also an integrable generalization of the sBO equation, 
\begin{equation} 
\label{eq:sILW} 
\boxed{ 
\mU_t +  \{\mU,\mU_x\} + \frac1\delta\mU_x  + T\mU_{xx}  + \ii [\mU,T\mU_x] =0
}
\end{equation} 
(we added the term $\mU_x/\delta$ for convenience; note that this term can be removed by a Galilean transformation $x\to x-t/\delta$). We refer to \eqref{eq:sILW}  as the {\em spin ILW (sILW) equation}. Note that, in the limit $\delta\to\infty$, it reduces to the sBO equation \eqref{eq:sBO}. We show that, in generalization of the well-known fact that the standard ILW equation reduces to the Korteweg-de Vries  (KdV) equation in a limit $\delta\downarrow 0$ (see e.g. \cite{scoufis2005}), 
 \eqref{eq:sILW}  reduces to the following matrix generalization of the KdV equation in a certain $\delta\downarrow 0$ limit, 
\begin{equation} 
\label{eq:mKdV} 
\mU_t +  \{\mU,\mU_x\} + \mU_{xxx}  = 0  
\end{equation} 
(see Section~\ref{sec:locallimit}). 
This so-called {\em matrix KdV equation} was introduced by Lax \cite{lax1968}, and its multisoliton solutions were constructed in \cite{goncharenko2001}. 
Moreover, the sILW equation allows for another limit $\delta\downarrow 0$ leading to the following generalization of the Heisenberg ferromagnet (HF) equation 
\begin{equation} 
\label{eq:sHF} 
\mU_t + \ii [\mU,\mU_{xx}]=0
\end{equation} 
with the constraint $\mU^2=I$, where $I$ denotes the identity matrix (see Section~\ref{sec:locallimit} for details); note that \eqref{eq:sHF} reduces to the standard HF equation if $\mU$ is restricted to lie in the class of $2\times 2$ hermitian traceless matrices. 
Thus, the sILW equation is an interpolation between the matrix KdV equation, the HF equation, and the sBO equation. 
 
 We call the equations introduced in the present paper the {\em spin} (rather than the {\em matrix}) BO and (nc)ILW equations since, in contrast to the matrix KdV equation \eqref{eq:mKdV}, they  incorporate the nonlinear physics of charge- and spin-densities in a single equation; as discussed above, this property is lost in the local limit $\delta\to\infty$ (since, depending on the scaling, either the Heisenberg-term $\ii [\mU,\mU_{xx}]$ or the KdV-terms $\{\mU,\mU\}_x+\mU_{xxx}$ disappear in that limit).

\paragraph{(v) Summary and open problems.} 
In this paper, we introduce a family of equations containing well-known soliton equations as special and limiting cases. This family of equations consists of the sBO equation \eqref{eq:sBO}, the sncILW equation \eqref{eq:sncILW}, and the sILW equation \eqref{eq:sILW} for different matrix sizes $d\in\Z_{\geq 2}$; for $d=1$, these equations reduce to the known  BO, ncILW, and ILW equations, respectively.  
We show that the sBO and sncILW equations are exactly solvable and, by that, establish a relation to sCM models.  While we do not give any result about the sILW equation for $0<\delta<\infty$, we believe that the arguments we present about this equation strongly suggest that it is integrable.
Obviously, it would be interesting to generalize other known results about the special case $d=1$ (e.g., Hirota bilinear forms, Lax pairs, B\"acklund transformations, and inverse scattering transforms) to $d>1$.

For the ncILW equation, we were able to generalize the multi-soliton solutions in the real-line case to the periodic case in \cite{berntson2020}. 
However, for the sncILW equation \eqref{eq:sncILW}, we do not have this result. We found that the construction of soliton solutions of the sncILW equation in the periodic case is more challenging for $d>1$ than for $d=1$ for the following reasons: first, the generalization of Proposition~\ref{prop:BT} to the elliptic case (IV) is more difficult (for $d=1$, this generalization is known), and second, the commutator terms in \eqref{eq:sncILW} lead to severe complications. We thus believe that the construction of soliton solutions of the periodic sncILW equation is an interesting problem requiring new ideas.

We were inspired to search for the soliton equations presented in this paper by heuristic arguments suggested to us by the known relation between the quantum version of the BO equation and the Calogero-Sutherland model\footnote{This is the quantum version of the trigonometric $A$-type CM model.} \cite{abanov2005}, together with a generalization of this relation to the corresponding elliptic Calogero-Sutherland model that lead us to the ncILW equation \cite{berntson2020}. 
It would be interesting to promote these heuristic arguments to precise results by constructing a second quantization of the quantum versions of the trigonometric sCM models, in generalization of results in \cite{carey1999}. We believe that this can open a way towards finding a precise relation between the sBO equation and Wess-Zumino-Witten models.

As already discussed in (ii) above, there are several systems in the real world motivating the development of hydrodynamic descriptions of identical particles with spin; however, hydrodynamics for coupled charge- and spin-densities is not a fully developed subject. As one specific example,  we mention recent work deriving such hydrodynamic equations using the known hydrodynamic description of the standard CM model as a guide \cite{kulkarni2009}; see also Xing's PhD thesis \cite{xing2015} aiming in this direction.
We hope that the present paper opens up a way to push these results further towards a hydrodynamic description of the sCM model which can serve as a prototype of spinful hydrodynamics.

\paragraph{Plan of paper.} We state and prove the B\"acklund transformations for the sCM model in the rational, trigonometric and hyperbolic cases in Section~\ref{sec:sCM}. In Section~\ref{sec:sBO} and \ref{sec:sncILW}, we state and prove our multi-soliton solutions of the sBO and sncILW equation, respectively; we note that while the results in Section~\ref{sec:sBO} on the sBO equation are for the real-line and periodic cases, the results in Section~\ref{sec:sncILW} on the sncILW equation only apply to the real-line case.  Section~\ref{sec:results} contains the Hamiltonian formulations of the sBO and sncILW equations (Section~\ref{sec:Hamiltonian}), details on the local limit $\delta\to\infty$ of the sILW equation (Section~\ref{sec:locallimit}), and the spin generalization of the bidirectional BO equation that led us to the results in the present paper (Section~\ref{sec:2sBO}). We also include three appendices with functional identities that we need (Appendix~\ref{app:alphaV}), and the non-hermitian solutions of the sBO equation (Appendix~\ref{app:sBOsolutions}) and the sncILW equation (Appendix~\ref{app:sncILWsolutions}), respectively.

\section{Spin Calogero-Moser systems}
\label{sec:sCM}  
In this section, we collect results about the $A$-type sCM systems due to Gibbons and Hermsen \cite{gibbons1984} that we need. In particular, we define the $A$-type sCM systems in Section~\ref{sec:sCM_def}, and, following  \cite{gibbons1983}, we state and prove a B\"acklund transformation for these systems in Sections~\ref{sec:sCM_BT} and \ref{sec:BT_proof}. For simplicity, we restrict our discussion to cases I--III; we expect a similar result for the elliptic case IV, but this case is more complicated and thus left to future work. 

We believe that the results in this section are of interest in their own right: to our knowledge, the proof of the relevant B\"acklund transformation (see \eqref{eq:BT}) in the literature has previously been restricted to the rational case (I) and $M=N$ \cite{gibbons1983}. We mention in passing that, while the original paper introduced and solved the sCM model only in the rational case \cite{gibbons1984}, the integrability of the sCM model in all cases I--IV was proved in \cite{hikami1993}, and explicit solutions of the sCM model in the elliptic case (IV) can be found in  \cite{krichever1995}. 

\subsection{Definition}
\label{sec:sCM_def}
Let $N\in\Z_{\geq 1}$ be arbitrary and let
\begin{equation} 
\label{eq:V} 
V(z)\coloneqq \begin{cases} 1/z^2 & \text{(I: rational case)}\\
(\pi/L)^2/\sin^2(\pi z/L) & \text{(II: trigonometric case)}\\
(\pi/2\delta)^2/\sinh^2(\pi z/2\delta) & \text{(III: hyperbolic case).}\\
\end{cases} \quad (z\in\C) 
\end{equation} 
For each case I--III, the corresponding $A_{N-1}$ sCM system is a dynamical system of $N$ particles moving in the complex plane and with internal degrees of freedom described by a $d$-dimensional vector space $\cV$ and its dual $\cV^*$, with $d\in\Z_{\geq 1}$ arbitrary. 
Denoting the position of the $j$th particle at time $t\in\R$ by $a_j=a_j(t)\in\C$ and its internal degrees of freedom by vectors $|e_j\rangle=|e_j(t)\rangle\in \cV$ and $\langle f_j|=\langle f_j(t)|\in \cV^*$, this system can be defined by the time evolution equations
\begin{equation} 
\label{eq:sCM1a} 
\ddot a_j = -4\sum_{k\neq j}^N \langle f_j|e_k\rangle\langle f_k|e_j\rangle V'(a_j-a_k) \quad (j=1,\ldots,N) 
\end{equation} 
and
\begin{equation} 
\begin{split} 
\label{eq:sCM1b} 
|\dot e_j\rangle &= 2\ii\sum_{k\neq j}^N |e_k\rangle\langle f_k|e_j\rangle V(a_j-a_k),\\
\langle\dot f_j| & = -2\ii\sum_{k\neq j}^N\langle f_j|e_k\rangle \langle f_k| V(a_j-a_k)
\end{split} \quad (j=1,\ldots,N) 
\end{equation} 
(the dot indicates differentiation with respect to $t$ and the prime indicates differentiation with respect to the argument of the respective function), together with the following constraints, 
\begin{equation} 
\label{eq:sCM1c} 
\langle f_j|e_j\rangle = 1 \quad  (j=1,\ldots,N) 
\end{equation}  
(our notation is explained in the paragraph above \eqref{eq:mUintro}).
Observe that the constraints \eqref{eq:sCM1c} are preserved under the equations of motion \eqref{eq:sCM1b}.

\subsection{B\"acklund transformations}
\label{sec:sCM_BT}
We consider the sCM system \eqref{eq:sCM1a}--\eqref{eq:sCM1c}, together with another such system involving $M\in\Z_{\geq 0}$ particles located at the positions $b_j=b_j(t)\in\C$, $j = 1, \dots, M$, together with spin degrees of freedom $|g_j\rangle=|g_j(t)\rangle\in \cV$ and $\langle h_j|=\langle h_j(t)|\in \cV^*$ (note that the vector spaces $\cV$ and $\cV^*$ are the same as for the first system; while $M=N$ is an important special case, also the cases $M\neq N$ and, in particular, $M=0$ are interesting). 
More specifically, the second system is given by the time evolution equations
\begin{equation} 
\label{eq:sCM2a} 
\ddot b_j = -4\sum_{k\neq j}^M \langle h_j|g_k\rangle\langle h_k|g_j\rangle V'(b_j-b_k) \quad (j=1,\ldots,M) 
\end{equation} 
and 
\begin{equation} 
\begin{split} 
\label{eq:sCM2b} 
|\dot g_j\rangle &= 2\ii\sum_{k\neq j}^M |g_k\rangle\langle h_k|g_j\rangle V(b_j-b_k),\\
\langle\dot h_j| & = -2\ii\sum_{k\neq j}^M\langle h_j|g_k\rangle \langle h_k| V(b_j-b_k), 
\end{split} \quad (j=1,\ldots,M) 
\end{equation} 
and the constraints 
\begin{equation} 
\label{eq:sCM2c} 
\langle h_j|g_j\rangle = 1 \quad (j=1,\ldots,M) .
\end{equation}  

As shown in \cite{gibbons1983} in a special case, two such sCM systems \eqref{eq:sCM1a}--\eqref{eq:sCM1c} and  \eqref{eq:sCM2a}--\eqref{eq:sCM2c} are connected by a B\"acklund transformation as follows, 
\begin{subequations} 
\label{eq:BT} 
\begin{align} 
\label{eq:BTa} 
\dot a_j \langle f_j| &= 2\ii\sum_{k\neq j}^N  \langle f_j|e_k\rangle \langle f_k|\alpha(a_j-a_k) -2\ii\sum_{k=1}^M  \langle f_j|g_k\rangle\langle h_k|\alpha(a_j-b_k)\quad &(j=1,\ldots,N), \\
\label{eq:BTb} 
\dot b_j |g_j\rangle &=-2\ii\sum_{k\neq j}^M |g_k\rangle\langle h_k|g_j\rangle\alpha(b_j-b_k)+2\ii\sum_{k=1}^N|e_k\rangle\langle f_k|g_j\rangle\alpha(b_j-a_k) \quad &(j=1,\ldots,M),    
\end{align} 
\end{subequations} 
with the function $\alpha(z)$ given by \eqref{eq:alpha}. The precise statement is given below. 

\begin{proposition}[B\"acklund transformation for sCM system]
\label{prop:BT} 
In each case I--III, the first order equations  \eqref{eq:sCM1b}, \eqref{eq:sCM2b} and \eqref{eq:BT}, together with the constraints \eqref{eq:sCM1c} and  \eqref{eq:sCM2c}, imply the second order equations \eqref{eq:sCM1a} and \eqref{eq:sCM2a}.  
\end{proposition}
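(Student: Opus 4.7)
\smallskip

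\noindent\textbf{Proof plan.} The strategy is to derive \eqref{eq:sCM1a} by differentiating the Bäcklund relation \eqref{eq:BTa} in time and using the first-order equations \eqref{eq:sCM1b}, \eqref{eq:sCM2b}, \eqref{eq:BT} together with functional identities for $\alpha(z)$. By symmetry (swap $(a,e,f,N)\leftrightarrow (b,g,h,M)$ and flip signs appropriately), the same argument starting from \eqref{eq:BTb} will yield \eqref{eq:sCM2a}; therefore I focus on \eqref{eq:sCM1a}.

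\noindent\textbf{Step 1: extract a scalar equation for $\dot a_j$.} I would first pair \eqref{eq:BTa} from the right with $|e_j\rangle$. Using the constraint $\langle f_j|e_j\rangle=1$ from \eqref{eq:sCM1c}, this yields
\begin{equation*}
\dot a_j = 2\ii \sum_{k\neq j}^N \langle f_j|e_k\rangle\langle f_k|e_j\rangle\,\alpha(a_j-a_k) - 2\ii\sum_{k=1}^M \langle f_j|g_k\rangle\langle h_k|e_j\rangle\,\alpha(a_j-b_k).
\end{equation*}

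\noindent\textbf{Step 2: differentiate in $t$.} Differentiating the previous display, I would obtain $\ddot a_j$ on the left, and on the right three kinds of terms: (a) the time derivatives of the spin factors $\langle f_j|,|e_k\rangle,\langle f_k|,|e_j\rangle,|g_k\rangle,\langle h_k|$, which I replace using \eqref{eq:sCM1b} and \eqref{eq:sCM2b}; (b) the derivative $\alpha'(a_j-a_k)=-V(a_j-a_k)$ multiplied by $\dot a_j-\dot a_k$, and similarly $\alpha'(a_j-b_k)=-V(a_j-b_k)$ times $\dot a_j-\dot b_k$; for each velocity factor I would substitute back the Bäcklund expressions obtained by pairing \eqref{eq:BTa} with $|e_k\rangle$ and \eqref{eq:BTb} with $\langle h_k|$ (or $|e_j\rangle$ on the LHS). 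The result is a long expression consisting of double and triple sums over the indices labelling particles of the two systems, with coefficients built out of products of two $\alpha$'s, products $\alpha\cdot V$, and bilinears in the spin scalars $\langle f_\cdot|e_\cdot\rangle,\langle f_\cdot|g_\cdot\rangle,\langle h_\cdot|e_\cdot\rangle,\langle h_\cdot|g_\cdot\rangle$.

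\noindent\textbf{Step 3: collapse triple sums via functional identities.} All sums quadratic in $\alpha$ must be reorganised using the two- and three-term identities satisfied by $\alpha$ in each case I--III; these are the identities catalogued in Appendix~\ref{app:alphaV}, in particular the key relation $\alpha(x)\alpha(y)-\alpha(x+y)[\alpha(x)+\alpha(y)]=V(x+y)-V(x)-V(y)+\text{const}$ and its three-variable form $\alpha(u-v)\alpha(v-w)+\alpha(v-w)\alpha(w-u)+\alpha(w-u)\alpha(u-v)=V(u-v)+V(v-w)+V(w-u)+\text{const}$. Applying such identities to each triple $(a_j,a_k,a_\ell)$ and each mixed triple $(a_j,a_k,b_\ell)$ or $(a_j,b_k,b_\ell)$ allows me to rewrite the $\alpha\alpha$-sums as combinations of $V$-sums plus fully symmetric terms that drop out upon antisymmetrisation in a pair of summation indices after using the spin identities $\langle f_j|e_k\rangle\langle f_k|e_\ell\rangle\langle f_\ell|e_j\rangle$ and their mixed analogues. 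After this reorganisation the triple sums cancel, the pure $b$-contributions and $a$-$b$ cross contributions disappear pairwise, and what remains is precisely $-4\sum_{k\neq j}^N\langle f_j|e_k\rangle\langle f_k|e_j\rangle V'(a_j-a_k)$, which is \eqref{eq:sCM1a}.

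\noindent\textbf{Main obstacle.} The computational core is Step 3: there is no conceptual difficulty, but the bookkeeping is delicate because the spin brackets do \emph{not} commute (so one cannot simply symmetrise in indices before invoking a functional identity), and because the identities for $\alpha$ in the trigonometric and hyperbolic cases carry extra constants that must cancel globally. The cleanest way to manage this is to treat each of cases I--III with a single master identity expressing $\alpha(x)\alpha(y)$ in terms of $\alpha(x+y)$ and $V$, track the contributions indexed by each ordered pair $(k,\ell)$ of summation indices together with their ``transposed'' partners $(\ell,k)$, and verify that the constant terms assemble into $j$-independent expressions that are killed by the antisymmetry $\langle f_j|e_k\rangle\langle f_k|e_j\rangle V'(a_j-a_k)=-\langle f_j|e_k\rangle\langle f_k|e_j\rangle V'(a_k-a_j)$. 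Once this bookkeeping is arranged, the identity \eqref{eq:sCM1a} drops out, and the symmetric computation yields \eqref{eq:sCM2a}.
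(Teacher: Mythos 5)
Your overall strategy---differentiate the B\"acklund relation in time, substitute the first-order equations for the spins and the B\"acklund relations for the velocities, and collapse the resulting sums with functional identities for $\alpha$ and $V$---is the same as the paper's. The one structural difference is that you contract \eqref{eq:BTa} with $|e_j\rangle$ \emph{before} differentiating, so that all subsequent bookkeeping is done on scalar triple sums of spin brackets. The paper instead differentiates the covector equation $\dot a_j\langle f_j|=2\langle f_j|\mB_j$, with $\mB_j$ as in \eqref{eq:mBj}, keeps everything matrix-valued, and contracts with $|e_j\rangle$ only at the very end; the two starting points agree because they differ by a multiple of $\frac{\dd}{\dd t}\langle f_j|e_j\rangle=0$. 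The paper's choice is what makes your Step 3 manageable: the surviving double sums organize themselves into commutators $[\mP_k,\mP_l]$ and anticommutators $\{\mP_k,\mP_l\}$ of the rank-one projectors and then vanish by $(k,l)\leftrightarrow(l,k)$ (anti)symmetry together with the evenness of $V$, cf.\ \eqref{eq:ddotajfj4}; at the scalar level you must pair each ordered index pair with its transpose by hand, which is the same cancellation in less convenient clothing.

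There are two concrete problems with the proposal as written. First, both functional identities you quote are wrong: in all three cases I--III the correct statements are $\alpha(x)\alpha(y)-\alpha(x+y)\bigl(\alpha(x)+\alpha(y)\bigr)=-C$ and the cyclic identity \eqref{eq:Idmain}, whose right-hand side is the pure constant $C$ of \eqref{eq:C}, with no $V$-terms (check the rational case: the cyclic sum of $1/\bigl((a-b)(b-c)\bigr)$ and its two companions is identically zero, while $V(a-b)+V(b-c)+V(c-a)$ is not constant). The cancellation actually runs through the \emph{differentiated} forms of these identities, namely $\alpha(z)V(z)=-\frac12 V'(z)$ and \eqref{eq:IdalalV}; with the identities as you state them the $V'$-terms would not assemble into \eqref{eq:sCM1a}. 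Second, Step 3 is the entire content of the proposition and is asserted rather than carried out: ``the triple sums cancel, the pure $b$-contributions and $a$-$b$ cross contributions disappear pairwise'' is precisely what needs proof, and the actual mechanism is not a pairwise cancellation of cross terms---in the paper the $a$-$b$ terms drop out because the factor $1+r_k$ vanishes for $k>N$ in the final single sum, while the double sums die by index-exchange antisymmetry irrespective of which system the indices label. So the plan points in the right direction, but neither the identities it invokes nor the cancellation it relies on are correct or established as written.
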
 
A self-contained proof of Proposition \ref{prop:BT} can be found in Section~\ref{sec:BT_proof}.

\begin{remark} 
As already mentioned, the special case $M=N$ in the rational case (I) was stated and proved in  \cite{gibbons1983}, More specifically, in this special case, the equations \eqref{eq:BTa} and \eqref{eq:BTb} reduce to the second and first equations in \cite[Eq.~(17)]{gibbons1983}, respectively, using the transformation $t\to -2t$ and the identifications 
\begin{equation} 
 (a_j,\dot a_j,|e_j\rangle,\langle f_j|,b_j,\dot b_j,|g_j\rangle,\langle h_j|) \to x^+_j,p^+_j,|e^+_j\rangle,\langle f^+_j|,x_j,p_j,|e_j\rangle,\langle f_j|) \quad (j=1\ldots,N).
\end{equation}  
\end{remark} 

It is interesting to note that Proposition~\ref{prop:BT} has the following consistent reduction when $N=M$,
\begin{equation} 
\label{eq:reduction} 
b_j=a_j^*,\quad |g_j\rangle = \langle f_j|^\dag = |f_j\rangle,\quad \langle h_j| = |e_j\rangle^\dag = \langle e_j|\quad (j=1,\ldots,N)
\end{equation} 
where $*$ and $\dag$ indicate complex and hermitian conjugation, respectively. 
Indeed, by imposing these conditions, \eqref{eq:sCM2a}--\eqref{eq:sCM2c}  and \eqref{eq:BTb} become the hermitian conjugate of \eqref{eq:sCM1a}--\eqref{eq:sCM1c} and \eqref{eq:BTa}, respectively, and Proposition~\ref{prop:BT} simplifies as follows. 

\begin{corollary}
\label{cor:BT} 
In each case I--III, the first order equations \eqref{eq:sCM1b} and 
\begin{equation}
\label{eq:BThermitian} 
\dot a_j \langle f_j| = 2\ii\sum_{k\neq j}^N  \langle f_j|e_k\rangle \langle f_k|\alpha(a_j-a_k) -2\ii\sum_{k=1}^N  \langle f_j|f_k\rangle\langle e_k|\alpha(a_j-a^*_k)\quad (j=1,\ldots,N), 
\end{equation} 
together with the constraints \eqref{eq:sCM1c}, imply the  second order equations \eqref{eq:sCM1a}. 
\end{corollary}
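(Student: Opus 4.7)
The plan is to deduce the corollary from Proposition~\ref{prop:BT} by showing that the hermitian reduction \eqref{eq:reduction} is internally consistent, so that the second system \eqref{eq:sCM2a}--\eqref{eq:sCM2c} together with \eqref{eq:BTb} becomes automatically implied by (the hermitian conjugates of) the first system and \eqref{eq:BThermitian}. The only content of the corollary beyond Proposition~\ref{prop:BT} is therefore the bookkeeping that checks this reduction.

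First I would set $M = N$ in Proposition~\ref{prop:BT} and introduce the new variables $b_j(t) \coloneqq a_j(t)^*$, $|g_j(t)\rangle \coloneqq |f_j(t)\rangle$, $\langle h_j(t)| \coloneqq \langle e_j(t)|$, assuming $(a_j,|e_j\rangle,\langle f_j|)$ satisfy \eqref{eq:sCM1b}, \eqref{eq:sCM1c}, and \eqref{eq:BThermitian}. The constraint \eqref{eq:sCM2c} then follows instantly from \eqref{eq:sCM1c} by complex conjugation, since $\langle h_j|g_j\rangle = \langle e_j|f_j\rangle = \langle f_j|e_j\rangle^* = 1$. Next I would verify the evolution equations \eqref{eq:sCM2b} by taking the hermitian conjugate of each equation in \eqref{eq:sCM1b}; because $V(z) = V(z^*)^*$ in each of cases I--III (inspection of \eqref{eq:V}), the right-hand sides transform in exactly the required way after relabeling $k \leftrightarrow j$ where needed.

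The key check is that \eqref{eq:BThermitian} implies \eqref{eq:BTb} under the reduction. For this I would take the hermitian conjugate of \eqref{eq:BThermitian}; the scalar coefficient $2\ii$ flips sign, $\langle f_j|e_k\rangle^* = \langle e_k|f_j\rangle$, $\langle f_j|f_k\rangle^* = \langle f_k|f_j\rangle$, and $\alpha(z)^* = \alpha(z^*)$ for each of the three special functions in \eqref{eq:alpha} (immediate for $1/z$, and following from the fact that $\cot$ and $\coth$ are meromorphic functions with real Taylor coefficients). After rewriting the resulting identity in terms of $b_j = a_j^*$, $|g_k\rangle = |f_k\rangle$, $\langle h_k| = \langle e_k|$, one obtains precisely \eqref{eq:BTb} for $j = 1,\ldots,N$. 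I would also spell out that $\dot b_j = \dot a_j^*$ is automatic from the definition, so that the left-hand sides also match.

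With all the hypotheses of Proposition~\ref{prop:BT} now in force, the proposition yields the second-order equations \eqref{eq:sCM1a} (and also \eqref{eq:sCM2a}, but the latter is simply the complex conjugate of \eqref{eq:sCM1a} under the reduction, hence gives no new information). The only step that requires care is the reality/hermiticity check for $\alpha$ and $V$ together with tracking the various daggers on bras and kets; this is routine but is where sign errors are most likely. The argument is uniform in cases I--III precisely because $\alpha(z)^* = \alpha(z^*)$ and $V(z)^* = V(z^*)$ hold case by case.
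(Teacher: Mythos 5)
Your proposal is correct and follows essentially the same route as the paper: the text preceding Corollary~\ref{cor:BT} observes precisely that under the reduction \eqref{eq:reduction} the second system \eqref{eq:sCM2a}--\eqref{eq:sCM2c} and \eqref{eq:BTb} become the hermitian conjugates of \eqref{eq:sCM1a}--\eqref{eq:sCM1c} and \eqref{eq:BTa}, so the corollary is a direct specialization of Proposition~\ref{prop:BT}. Your explicit verification via $\alpha(z)^*=\alpha(z^*)$ and $V(z)^*=V(z^*)$ just spells out the details the paper leaves implicit.
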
 

\subsection{Proof of Proposition~\ref{prop:BT}} 
\label{sec:BT_proof} 
We note that, in the system of equations \eqref{eq:sCM1a}--\eqref{eq:BT}, the sets of variables $\{a_j,|e_j\rangle,\langle f_j|\}_{j=1}^N$ and $\{b_j,\langle h_j|, |g_j\rangle\}_{j=1}^M$ can be swapped by hermitian conjugation and renaming $a_j^*\to a_j$, $b_j^*\to b_j$. Due to this symmetry, it suffices to verify the claim for the first set of variables, i.e., it is enough to show that \eqref{eq:sCM1a} follows from \eqref{eq:sCM1b}, \eqref{eq:sCM2b}, \eqref{eq:BT}, subject to \eqref{eq:sCM1c} and \eqref{eq:sCM2c}.
 
We introduce the shorthand notation 
\begin{equation} 
\label{eq:shorthandBT} 
(a_j,|e_j\rangle,\langle f_j|,r_j) = \begin{cases} (a_j,|e_j\rangle,\langle f_j|,+1) & (j=1,\ldots,N) \\ (b_{j-N},|g_{j-N}\rangle,\langle h_{j-N}|,-1) & (j=N+1,\ldots,\cN) , \end{cases} \quad 
\cN\coloneqq N+M, 
\end{equation}
and
\begin{equation} 
\label{eq:mBj} 
 \mP_j\coloneqq |e_j\rangle\langle f_j| , \quad \mB_j\coloneqq \ii\sum_{k\neq j}^{\cN} r_k \mP_k  \alpha(a_j-a_k),\quad (j=1,\ldots,\cN) 
\end{equation}  
to write \eqref{eq:BT} as  
\begin{equation}
\label{eq:BTshort} 
\begin{split}  
\dot a_j\langle f_j| = &\;  2\langle f_j|\mB_j\quad (j=1,\ldots,N),      \\
\dot a_j|e_j\rangle = &\; 2\mB_j|e_j\rangle \quad (j=N+1,\ldots,\cN).
\end{split}
\end{equation} 

Moreover, this notation allows us to write the two sets of equations \eqref{eq:sCM1a}--\eqref{eq:sCM1c} and \eqref{eq:sCM2a}--\eqref{eq:sCM2c} as one: 
\begin{equation} 
\label{eq:sCM3a} 
\ddot a_j = -2\sum_{k\neq j}^\cN (1+r_jr_k)\langle f_j|e_k\rangle\langle f_k|e_j\rangle V'(a_j-a_k) \quad (j=1,\ldots,\cN) 
\end{equation} 
and 
\begin{equation} 
\begin{split} 
\label{eq:sCM3b} 
|\dot e_j\rangle &= \ii\sum_{k\neq j}^\cN (1+r_jr_k) |e_k\rangle\langle f_k|e_j\rangle V(a_j-a_k),\\
\langle\dot f_j| & = -\ii\sum_{k\neq j}^\cN(1+r_jr_k)\langle f_j|e_k\rangle \langle f_k| V(a_j-a_k), 
\end{split} \quad (j=1,\ldots,\cN) , 
\end{equation} 
together with
\begin{equation} 
\label{eq:sCM3c} 
\langle f_j|e_j\rangle = 1 \quad (j=1,\ldots,\cN).  
\end{equation}  

By differentiating the first set of equations in \eqref{eq:BTshort} with respect to time, we obtain 
\begin{equation} 
\label{eq:ddotajfj1}
\ddot a_j\langle f_j| =   \langle \dot f_j|(2\mB_j-\dot a_j) + 2\langle f_j|\dot \mB_j
\end{equation} 
where, here and below in this section, $j=1,\ldots,N$. 
We compute, using \eqref{eq:sCM3b} and $|e_k\rangle\langle f_k|=\mP_k$ (note that $r_j=+1$), 
\begin{equation} 
\label{eq:dotfj} 
\begin{split} 
\langle \dot f_j|(2\mB_j-\dot a_j) = & -\ii\sum_{k\neq j}^\cN(1+r_k) \langle f_j|e_k\rangle\langle f_k |(2\mB_j-\dot a_j)V(a_j-a_k) \\
=  & -\ii\sum_{k\neq j}^\cN(1+r_k)\langle f_j|( 2\mP_k \mB_j-2\mB_j\mP_k)V(a_j-a_k)  \\
=  & -2\ii\sum_{k\neq j}^\cN(1+r_k)\langle f_j|[\mP_k, \mB_j]V(a_j-a_k), 
\end{split} 
\end{equation} 
inserting \eqref{eq:BTshort} in the second step. Moreover, the definition \eqref{eq:mBj} of $\mB_j$ and the relation $\alpha'(z)=-V(z)$  imply
\begin{equation} 
2\langle f_j|\dot \mB_j = 2\ii\sum_{k\neq j}^\cN  r_k \langle f_j|\dot\mP_k \alpha(a_j-a_k) - 2\ii\sum_{k\neq j}^\cN  r_k \langle f_j|\mP_k(\dot a_j-\dot a_k)V(a_j-a_k), 
\end{equation} 
and by using \eqref{eq:BTshort} we compute
\begin{equation} 
\label{eq:fjPkdotajdotak}
\begin{split} 
\langle f_j|\mP_k(\dot a_j-\dot a_k)  = &\; \langle f_j|e_k\rangle\langle f_k| (\dot a_j-\dot a_k) \\
= & \; \dot a_j\langle f_j|e_k\rangle\langle f_k|  - \frac12(1+r_k)\langle f_j|e_k\rangle\dot a_k \langle f_k| - \frac12(1-r_k)\langle f_j|e_k\rangle \dot a_k\langle f_k|\\
= &\;  2 \langle f_j|\mB_j\mP_k - (1+r_k)\langle f_j|\mP_k\mB_k -(1-r_k)\langle f_j|\mB_k\mP_k\\
= & \; \langle f_j|\big(\{\mP_k,\mB_j-\mB_k\} + r_k[\mP_k,\mB_j-\mB_k] -  (1+r_k)[\mP_k,\mB_j]\big) . 
\end{split} 
\end{equation} 
Inserting the results in \eqref{eq:dotfj}--\eqref{eq:fjPkdotajdotak} into \eqref{eq:ddotajfj1} and using  $r_k^2=1$,  we find that $ \langle \dot f_j|(2\mB_j-\dot a_j)$ is canceled by the part of  $2\langle f_j|\dot \mB_j$ involving the term $-(1+r_k)\langle f_j[\mP_k,\mB_j]$ in \eqref{eq:fjPkdotajdotak}, and we obtain  
\begin{equation} 
\label{eq:ddotajfj2}
\begin{split} 
\ddot a_j\langle f_j| = &\;   2\ii\sum_{k\neq j}^\cN  r_k \langle f_j|\dot\mP_k \alpha(a_j-a_k) \\
& -2\ii \sum_{k\neq j}^\cN \langle f_j|\big( r_k \{\mP_k,\mB_j-\mB_k\} + [\mP_k,\mB_j-\mB_k] \big)V(a_j-a_k) . 
\end{split} 
\end{equation} 
To proceed, we use $\mP_k=|e_k\rangle\langle f_k|$ and \eqref{eq:sCM3b} to compute 
\begin{equation} 
\label{eq:dotPk} 
\begin{split} 
\dot \mP_k = &\; |\dot e_k\rangle\langle f_k| + |e_k\rangle\langle \dot f_k| \\
= &\; \ii\sum_{l\neq k}^{\cN}(1+r_kr_l) ( |e_l\rangle\langle f_l | e_k\rangle\langle f_k| -  |e_k\rangle\langle f_k|e_l\rangle\langle f_l|)V(a_k-a_l)  \\
=  & -\ii\sum_{l\neq k}^{\cN}(1+r_kr_l) [\mP_k,\mP_l]V(a_k-a_l).
\end{split} 
\end{equation} 
For future use, it is convenient to rewrite \eqref{eq:dotPk} as 
\begin{equation}
\label{eq:dotPk2}
\dot \mP_k=- \ii(1+r_k)[\mP_k,\mP_j]V(a_j-a_k)-\ii\sum_{l\neq j,k}^{\cN}(1+r_kr_l) [\mP_k,\mP_l]V(a_k-a_l),
\end{equation}
using $r_j=1$ and that $V(z)$ is even in the first term. 
Next, by the definition of $\mB_j$ \eqref{eq:mBj}, we have, for $k = 1, \dots, \mathcal{N}$ with $k \neq j$,
\begin{equation}
\begin{split}  
\label{eq:mBjmBk} 
(\mB_j-\mB_k)V(a_j-a_k) = &\; \ii \Big( \sum_{l\neq j}^{\cN} r_l \mP_l \alpha(a_j-a_l)-  \sum_{l\neq k}^{\cN} r_l \mP_l \alpha(a_k-a_l)\Bigr)V(a_j-a_k) \\
 = &\; \ii(r_k\mP_k+\mP_j)\alpha(a_j-a_k)V(a_j-a_k) \\ & + \ii\sum_{l\neq j,k}^{\cN} r_l\mP_l V(a_j-a_k)\big( \alpha(a_j-a_l) - \alpha(a_k-a_l) \big), 
\end{split} 
\end{equation} 
using $r_j=1$ and that $\alpha(z)$ is odd to simplify $r_k\mP_k\alpha(a_j-a_k)- r_j\mP_j\alpha(a_k-a_j)=(r_k\mP_k+\mP_j)\alpha(a_j-a_k)$. 
By inserting \eqref{eq:dotPk2} and \eqref{eq:mBjmBk} into \eqref{eq:ddotajfj2} and simplifying, we obtain 
\begin{equation} 
\label{eq:ddotajfj3}
\begin{split} 
\ddot a_j\langle f_j|  = & 2\sum_{k\neq j}^\cN \sum_{l\neq j,k}^{\cN}(r_k+r_l)  \langle f_j| [\mP_k,\mP_l]\alpha(a_j-a_k)V(a_k-a_l) \\
& +2\sum_{k\neq j}^\cN \langle f_j|\big( 2\mP_k+ r_k \{\mP_k,\mP_j\} +(2+r_k) [\mP_k,\mP_j] \big)\alpha(a_j-a_k)V(a_j-a_k) \\ 
& + 2\sum_{k\neq j}^\cN \sum_{l\neq j,k}^{\cN} \langle f_j|\big( r_kr_l \{\mP_k,\mP_l\} +r_l[\mP_k,\mP_l]) 
\big( \alpha(a_j-a_l) - \alpha(a_k-a_l) \big)V(a_j-a_k),   
\end{split} 
\end{equation} 
using $r_k^2=1$, $r_k\{\mP_k,r_k\mP_k+\mP_j\}=2\mP_k^2+r_k\{\mP_k,\mP_j\}$, $[\mP_k,r_k\mP_k+\mP_j]=[\mP_k,\mP_j]$, and 
\begin{equation} 
\label{eq:mPsquare} 
\mP_k^2 = |e_k\rangle\langle f_k|e_k\rangle\langle f_k| =  |e_k\rangle\langle f_k| =  \mP_k
\end{equation}  
by \eqref{eq:sCM3c}. 
Since $\langle f_j|\mP_j = \langle f_j|e_j\rangle\langle f_j|=\langle f_j|$ by \eqref{eq:sCM3c}, we can simplify further: 
\begin{equation}
\label{eq:ddotajfj3A}
 \langle f_j|\big( 2\mP_k+ r_k \{\mP_k,\mP_j\} +(2+r_k) [\mP_k,\mP_j] \big) 
=  2(1+r_k)\langle f_j| \mP_k\mP_j.
\end{equation} 
Moreover, since $V(z)$ is even, 
\begin{multline} \label{eq:ddotajfj3B}
2\sum_{k\neq j}^\cN \sum_{l\neq j,k}^{\cN}(r_k+r_l)  \langle f_j| [\mP_k,\mP_l]\alpha(a_j-a_k)V(a_k-a_l) \\
= 
\sum_{k\neq j}^\cN \sum_{l\neq j,k}^{\cN}(r_k+r_l)  \langle f_j| [\mP_k,\mP_l]\big( \alpha(a_j-a_k)-  \alpha(a_j-a_l)\big)V(a_k-a_l) . 
\end{multline} 
To proceed, we need the identities 
\begin{equation}
\label{eq:IdalphaV}
\alpha(z)V(z)=-\frac12V'(z)
\end{equation}
and  
\begin{equation} 
\label{eq:IdalalV} 
\big( \alpha(a_j-a_l) - \alpha(a_k-a_l) \big)V(a_j-a_k)  
= -\big( \alpha(a_j-a_k) - \alpha(a_j-a_l) \big)V(a_k-a_l).  
\end{equation} 
The first identity \eqref{eq:IdalphaV} can be obtained by differentiating \eqref{eq:IdV} with respect to $z$ while the second identity \eqref{eq:IdalalV} can be obtained by differentiating \eqref{eq:Idmain} with respect to $b$ and setting $a=a_j$, $b=a_k$, and $c=a_l$. Thus, inserting \eqref{eq:ddotajfj3A}, \eqref{eq:ddotajfj3B} with \eqref{eq:IdalalV}, and \eqref{eq:IdalphaV}, \eqref{eq:ddotajfj3} becomes  
\begin{equation} 
\label{eq:ddotajfj4}
\begin{split} 
\ddot a_j\langle f_j|  = & \sum_{k\neq j}^\cN \sum_{l\neq j,k}^{\cN}(r_k-r_l)  \langle f_j| [\mP_k,\mP_l]\big( \alpha(a_j-a_k) - \alpha(a_j-a_l)\bigl)V(a_k-a_l) \\
& -2\sum_{k\neq j}^\cN (1+r_k) \langle f_j| \mP_k\mP_j   V'(a_j-a_k)\\
& - 2\sum_{k\neq j}^\cN \sum_{l\neq j,k}^{\cN} r_kr_l\langle f_j| \{\mP_k,\mP_l\}\big( \alpha(a_j-a_k) - \alpha(a_j-a_l) \big)V(a_k-a_l).
\end{split} 
\end{equation} 
Since $V(z)$ is even, the double sums in the second and third lines in \eqref{eq:ddotajfj4} both vanish by symmetry, and using that $(1+r_k)=2$ for $k=1,\ldots,N$ and $0$ otherwise, 
we get 
\begin{equation} 
 \ddot a_j\langle f_j|  =  - 4\sum_{k\neq j}^N \langle f_j|\mP_k\mP_j V'(a_j-a_k)\quad (j=1,\ldots,N). 
\end{equation} 
By multiplying this from the right with $|e_j\rangle$ and using $\mP_j|e_j\rangle=|e_j\rangle$ and $\langle f_j|\mP_k|e_j\rangle = \langle f_j|e_k\rangle\langle f_k|e_j\rangle$, we obtain \eqref{eq:sCM1a}. 

\section{Multi-soliton solutions of the sBO equation} 
\label{sec:sBO} 
In this section, we present and derive multi-soliton solutions of the sBO equation \eqref{eq:sBO}, both in the real-line case and the  $L$-periodic case. 
The special functions $\alpha(z)$ and $V(z)$ are given in \eqref{eq:alpha} and \eqref{eq:V}, with the real-line case corresponding to I (rational case) and the  $L$-periodic case corresponding to II (trigonometric case), respectively, throughout this section. All our results hold true in both cases. 

\subsection{Result}
\label{sec:sBOresult}
We fix $d\in\Z_{\geq 1}$ and consider the case where $\mU$ is a $d\times d$ matrix-valued function;  
 for $d=1$, \eqref{eq:sBO} reduces to the standard BO equation, and our result below is well-known in this case \cite{chen1979}.  
 
The following theorem, whose proof is given in Section \ref{sec:sBOproof}, is our main result about the sBO equation.

\begin{theorem} 
\label{thm:sBO} 
Let $\{a_j(t),|e_j(t)\rangle,\langle f_j(t)|\}_{j=1}^N$ be a solution of the time evolution equations \eqref{eq:sCM1a}--\eqref{eq:sCM1b} with initial conditions 
\begin{equation} 
a_j(0)=a_{j,0},\quad \dot a_j(0)=v_j,\quad |e_j(0)\rangle =  |e_{j,0}\rangle,\quad \langle f_j(0)|=\langle f_{j,0}|\quad (j=1,\ldots,N)
\end{equation} 
satisfying the constraints 
\begin{equation} 
\label{eq:imajt=0}
\im(a_{j,0})<0 \quad (j=1,\ldots,N), 
\end{equation} 
\begin{equation} 
\label{eq:fjej}
\langle f_{j,0}|e_{j,0}\rangle =1 \quad (j=1,\ldots,N), 
\end{equation} 
and 
\begin{equation}
\label{eq:BTt=0} 
\begin{split} 
v_j \langle f_{j,0}| = &\; 2\ii\sum_{k\neq j}^N  \langle f_{j,0}|e_{k,0}\rangle \langle f_{k,0}|\alpha(a_{j,0}-a_{k,0}) \\
&  -2\ii\sum_{k=1}^N  \langle f_{j,0}|f_{k,0}\rangle\langle e_{k,0}|\alpha(a_{j,0}-a^*_{k,0})\quad (j=1,\ldots,N). 
\end{split} 
\end{equation} 
Then,
\begin{equation} 
\label{eq:ansatz} 
\mU(x,t)= \ii\sum_{j=1}^N |e_j(t)\rangle\langle f_j(t)|\alpha(x-a_j(t)) -  \ii\sum_{j=1}^N |f_j(t)\rangle\langle e_j(t)|\alpha(x-a^*_j(t))  
\end{equation} 
is a solution of the sBO equation \eqref{eq:sBO} for all times $t$ provided that the following condition holds true,
\begin{equation} 
\label{eq:imaj}
\im(a_j(t))<0 \quad (j=1,\ldots,N).
\end{equation} 
\end{theorem}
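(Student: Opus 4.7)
The plan is to substitute the ansatz \eqref{eq:ansatz} into the sBO equation and match residues pole-by-pole in $x$. Since $\im a_j(t)<0$ throughout, I would write $\mU=\mU^++\mU^-$, where $\mU^+$ (respectively $\mU^-$) is the first (respectively second) sum in \eqref{eq:ansatz}; then $\mU^+$ extends holomorphically to the upper half-plane and $\mU^-$ to the lower half-plane, and in both cases I and II the Hilbert transform acts as $H\mU^{+}=\ii\mU^{+}$ and $H\mU^{-}=-\ii\mU^{-}$ (in the periodic case II, modulo an $x$-independent constant that is annihilated by the $\partial_x$'s present in $H\mU_x$ and $H\mU_{xx}$). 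Using the identities $\{A,B\}-[A,B]=2BA$ and $\{A,B\}+[A,B]=2AB$, one checks the chiral simplification $\{\mU,\mU_x\}+\ii[\mU,H\mU_x]=2\mU^+_x\mU+2\mU\mU^-_x$, so the sBO equation takes the form
\begin{equation*}
\mU_t+2\mU^+_x\mU+2\mU\mU^-_x+\ii\mU^+_{xx}-\ii\mU^-_{xx}=0.
\end{equation*}

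The left-hand side, evaluated on \eqref{eq:ansatz}, is a meromorphic (resp.\ cotangent-polynomial) function of $x$ with at most double poles at $\{a_j(t)\}\cup\{a_j(t)^*\}$. By the conjugation symmetry of \eqref{eq:ansatz}, the residues at $a_j^*$ are determined by those at $a_j$, so it suffices to check that the latter vanish. I would reduce the products $\alpha(x-a_j)\alpha(x-a_k)$ and $\alpha(x-a_j)\alpha(x-a_k^*)$ coming from the nonlinearity via the Calogero-type identity \eqref{eq:IdalalV}, together with $V=-\alpha'$ and \eqref{eq:IdalphaV} from Appendix~\ref{app:alphaV}; this collapses all products into linear combinations of $\alpha(x-a_j)$, $V(x-a_j)$, and regular terms. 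Matching the coefficient of $V(x-a_j)$ and contracting with $|e_j\rangle$, the constraint $\langle f_j|e_j\rangle=1$ (equivalently $\mP_j^2=\mP_j$, cf.\ \eqref{eq:mPsquare}) reduces the resulting relation to the B\"acklund equation \eqref{eq:BThermitian} of Corollary~\ref{cor:BT}; matching the coefficient of $\alpha(x-a_j)$ and substituting \eqref{eq:BThermitian} to eliminate $\dot a_j$ then yields exactly the sCM first-order equations \eqref{eq:sCM1b}.

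What remains is to propagate \eqref{eq:BThermitian} in time, since the hypotheses supply \eqref{eq:sCM1b} for all $t$ but guarantee \eqref{eq:BThermitian} only at $t=0$ via \eqref{eq:BTt=0}. I would set $R_j(t)$ equal to the difference of the two sides of \eqref{eq:BThermitian}, so that $R_j(0)=0$, and show that $\dot R_j$ can be written as a linear combination of $\{R_k(t)\}_{k=1}^N$ by differentiating in $t$ and invoking the second-order equations \eqref{eq:sCM1a} together with \eqref{eq:sCM1b}. Uniqueness for this homogeneous linear ODE then forces $R_j\equiv 0$. Logically, this step is the converse of Proposition~\ref{prop:BT}, and the computation can be arranged so as to reuse the $\mP$-algebra and identities already developed in Section~\ref{sec:BT_proof}.

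The main obstacle I anticipate is the pole-matching in the middle paragraph: the mixed products $\alpha(x-a_j)\alpha(x-a_k^*)$ arising from the cross-chirality contributions $\mU^+_x\mU^-$ and $\mU^+\mU^-_x$ combine non-trivially with the pure-chirality terms $\mU^+_x\mU^+$ and $\mU^-\mU^-_x$, and isolating the coefficients of $V(x-a_j)$ and $\alpha(x-a_j)$ requires careful use of \eqref{eq:fjej} together with the identities of Appendix~\ref{app:alphaV}. The cancellations required to produce the clean form of \eqref{eq:sCM1b} and \eqref{eq:BThermitian} are analogous in spirit, but somewhat more elaborate, than the ones already carried out for Proposition~\ref{prop:BT} in Section~\ref{sec:BT_proof}.
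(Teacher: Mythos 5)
Your proposal is correct in outline and, for the core computation, coincides with the paper's proof of Proposition~\ref{prop:sBO1}: both substitute the spin--pole ansatz, use the eigenfunction property of $H$ on the two chiral pieces (the paper's \eqref{eq:Id3}), collapse the products $\alpha(x-a_j)\alpha'(x-a_k)$ via the Appendix~\ref{app:alphaV} identities, and read off the vanishing of the coefficients of $\alpha''(x-a_j)$, $\alpha'(x-a_j)$, $\alpha(x-a_j)$, which give $\mP_j^2=\mP_j$, the B\"acklund relation \eqref{eq:BT1a}, and the $\dot\mP_j$ equation \eqref{eq:BT2a}, respectively. Your chiral repackaging $\{\mU,\mU_x\}+\ii[\mU,H\mU_x]=2\mU_x^+\mU+2\mU\mU_x^-$ is a correct and arguably tidier way to organize the same cancellations (the paper keeps the anticommutator and commutator separate and combines them only at the end). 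The genuine divergence is in how the B\"acklund constraint is propagated from $t=0$: the paper solves the \emph{first-order} system \eqref{eq:sCM1b}, \eqref{eq:BThermitian} with the given initial data, invokes Corollary~\ref{cor:BT} to conclude that this solution also satisfies \eqref{eq:sCM1a}, and then identifies it with the hypothesized solution by uniqueness of the second-order initial value problem; you instead propose a Gronwall-type argument showing that the residual $R_j$ of \eqref{eq:BThermitian} obeys a homogeneous linear ODE. Your route is viable but requires a computation you have not carried out (and note that extracting the scalar $\dot a_j$ from the covector identity $R_j=\dot a_j\langle f_j|-2\langle f_j|\mB_j$ forces an extra contraction with $|e_j\rangle$, so the linearity in the $R_k$ is not quite a verbatim rerun of Section~\ref{sec:BT_proof}); the paper's route gets the same conclusion for free from Proposition~\ref{prop:BT}, which you already cite.

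Two small imprecisions worth fixing: to pass from the matrix relation $\dot a_j\mP_j=\cdots$ to the covector relation \eqref{eq:BThermitian} you should contract on the left with $\langle f_j|$ (using $\langle f_j|\mP_j=\langle f_j|$), not with $|e_j\rangle$; and the coefficient of $\alpha(x-a_j)$ contains $\dot\mP_j$ but no $\dot a_j$, and it yields the equation for $\dot\mP_j$ (which is \emph{implied by} \eqref{eq:sCM1b} via the product rule, as in Lemma~\ref{lem:sBOlemma}) rather than \eqref{eq:sCM1b} itself. Neither affects the direction of implication actually needed for the theorem.
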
 

\begin{remark} 
We expect that the condition in \eqref{eq:imaj} is automatically fulfilled for all times $t\in\R$ under the stated assumptions, and thus can be dropped. 
It would be interesting to prove, or falsify, this expectation. Similar remarks apply to Theorems~\ref{thm:sncILW}, \ref{thm:sBOgen} and \ref{thm:sncILWgen}.
\end{remark} 

\subsubsection{One-soliton solutions}
For $N=1$, the time evolution equations \eqref{eq:sCM1a}--\eqref{eq:sCM1b} simplify to $\ddot a_1=0$, $|\dot e_1\rangle=0$, and $\langle \dot f_1|=0$. Moreover, the general solution of the constraints \eqref{eq:imajt=0}--\eqref{eq:BTt=0} is $|e_{1,0}\rangle = |f_{1,0}\rangle/\langle f_{1,0}|f_{1,0}\rangle$ with $\langle f_{1,0}|\in\cV^*$ an arbitrary non-zero vector, together with
\begin{equation} 
v_1 = -2\ii\alpha(a_{1,0}^{\phantom *}-a_{1,0}^*) = \begin{cases} -1/\aI_1 & \text{(I)}\\ - (2\pi/L)\coth(2\pi\aI_1/L) & \text{(II)}  \end{cases} ,  
\end{equation} 
where $a_{1,0}=\aR_1+\ii\aI_1$ with $\aR_1\in\R$ and $\aI_1<0$. Thus, $a_1(t)=\aR_1+\ii\aI_1+v_1 t$, $|e_1(t)\rangle\langle f_1(t)|=|f_{1,0}\rangle\langle f_{1,0}|/\langle f_{1,0}|f_{1,0}\rangle$ independent of $t$,
and Theorem~\ref{thm:sBO} implies the following explicit formula for the one-soliton solutions of the sBO equation \eqref{eq:sBO}, 
\begin{equation} 
\label{eq:onesoliton} 
\mU(x,t)= \frac{|f_{1,0}\rangle\langle f_{1,0}|}{\langle f_{1,0}|f_{1,0}\rangle}
\big(\ii\alpha(x-\aR_1-\ii\aI_1 - v_1t) - \ii\alpha(x-\aR_1+\ii\aI_1 - v_1t) \big) .
\end{equation} 
Thus, a soliton of the sBO equation is characterized by $d$ complex parameters: the pole $a_{j,0}$ at time $t=0$ (one complex parameter), and a non-zero vector $\langle f_{1,0}|\in\cV^*$ modulo the transformations $\langle f_{1,0}| \to c\langle f_{1,0}|$, $c\in\C\setminus\{0\}$ arbitrary, which leave \eqref{eq:onesoliton} invariant ($d-1$ complex parameters).
 It is interesting to note that only $v_1>0$ is allowed: the sBO equation is chiral in the sense that its solitons can only move to the right (note that the center of the soliton is located at the position $x=\aR_1+v_1t$).
 
\subsubsection{Non-interacting multi-soliton solutions}
Let $N\leq d$, and pick $N$ non-zero orthogonal vectors $\langle f_{j,0}|\in\cV^*$: 
\begin{equation} 
\langle f_{j,0}|f_{k,0}\rangle=\delta_{j,k}\langle f_{j,0}|f_{j,0}\rangle\quad (j,k=1,\ldots,N). 
\end{equation} 
One can check that $a_j(t)=\aR_j+\ii\aI_j+v_j t$ and $|e_j(t)\rangle\langle f_j(t)|=|f_{j,0}\rangle\langle f_{j,0}|/\langle f_{j,0}|f_{j,0}\rangle$
is a solution of the time evolution equations \eqref{eq:sCM1a}--\eqref{eq:sCM1b} satisfying the constraints \eqref{eq:BTt=0}--\eqref{eq:imaj} provided that $\aR_j\in\R$, $\aI_j<0$ and 
 \begin{equation} 
v_j = \begin{cases} -1/\aI_j & \text{(I)}\\ - (2\pi/L)\coth(2\pi\aI_j/L) & \text{(II)}  \end{cases} 
\end{equation} 
for $j=1,\ldots,N$. Thus, by Theorem~\ref{eq:sBO}, 
\begin{equation} 
\mU(x,t)= \sum_{j=1}^N \frac{|f_{j,0}\rangle\langle f_{j,0}|}{\langle f_{j,0}|f_{j,0}\rangle}
\big(\ii\alpha(x-\aR_j-\ii\aI_j - v_jt) - \ii\alpha(x-\aR_j+\ii\aI_j - v_jt) \big) 
\end{equation} 
is an exact solution of the sBO equation \eqref{eq:sBO}. Clearly, this solution describes $N$ non-interacting one-solitons. 

\subsubsection{Generic multi-soliton solutions}
\label{sec:NsolitonssBO}
The initial data for the $N$-soliton in Theorem \ref{thm:sBO} is specified in terms of the $2N$ complex numbers $a_{j,0}$ and $v_j$,
as well as the $2N$ vectors $|e_{j,0}\rangle \in \mathcal{V}$ and $\langle f_{j,0}| \in \mathcal{V}^*$ ($j=1,\ldots,N$). However, the constraints \eqref{eq:fjej} and \eqref{eq:BTt=0} imply that these parameters cannot all be independently specified. Moreover, some choices of these parameters give rise to the same soliton solution $\mathsf{U}$. In what follows, we show that, in the generic case when certain matrices are invertible, the constraints \eqref{eq:fjej} and \eqref{eq:BTt=0} can be solved explicitly in terms of only linear operations. As a result, we will see that if $\mathcal{V}$ is (complex) $d$-dimensional, then the family of $N$-solitons of the sBO equation generically depends on $Nd$ complex parameters. Moreover, we give a recipe for the construction of the $N$-soliton solution in terms of these $Nd$ parameters.

For each $j = 1, \dots, N$, let us identify $|e_{j,0}\rangle \in \mathcal{V}$ with the vector $\mathbf{e}_j \in \C^d$ whose components $(\mathbf{e}_j)_{\mu}$, ${\mu} = 1, \dots, d$, are the components of $|e_{j,0}\rangle$ with respect to some given basis of $\mathcal{V}$.
Next, let us identify the collection of $N$ vectors $\mathbf{e}_j$, $j=1,\ldots,N$, with the single vector $\mathbf{e} \in \C^{Nd}$ whose components $\mathbf{e}_{j,\mu} := (\mathbf{e}_j)_{\mu}$ are indexed by $j=1,\ldots,N$ and ${\mu}=1,\ldots,d$.
Similarly, let us identify the three collections of vectors $\langle e_{j,0}|$, $\langle f_{j,0}|$, and $|f_{j,0}\rangle$, $j=1,\ldots,N$, with the vectors $\ve^*=(e^*_{j,\mu})\in \C^{Nd}$, $\vf^*=(f^*_{j,\mu}) \in \C^{Nd}$, and $\vf=(f_{j,\mu})\in \C^{Nd}$, respectively. (Since all considerations here will concern the constraints at time $t = 0$, we do not include the subscript $0$ in our notation for simplicity.)
With this notation, we can write the constraint \eqref{eq:BTt=0} as 
\begin{equation} 
\label{eq:system}
\mA\ve^*+ \mB\ve  = \mC \vf^*  
\end{equation} 
where the $Nd \times Nd$ matrices $\mA,\mB$, and $\mC$ are given by 
\begin{equation} 
\label{ABdvdef}
\begin{split} 
A_{j,\mu;k,\nu} &= -2\ii  \langle f_j|f_k\rangle\delta_{\mu,\nu}\alpha(a^{\phantom*}_{j,0}-a_{k,0}^*),\\
B_{j,\mu;k,\nu} &= 2\ii (1-\delta_{j,k})f^*_{k,{\mu}}f^*_{j,{\nu}}\alpha(a_{j,0}-a_{k,0}), \\
C_{j,\mu;k,\nu} &= v_j\delta_{j,k}\delta_{\mu,\nu}. 
\end{split} 
\end{equation} 
We write \eqref{eq:system} and the negative of its complex conjugate as the linear system,\footnote{Note that the star in $\mA^*$ etc.\ means complex conjugation, i.e., $\mA^*$ is given by the matrix elements $(A_{j,\mu;k,\nu})^*$ where $A_{j,\mu;k,\nu}$ are the matrix elements of $\mA$.} 
\begin{equation} 
\label{eq:system1}
 \begin{pmatrix} \mA & \mB \\ -\mB^* & -\mA^* \end{pmatrix}\begin{pmatrix}
\ve^*\\ \ve \end{pmatrix} = \begin{pmatrix} \mC\vf^*\\ -\mC^* \vf \end{pmatrix}, 
\end{equation} 
and note that the $2Nd\times 2Nd$ matrix in \eqref{eq:system1} is hermitian. Thus, restricting ourselves to the generic case when this $2Nd\times 2Nd$ matrix is invertible, we obtain
\begin{equation} 
\label{eq:solve} 
\begin{pmatrix}
\ve^*\\ \ve \end{pmatrix} 
= \begin{pmatrix} \mA & \mB \\ -\mB^* & -\mA^* \end{pmatrix}^{-1} 
\begin{pmatrix} \mC\vf^*\\ - \mC^* \vf \end{pmatrix}. 
\end{equation} 
Substitution of these expressions for $\mathbf{e}$ and $\mathbf{e}^*$ into the constraint \eqref{eq:fjej} and its complex conjugate gives $2N$ linear equations which can be solved uniquely for the $2N$ initial velocities $v_j, v_j^* \in \C$, $j = 1, \dots, N$, provided that the relevant determinant is nonzero (this is the generic case).
Given these expressions for $v_j$, the vectors $\mathbf{e}$ and $\mathbf{e}^*$ can be found from (\ref{eq:solve}). This means that all the constraints can be solved in terms of $\{a_{j,0}\}_{j=1}^N$ and $\mathbf{f}$.
So in the Hermitian case considered here, the class of $N$-soliton solutions of the sBO equation is parametrized by the $N$ complex parameters $\{a_{j,0}\}_{j=1}^N$ as well as $Nd$ further complex parameters needed to determine $\mathbf{f}$. However, some of these parameter configurations yield the same $N$-soliton solution $\mathsf{U}(x,t)$ via \eqref{eq:ansatz}. Indeed, the equations \eqref{eq:sCM1a}--\eqref{eq:sCM1b} and \eqref{eq:imajt=0}--\eqref{eq:ansatz} are invariant under the replacements
\begin{equation} 
|e_j\rangle \to c_j |e_j\rangle, \qquad \langle f_j | \to \frac{1}{c_j} \langle f_j |,
\end{equation} 
where $\{c_j\}_{j=1}^N$ are nonzero complex constants. This means that the family of $N$-solitons generically is $Nd$ complex dimensional; in other words, each soliton is specified by $d$ complex parameters, which is consistent with the result for the one-soliton obtained above.

\subsection{Proof of Theorem~\ref{thm:sBO}}
\label{sec:sBOproof}
We start with the spin-pole ansatz 
\begin{equation} 
\label{eq:ansatzh} 
\mU(x,t)=\ii \sum_{j=1}^N \mP_j(t)\alpha(x-a_j(t)) - \ii \sum_{j=1}^N \mP^\dag_j(t)\alpha(x-a^*_j(t))  
\end{equation}  
where $\mP_j=\mP_j(t)$ are $d\times d$ matrices and $a_j=a_j(t)\in\C$; the $\mP_j$ and $a_j$ correspond to the spin degrees of freedom and poles, respectively. 

\begin{proposition} 
\label{prop:sBO1} 
The function $\mU(x,t)$ in \eqref{eq:ansatzh} satisfies the sBO equation \eqref{eq:sBO} provided \eqref{eq:imaj} and the following equations hold true,
\begin{equation} 
\label{eq:BT1a} 
\dot a_{j} \mP_{j} = 2\ii\sum_{k\neq j}^N  \mP_{j}\mP_{k}\alpha(a_{j}-a_{k}) -2\ii\sum_{k=1}^N \mP_{j}\mP^\dag _{k}\alpha(a_{j}-a^*_{k})\quad (j=1,\ldots,N), \\
\end{equation}
\begin{equation} 
\label{eq:BT2a} 
\dot \mP_j  = -2\ii\sum_{k \neq j}^N[\mP_j,\mP_k]V(a_j-a_k) \quad (j=1,\ldots,N),
\end{equation} 
and 
\begin{equation} 
\label{eq:BT3a} 
\mP_j^2=\mP_j  \quad (j=1,\ldots, N).
\end{equation} 
\end{proposition}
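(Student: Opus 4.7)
The plan is to substitute the ansatz \eqref{eq:ansatzh} into \eqref{eq:sBO} and enforce the equation by matching principal parts at each pole $x=a_k(t)$. The key simplification is that the ansatz splits naturally as $\mU=\mU_h+\mU_h^\dag$, where $\mU_h(x,t)\coloneqq \ii\sum_{j=1}^N\mP_j(t)\alpha(x-a_j(t))$ is holomorphic in the upper half-plane by \eqref{eq:imaj}. Using the eigenvalue property $H[\alpha(\cdot-a)] = \ii\,\mathrm{sgn}(-\im a)\,\alpha(\cdot-a)$, a direct consequence of the Plemelj--Sokhotski formulas valid in both cases I and II, one obtains $H\mU = \ii(\mU_h-\mU_h^\dag)$. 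A short algebraic calculation then yields the identity
\begin{equation*}
\{\mU,\mU_x\} + \ii[\mU,H\mU_x] = 2\mU_h'\mU + 2\mU\mU_h^{\dag\prime},
\end{equation*}
so that \eqref{eq:sBO} reduces to the compact form
\begin{equation*}
\mU_t + 2\mU_h'\mU + 2\mU\mU_h^{\dag\prime} + \ii(\mU_h - \mU_h^\dag)_{xx} = 0.
\end{equation*}

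The left-hand side is meromorphic in $x$ with poles only at $\{a_k\}_{k=1}^N$ (from $\mU_h$) and $\{a_k^*\}_{k=1}^N$ (from $\mU_h^\dag$). By the hermiticity of both the ansatz and \eqref{eq:sBO}, it suffices to verify that the principal part vanishes at each $x=a_k$; cancellation at $x=a_k^*$ then follows by hermitian conjugation. Near $x=a_k$, only the term $\ii\mP_k\alpha(x-a_k)$ is singular, while the remainder of $\mU_h$ (from $j\neq k$) and the whole of $\mU_h^\dag$ are analytic there. Using $\alpha'(z)=-V(z)$ together with the power-series data $\alpha(z) = 1/z + c_1 z + O(z^3)$ and $V(z) = 1/z^2 + c_V + O(z^2)$ with $c_V=-c_1$ (from Appendix~\ref{app:alphaV}), I would extract the Laurent coefficients at orders $(x-a_k)^{-3}$, $(x-a_k)^{-2}$, and $(x-a_k)^{-1}$.

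The triple-pole coefficient equals $2(\mP_k^2 - \mP_k)$ and vanishes by \eqref{eq:BT3a}. The double-pole coefficient, after applying \eqref{eq:BT3a}, reduces to $\ii(\dot a_k\mP_k - 2\mP_k\mathcal{C}_k)$, where $\mathcal{C}_k \coloneqq \ii\sum_{j\neq k}\mP_j\alpha(a_k-a_j) - \ii\sum_{j=1}^N\mP_j^\dag\alpha(a_k-a_j^*)$ is the regular value of $\mU$ at $x=a_k$; setting this to zero is precisely \eqref{eq:BT1a}. The simple-pole coefficient, after a cancellation between contributions to $2\mU_h'\mU$ and $2\mU\mU_h^{\dag\prime}$ that eliminates the terms $\mP_k\mP_j^\dag V(a_k-a_j^*)$, together with the identity $c_1+c_V=0$ and one more use of \eqref{eq:BT3a}, becomes $\ii\dot\mP_k + 2\ii\sum_{j\neq k}[\mP_k,\mP_j]V(a_k-a_j)$; setting this to zero gives \eqref{eq:BT2a}.

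The main technical obstacle will be the bookkeeping of matrix orderings in the mixed products. The expansions of $\mU_h'\mU$ and $\mU\mU_h^{\dag\prime}$ produce both $\mP_k\mP_j$ and $\mP_j\mP_k$ orderings (as well as $\mP_k\mP_j^\dag$ versus $\mP_j^\dag\mP_k$) at various pole orders, and only specific combinations regroup into the right-multiplication structure of \eqref{eq:BT1a} and the commutator structure of \eqref{eq:BT2a}. Carefully tracking which contributions originate from the Cauchy singular part of $\mU_h$ at $x=a_k$ versus its holomorphic part (and likewise for $\mU_h^\dag$), and verifying the $c_1+c_V=0$ cancellation in the trigonometric case using Appendix~\ref{app:alphaV}, are the key technical points.
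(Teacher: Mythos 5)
Your overall strategy is genuinely different from the paper's and is viable: you first prove the compact identity $\{\mU,\mU_x\}+\ii[\mU,H\mU_x]=2\mU_h'\mU+2\mU\mU_h^{\dag\prime}$ from the chiral splitting $\mU=\mU_h+\mU_h^\dag$ (this identity is correct and is a nice structural observation the paper does not make explicit), and then you do local Laurent analysis at the poles. The paper instead uses the exact functional identities \eqref{eq:Id1}--\eqref{eq:Id2} to re-expand every product of the ansatz globally in the basis $\{\alpha(x-a_j),\alpha'(x-a_j),\alpha''(x-a_j)\}_j$ and reads off the coefficient equations directly; that route needs no Liouville-type argument and treats cases I and II on exactly the same footing. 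Your pole-order bookkeeping is consistent with a direct computation: the $(x-a_k)^{-3}$ coefficient is $2(\mP_k^2-\mP_k)$, the $(x-a_k)^{-2}$ coefficient is $\ii(\dot a_k\mP_k-2\mP_k\,\mU_{\mathrm{reg}}(a_k))$ with $\mU_{\mathrm{reg}}(a_k)$ the regular value of $\mU$ at $a_k$, and the cancellation of the $\mP_k\mP_j^\dag V(a_k-a_j^*)$ terms between $2\mU_h'\mU$ and $2\mU\mU_h^{\dag\prime}$ at the residue level is real and is what makes \eqref{eq:BT2a} close on the commutators $[\mP_k,\mP_j]$ alone.

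Three points need repair, two of them specific to case II. First, $\alpha(\cdot-a)$ is \emph{not} an eigenfunction of the periodic Hilbert transform \eqref{eq:Hp}: for $\im(a)<0$ one has $H\alpha(\cdot-a)=\ii\,\alpha(\cdot-a)-\pi/L$, so your formula $H\mU=\ii(\mU_h-\mU_h^\dag)$ picks up a constant matrix correction in case II. This is harmless because only $H\mU_x$ and $H\mU_{xx}$ appear in \eqref{eq:sBO} and the constant dies under differentiation --- which is precisely why the paper states the eigenfunction property \eqref{eq:Id3} for $\alpha'$ rather than $\alpha$ --- but your argument must be phrased at the level of $H\mU_x=\ii(\mU_h-\mU_h^\dag)_x$. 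Second, vanishing of all principal parts only shows the left-hand side is pole-free; to conclude it vanishes identically you need a Liouville step. In case I this is immediate since the left-hand side is rational and decays as $x\to\infty$, but in case II it is only a bounded entire $L$-periodic function, hence a priori a nonzero constant; one must check the constant vanishes, e.g.\ by letting $\im(x)\to+\infty$, where the only surviving contribution is $(\pi/L)\sum_j(\dot\mP_j-\dot\mP_j^\dag)$, and this vanishes because the double sum obtained by summing \eqref{eq:BT2a} over $j$ is antisymmetric under $j\leftrightarrow k$. This step is missing from your sketch. Third, a minor slip: the residue at $x=a_k$ is $\ii\dot\mP_k-2\sum_{j\neq k}[\mP_k,\mP_j]V(a_k-a_j)=\ii\bigl(\dot\mP_k+2\ii\sum_{j\neq k}[\mP_k,\mP_j]V(a_k-a_j)\bigr)$; the expression you wrote is off by a factor of $\ii$ in the second term, and as written its vanishing would not reproduce \eqref{eq:BT2a}. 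With these repairs the argument goes through.
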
 

\begin{proof} 
We use the short-hand notation
\begin{equation} 
\label{eq:shorthand} 
(a_j,\mP_j,r_j)\coloneqq \begin{cases} (a_j,\mP_j,+1) & (j=1,\ldots,N) \\ (a^*_{j-N},\mP^\dag_{j-N},-1) & (j=N+1,\ldots,\cN) \end{cases} , \quad \cN=2N
\end{equation} 
to write \eqref{eq:ansatzh} as  
\begin{equation} 
\label{eq:U} 
\mU= \ii\sum_{j=1}^{\cN} r_j \mP_j\alpha(x-a_j) .
\end{equation} 
The proof will follow by inserting \eqref{eq:U} into the sBO equation \eqref{eq:sBO} and performing long but straightforward computations.

We compute each term in \eqref{eq:sBO}. We start with 
\begin{equation} 
\label{eq:Udot}
\mU_t =  \sum_{j=1}^\cN \Big( \ii r_j \dot \mP_j\alpha(x-a_j) - \ii r_j \mP_j \dot a_j \alpha'(x-a_j)\Big)  . 
\end{equation} 
Next, we compute
\begin{align}
\label{eq:UUx1}
\{\mU,\mU_x\} =&\; -\sum_{j=1}^{\cN}\sum_{k=1}^{\cN}  r_j r_k \{\mP_j, \mP_k\} \alpha(x-a_j)\alpha'(x-a_k)            \nonumber  \\
=&\; -2\sum_{j=1}^{\cN}  \mP_j^2\alpha(x-a_j)\alpha'(x-a_j)  -\sum_{j=1}^{\cN}\sum_{k\neq j}^{\cN} r_jr_k \{\mP_j,\mP_k\} \alpha(x-a_j)\alpha'(x-a_k)   \nonumber \\
=&\; \sum_{j=1}^{\cN} \mP_j^2 \alpha''(x-a_j) + \sum_{j=1}^{\cN}\sum_{k\neq j}^{\cN} r_j r_k \{\mP_j,\mP_k\}\alpha(a_j-a_k)\alpha'(x-a_k) \nonumber \\
&\; +\sum_{j=1}^{\cN}\sum_{k\neq j}^{\cN}r_j r_k \{\mP_j,\mP_k\}V(a_j-a_k)\big(\alpha(x-a_j)-\alpha(x-a_k)\big),
\end{align}
where we have used the identities 
\begin{equation} 
\label{eq:Id1} 
2\alpha(x-a_j)\alpha'(x-a_j)=-\alpha''(x-a_j)
\end{equation} 
and\footnote{We write the following identity in a seemingly strange way, mixing $\alpha'$ and $V=-\alpha'$, to emphasize the similarity with a corresponding identity \eqref{eq:Aidentity2} used in the sncILW case.} 
\begin{equation}
\label{eq:Id2}  
\alpha(x-a_j)\alpha'(x-a_k)=- \alpha(a_j-a_k)\alpha'(x-a_k) -V(a_j-a_k)\big(\alpha(x-a_j)-\alpha(x-a_k)\big).
\end{equation} 
The first identity \eqref{eq:Id1} can be obtained by differentiating \eqref{eq:IdV} with respect to $z$ and setting $z=x-a_j$ while the second identity \eqref{eq:Id2} can be obtained by differentiating \eqref{eq:Idmain} with respect to $c$ and setting $a=x$, $b=a_j$, and $c=a_k$.

The final sum in \eqref{eq:UUx1} vanishes because the summand is antisymmetric under the interchange $j\leftrightarrow k$. Hence,  after re-labelling summation indices  $j\leftrightarrow k$ in the second sum in \eqref{eq:UUx1} using \eqref{eq:Id4}, we are left with
\begin{equation}
\label{eq:UUx2}
\{\mU,\mU_x\} = \sum_{j=1}^{\cN} \mP_j^2 \alpha''(x-a_j)-\sum_{j=1}^{\cN}\sum_{k\neq j}^{\cN} r_jr_k \{\mP_j,\mP_k\}\alpha(a_j-a_k)\alpha'(x-a_j).
\end{equation}

To compute terms in \eqref{eq:sBO} involving the Hilbert transform $H$, we use
\begin{equation}
\label{eq:Id3} 
\begin{split}
(H\alpha'(\cdot-a_j))(x) = \ii r_j\alpha'(x-a_j) \\
\end{split} 
\end{equation}
(this follows from the well-known facts (i) $\alpha(x-a)$ is an eigenfunction of $H$ with eigenvalue $+\ii$ for $\im(a)<0$ and  $-\ii$ for $\im(a)>0$ \cite{chen1979} and (ii) $H$ commutes with differentiation \cite[Chapter~4.8]{king2009}). Hence,
\begin{equation}
\label{eq:HUx}
H\mU_x= -\sum_{j=1}^{\cN} \mP_j \alpha'(x-a_j),\qquad H\mU_{xx}= -\sum_{j=1}^{\cN} \mP_j \alpha''(x-a_j),
\end{equation}
where we have again used the fact that $H$ commutes with differentiation to derive the second equation from the first.
From \eqref{eq:U} and the first equation in \eqref{eq:HUx}, we compute
\begin{equation}
\label{eq:UHUx1}
\begin{split} 
\ii [\mU,H\mU_x]=&\; \sum_{j=1}^{\cN}\sum_{k\neq j}^{\cN} r_j [\mP_j,\mP_k] \alpha(x-a_j)\alpha'(x-a_k) \\
=&\; -\sum_{j=1}^{\cN} \sum_{k\neq j}^{\cN} r_j[\mP_j,\mP_k] \alpha(a_j-a_k)\alpha'(x-a_k)\\
 & -\sum_{j=1}^{\cN}\sum_{k\neq j}^{\cN} r_j[\mP_j,\mP_k] V(a_j-a_k)\big(\alpha(x-a_j)-\alpha(x-a_k)\big), 
\end{split} 
\end{equation}
inserting \eqref{eq:Id2} 
in the second step. We can rewrite the second sum as follows, 
\begin{multline}
\sum_{j=1}^{\cN} \sum_{k\neq j}^{\cN} r_j[\mP_j,\mP_k] V(a_j-a_k)\big(\alpha(x-a_j)-\alpha(x-a_k)\big) \\
=  \frac12 \sum_{j=1}^{\cN} \sum_{k\neq j}^{\cN} (r_j+r_k)[\mP_j,\mP_k] V(a_j-a_k)\big(\alpha(x-a_j)-\alpha(x-a_k)\big) \\
=\sum_{j=1}^{\cN} \sum_{k\neq j}^{\cN} (r_j+r_k)[\mP_j,\mP_k] V(a_j-a_k) \alpha(x-a_j)
\end{multline}
since $V(z)$ is an even function.  
Also changing variables $j\leftrightarrow k$ in the first sum in \eqref{eq:UHUx1} using that $\alpha(z)$ is odd, we arrive at
\begin{equation}
\label{eq:UHUx2}
\begin{split} 
\ii [\mU,H\mU_x]=&\; -\sum_{j=1}^{\cN}\sum_{k\neq j}^{\cN} r_k[\mP_j,\mP_k] \alpha(a_j-a_k)\alpha'(x-a_j)
\\ & -\sum_{j=1}^{\cN}\sum_{k\neq j}^{\cN} (r_j+r_k)[\mP_j,\mP_k]V(a_j-a_k)\alpha(x-a_j).
\end{split} 
\end{equation}
Inserting \eqref{eq:Udot}, \eqref{eq:UUx2}, the second equation in \eqref{eq:HUx}, and \eqref{eq:UHUx2} into \eqref{eq:sBO} gives
\begin{equation} 
\begin{split}
0=&\; \sum_{j=1}^{\cN} \Bigg(\ii r_j \dot{\mP}_j- \sum_{k\neq j}^{\cN} (r_j+r_k)[\mP_j,\mP_k]V(a_j-a_k)\Bigg) \alpha(x-a_j) \\
&\; +\sum_{j=1}^{\cN} \Bigg(-\ii r_j \mP_j\dot{a}_j -\sum_{k\neq j}^{\cN} r_k \big( r_j\{\mP_j,\mP_k\}    +[\mP_j,\mP_k]  \big)\alpha(a_j-a_k)\Bigg)\alpha'(x-a_j) \\
&\;  + \sum_{j=1}^{\cN} \big( \mP_j^2 -\mP_j \big) \alpha''(x-a_j).
\end{split}
\end{equation} 
Thus, the function $\mU$ defined in \eqref{eq:U} satisfies the sBO equation \eqref{eq:sBO} if and only if the following conditions are fulfilled,  
\begin{equation} 
\label{eq:BT1} 
\mP_j\dot a_j =  \ii \sum_{k\neq j}^{\cN} r_k \bigl(  \{\mP_j,\mP_k\} + r_j[\mP_j,\mP_k]  \big) \alpha(a_j-a_k), \\
\end{equation} 
\begin{equation} 
\label{eq:BT2} 
\dot \mP_j =  -\ii \sum_{k\neq j}^{\cN}(1+r_jr_k)[\mP_j,\mP_k]V(a_j-a_k) ,\\
\end{equation} 
\begin{equation} 
\label{eq:BT3} 
\mP_j^2 = \mP_j 
\end{equation} 
for $j=1,\ldots,\cN$. Recalling \eqref{eq:shorthand}, one can check that \eqref{eq:BT1}, \eqref{eq:BT2}, and \eqref{eq:BT3} are equivalent to \eqref{eq:BT1a}, \eqref{eq:BT2a} and \eqref{eq:BT3a}, respectively. 
\end{proof} 

We now make the ansatz 
\begin{equation}
\label{eq:mPj} 
\mP_j= |e_j\rangle\langle f_j| \quad (j=1,\ldots,N) 
\end{equation} 
with vectors $|e_j\rangle\in\cV$ and $\langle f_j|\in\cV^*$. 
Then the function $\mU(x,t)$ defined in \eqref{eq:ansatzh} satisfies the equations in Proposition~\ref{prop:sBO1} whenever $\{a_j,|e_j\rangle,\langle f_j|\}_{j=1}^N$ satisfy the equations defining the B\"acklund transformations of the sCM system discussed in Section~\ref{sec:sCM_BT}. The precise statement is as follows.

\begin{lemma}
\label{lem:sBOlemma}
Suppose that $\{a_j,|e_j\rangle,\langle f_j|\}_{j=1}^N$ satisfy the equations \eqref{eq:sCM1b}, \eqref{eq:sCM1c} and \eqref{eq:BThermitian}, and $\mP_j$ is given by \eqref{eq:mPj}. 
Then $\mU$ \eqref{eq:ansatzh} satisfies \eqref{eq:BT1a}--\eqref{eq:BT3a}.  
\end{lemma}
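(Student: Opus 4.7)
The plan is to verify the three conditions \eqref{eq:BT1a}, \eqref{eq:BT2a}, and \eqref{eq:BT3a} in turn, each by a direct computation using the factorized form $\mP_j=|e_j\rangle\langle f_j|$ and the hypotheses \eqref{eq:sCM1b}, \eqref{eq:sCM1c}, and \eqref{eq:BThermitian}. None of the three steps requires a clever manipulation; the work is essentially bookkeeping.

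First I would dispose of the idempotency \eqref{eq:BT3a}. Writing $\mP_j^2 = |e_j\rangle\langle f_j|e_j\rangle\langle f_j| = \langle f_j|e_j\rangle\,|e_j\rangle\langle f_j|$ and invoking the constraint \eqref{eq:sCM1c} gives $\mP_j^2=\mP_j$ immediately. Next I would verify \eqref{eq:BT2a} by differentiating $\mP_j=|e_j\rangle\langle f_j|$ with respect to $t$ to get $\dot\mP_j=|\dot e_j\rangle\langle f_j|+|e_j\rangle\langle\dot f_j|$, substituting the evolution equations \eqref{eq:sCM1b}, and recognizing the two resulting sums as $2\ii\sum_{k\neq j}\mP_k\mP_j V(a_j-a_k)$ and $-2\ii\sum_{k\neq j}\mP_j\mP_k V(a_j-a_k)$, respectively; combining these produces the commutator on the right-hand side of \eqref{eq:BT2a}.

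Finally, for \eqref{eq:BT1a} I would start from \eqref{eq:BThermitian} and left-multiply the entire equation by $|e_j\rangle$. On the left side this yields $\dot a_j\,\mP_j$. On the right side the first sum becomes $2\ii\sum_{k\neq j}|e_j\rangle\langle f_j|e_k\rangle\langle f_k|\alpha(a_j-a_k)=2\ii\sum_{k\neq j}\mP_j\mP_k\alpha(a_j-a_k)$, while the second sum becomes $-2\ii\sum_{k=1}^N|e_j\rangle\langle f_j|f_k\rangle\langle e_k|\alpha(a_j-a_k^*)=-2\ii\sum_{k=1}^N\mP_j\mP_k^\dag\alpha(a_j-a_k^*)$, where I use $\mP_k^\dag=|f_k\rangle\langle e_k|$. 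This is exactly \eqref{eq:BT1a}.

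There is no real obstacle here: the lemma is a routine translation of the ``spin'' B\"acklund-type equations for $\{a_j,|e_j\rangle,\langle f_j|\}$ (Corollary \ref{cor:BT}) into the equivalent equations for the rank-one projectors $\mP_j$. The only point one should double-check is the bookkeeping of Hermitian conjugates in the identification $\mP_k^\dag=|f_k\rangle\langle e_k|$ used in the last step, to make sure the $\alpha(a_j-a_k^*)$ factors line up correctly between \eqref{eq:BThermitian} and \eqref{eq:BT1a}.
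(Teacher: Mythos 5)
Your proposal is correct and follows essentially the same route as the paper's proof: left-multiplying \eqref{eq:BThermitian} by $|e_j\rangle$ for \eqref{eq:BT1a}, differentiating $\mP_j=|e_j\rangle\langle f_j|$ and substituting \eqref{eq:sCM1b} for \eqref{eq:BT2a}, and using $\langle f_j|e_j\rangle=1$ for \eqref{eq:BT3a}. The only difference is the order in which the three conditions are checked, and your identification $\mP_k^\dag=|f_k\rangle\langle e_k|$ matches the paper's convention exactly.
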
 

\begin{proof} 
By multiplying \eqref{eq:BThermitian} from the left by $|e_j\rangle$, one gets 
\begin{equation*}
\dot a_j |e_j\rangle\langle f_j| = 2\ii\sum_{k\neq j}^N  |e_j\rangle\langle f_j|e_k\rangle \langle f_k|\alpha(a_j-a_k) -2\ii\sum_{k=1}^M  |e_j\rangle\langle f_j|f_k\rangle\langle e_k|\alpha(a_j-a_k^*)
\end{equation*} 
which is \eqref{eq:BT1a} (note that $\mP_j^\dag = |f_j\rangle\langle e_j|$). 

Using \eqref{eq:sCM1b}, we compute 
\begin{equation} 
\begin{split} 
\dot\mP_j = & |e_j\rangle\langle \dot f_j|+|\dot e_j\rangle\langle f_j|  \\
=  & -2\ii\sum_{k\neq j}^N \big(  |e_j\rangle\langle f_j|e_k\rangle \langle f_k| -  |e_k\rangle\langle f_k|e_j\rangle\langle f_j|  \big) 
V(a_j-a_k) \\
= &  -2\ii\sum_{k\neq j}^N \big( \mP_j\mP_k-\mP_k\mP_j \big) V(a_j-a_k) , 
\end{split} 
\end{equation}    
which is \eqref{eq:BT2a}.

Finally, using \eqref{eq:sCM1c}, 
\begin{equation} 
\mP_j^2 = |e_j\rangle\langle f_j|e_j\rangle\langle f_j| =  |e_j\rangle\langle f_j|=\mP_j, 
\end{equation}  
which is \eqref{eq:BT3a}. 
\end{proof} 

Then, the theorem is implied by Proposition~\ref{prop:BT}.

\section{Multi-soliton solutions of the sncILW equation}
\label{sec:sncILW}
In this section, we present results for the sncILW equation on the real line in analogy with those for the sBO equation in the previous section. 
We use the hyperbolic case (III) special functions $\alpha(z)=(\pi/2\delta)\coth(\pi z/2\delta)$ and 
$V(z)=(\pi/2\delta)^2/\sinh^2(\pi z/2\delta)$ with $\delta>0$; see \eqref{eq:alpha} and \eqref{eq:V}.

\subsection{Result}
We use the same conventions as described in Section~\ref{sec:sBOresult}. In the case $d=1$, \eqref{eq:sncILW} reduces to the standard ncILW equation, whose soliton solutions were constructed in \cite{berntson2020}. Our main result about the sncILW equation, stated below, generalizes the construction in \cite{berntson2020}.

\begin{theorem} 
\label{thm:sncILW} 
Let $\{a_j(t),|e_j(t)\rangle,\langle f_j(t)|\}_{j=1}^N$ be a solution of the time evolution equations \eqref{eq:sCM1a}--\eqref{eq:sCM1b} with initial conditions 
\begin{equation} 
a_j(0)=a_{j,0},\quad \dot a_j(0)=v_j,\quad |e_j(0)\rangle =  |e_{j,0}\rangle,\quad \langle f_j(0)|=\langle f_{j,0}|\quad (j=1,\ldots,N)
\end{equation} 
satisfying the constraints 
\begin{equation} 
\label{eq:imajtsncILW=0}
-\frac{3\delta}{2}<\im(a_{j,0})< -\frac{\delta}{2}   \quad (j=1,\ldots,N), 
\end{equation} 
\begin{equation} 
\label{eq:fjejsncILW}
\langle f_{j,0}|e_{j,0}\rangle =1 \quad (j=1,\ldots,N), 
\end{equation} 
and 
\begin{equation}
\label{eq:BTt=0_sncILW} 
\begin{split} 
v_j \langle f_{j,0}| = & 2\ii\sum_{k\neq j}^N  \langle f_{j,0}|e_{k,0}\rangle \langle f_{k,0}|\alpha(a_{j,0}-a_{k,0}) \\
&  -2\ii\sum_{k=1}^N  \langle f_{j,0}|f_{k,0}\rangle\langle e_{k,0}|\alpha(a_{j,0}-a^*_{k,0}+\ii\delta)\quad (j=1,\ldots,N). 
\end{split} 
\end{equation} 
Then,
\begin{equation} 
\label{eq:mUmV} 
\begin{split}
\mU(x,t)=&\;  \ii\sum_{j=1}^N |e_j(t)\rangle\langle f_j(t)|\alpha(x-a_j(t) -\ii\delta/2) -  \ii\sum_{j=1}^N |f_j(t)\rangle\langle e_j(t)|\alpha(x-a^*_j(t)+\ii\delta/2),  \\
\mV(x,t)=&\;  -\ii\sum_{j=1}^N |e_j(t)\rangle\langle f_j(t)|\alpha(x-a_j(t)+\ii\delta/2) +  \ii\sum_{j=1}^N |f_j(t)\rangle\langle e_j(t)|\alpha(x-a^*_j(t)-\ii\delta/2) 
\end{split}
\end{equation} 
is a solution of the sncILW equation \eqref{eq:sncILW} for all times $t$ provided that the following condition holds true,
\begin{equation} 
\label{eq:imaj_sncILW}
-\frac{3\delta}{2}<\im( a_j(t)) <-\frac{\delta}{2} \quad (j=1,\ldots,N).
\end{equation} 
\end{theorem}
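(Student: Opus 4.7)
The plan is to follow the three-step strategy used in the proof of Theorem~\ref{thm:sBO}. Setting $\mP_j \coloneqq |e_j\rangle\langle f_j|$, the first step is an analogue of Proposition~\ref{prop:sBO1}: I would show that the ansatz \eqref{eq:mUmV} solves the sncILW system \eqref{eq:sncILW} whenever $\mP_j^2 = \mP_j$, the evolution equation \eqref{eq:BT2a} holds, and the B\"acklund-type relation
\begin{equation*}
\dot a_j \mP_j = 2\ii\sum_{k\neq j}^N \mP_j\mP_k\,\alpha(a_j - a_k) \;-\; 2\ii\sum_{k=1}^N \mP_j\mP_k^\dag\,\alpha(a_j - a_k^* + \ii\delta) \qquad (j = 1, \ldots, N)
\end{equation*}
is satisfied. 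The only change from the sBO case \eqref{eq:BT1a} is the shift by $\ii\delta$ in the second sum, which reflects the replacement of $H$ by $\tT$. The second step is the analogue of Lemma~\ref{lem:sBOlemma}: factorising $\mP_j = |e_j\rangle\langle f_j|$ and using \eqref{eq:sCM1b}, \eqref{eq:fjejsncILW}, and \eqref{eq:BTt=0_sncILW}, the three matrix-level equations above follow from a short computation identical to that in Lemma~\ref{lem:sBOlemma}. The third step is to invoke Proposition~\ref{prop:BT} in case III with $M=N$ and the identification $b_j = a_j^* - \ii\delta$, $|g_j\rangle = |f_j\rangle$, $\langle h_j| = \langle e_j|$; a short check shows that under this identification the $b_j$-system \eqref{eq:sCM2a}--\eqref{eq:sCM2c} and the B\"acklund equation \eqref{eq:BTb} are exactly the hermitian conjugates of the $a_j$-system and of \eqref{eq:BTt=0_sncILW} respectively, so the constraints \eqref{eq:fjejsncILW} and \eqref{eq:BTt=0_sncILW} are preserved along the flow of \eqref{eq:sCM1a}--\eqref{eq:sCM1b} for as long as \eqref{eq:imaj_sncILW} holds.

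The main content is step one. I would adopt a shorthand analogous to \eqref{eq:shorthand} in which the four summations in \eqref{eq:mUmV} are labelled by a single index carrying signs $(r_j, s_j)$ distinguishing hermitian conjugates and distinguishing the fields $\mU$ and $\mV$. The strip condition \eqref{eq:imaj_sncILW} places the four families of poles in the four disjoint strips $\im z \in (-2\delta,-\delta),\,(-\delta,0),\,(0,\delta),\,(\delta,2\delta)$, which pins down unambiguously the $\pm\ii$ eigenvalue signs in the action of $T$ and $\tT$ on $\alpha(x-c)$. Consequently $T\mU_{xx}$, $\tT\mV_{xx}$, $\ii[\mU, T\mU_x]$, and $\ii[\mU, \tT\mV_x]$ all produce sums supported on the same pole set as $\mU_t$ and $\{\mU,\mU_x\}$, and repeated application of the $\alpha$--$V$ identities collected in Appendix~\ref{app:alphaV} (in particular the hyperbolic analogue of \eqref{eq:Id2} referred to in the footnote preceding \eqref{eq:Id1}) reduces every product of the form $\alpha(x-c_j)\alpha'(x-c_k)$ to a linear combination of $\alpha^{(n)}(x-c_j)$ multiplied by the structure constants $\alpha(a_j-a_k)$, $\alpha(a_j-a_k^*+\ii\delta)$, and $V(a_j-a_k)$. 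Matching coefficients of $\alpha''(x-c_j)$ yields the idempotencies $\mP_j^2=\mP_j$; matching coefficients of $\alpha'(x-c_j)$ yields the shifted B\"acklund relation; matching coefficients of $\alpha(x-c_j)$ yields the $\mP_j$-evolution equation \eqref{eq:BT2a}.

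The main obstacle I anticipate is the bookkeeping in step one: the doubling of pole families together with the four operator contributions on each line of \eqref{eq:sncILW} produces many more cross-terms than in the sBO case, and the required cancellations depend on the $(r_j, s_j)$ signs interacting correctly with the strip-dependent eigenvalue signs of $T$ and $\tT$. A secondary subtlety, absent in the sBO case, is that the $\tT\mV_x$ commutator generates $\alpha$-functions whose arguments are shifted by $\ii\delta$ relative to those appearing in $T\mU_x$; verifying that they combine to yield exactly $\alpha(a_j - a_k^* + \ii\delta)$ in the B\"acklund relation with the correct coefficient and sign, and that the $\alpha''$- and $\alpha$-coefficients nevertheless collapse independently of this shift, is the point where the algebra is genuinely non-mechanical and where the precise form of the identities in Appendix~\ref{app:alphaV} becomes essential.
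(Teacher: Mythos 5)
Your proposal is correct and follows essentially the same route as the paper: the same three-step structure (a Proposition giving the matrix-level conditions $\mP_j^2=\mP_j$, the $\ii\delta$-shifted B\"acklund relation, and the $\mP_j$-evolution; a Lemma factorising $\mP_j=|e_j\rangle\langle f_j|$; and an appeal to Proposition~\ref{prop:BT} via a hermitian reduction shifted by $\ii\delta$ using the $2\ii\delta$-periodicity of $\alpha$), with exactly the conditions you state appearing as Proposition~\ref{prop:sncILW1gen} in Appendix~\ref{app:sncILWsolutions}. The only difference is presentational: the paper packages the four pole families into the vector-valued functions $\cA_\pm$ acted on by $\cT$ with the $\circ$ product, so that the bookkeeping you anticipate in step one collapses to a computation formally identical to the sBO case via the identities \eqref{eq:Aidentity1}--\eqref{eq:Aidentity3}.
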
 

\subsection{Proof of Theorem~\ref{thm:sncILW}}
Since details of the proof are very similar to the ones of Theorem~\ref{thm:sBO}, we only explain the key differences. 

The proof is facilitated by introducing the notation
\begin{equation}
\label{eq:circ} 
\left(\begin{array}{c} F_1 \\ F_2  \end{array}\right)\circ \left(\begin{array}{c} G_1 \\ G_2 \end{array}\right)\coloneqq  \left(\begin{array}{c}F_1G_1 \\ -F_2G_2 \end{array}\right)
\end{equation}
for $\C$-valued functions $F_j$, $G_j$ ($j=1,2$), and the operator
\begin{equation}
\label{eq:cT} 
\cT\coloneqq \left(\begin{array}{cc} T & \tilde{T} \\ -\tilde{T} & -T     \end{array}\right)
\end{equation}
to be interpreted as linear operator acting on vector-valued functions, see \cite{berntson2020}. 
In the present paper, we use the product $\circ$ defined in  \eqref{eq:circ} 
also for vectors $\cF$, $\cG$ whose components $F_j$, $G_j$ are complex $d\times d$ matrices, and we let
\begin{equation}
\{\cF\ocomma \cG\}\coloneqq \cF\circ\cG+\cG\circ \cF,\qquad [\cF\ocomma \cG]\coloneqq \cF\circ\cG- \cG\circ \cF 
\end{equation}
be the corresponding generalizations of the commutator and anticommutator, respectively. 

Using this notation, the system \eqref{eq:sncILW} can be written in terms of the vector
\begin{equation}
\label{eq:cU} 
\cU(x,t)\coloneqq \left(\begin{array}{c}  \mU(x,t) \\ \mV(x,t) \end{array}\right)
\end{equation}
as
\begin{equation}
\label{eq:sncILW2}
\cU_t+ \{\cU\ocomma \cU_x\} +\mathcal{T}\cU_{xx}+\ii [\cU\ocomma \mathcal{T}\cU_x]=0.
\end{equation}
To solve \eqref{eq:sncILW2}, we use the spin-pole ansatz
\begin{equation}
\label{eq:ansatzh_sncILW}
\cU(x,t)=\ii \sum_{j=1}^{N} \mP_j(t) \cA_+(x-a_j(t))-\ii\sum_{j=1}^N \mP_j^{\dag}(t)\cA_-(x-a_j^*(t)),
\end{equation}
where $\mP_j=\mP_j(t)$ are $d\times d$ matrices, $a_j=a_j(t)\in\C$, and
\begin{equation}
\label{eq:cA}
\mathcal{A}_{\pm}(z)\coloneqq \left(\begin{array}{c} +\alpha(z \mp \ii\delta/2) \\ -\alpha(z\pm\ii\delta/2) \end{array}\right)
\end{equation}
for all $z\in\C$ such that $\delta/2<|\im(z)|<3\delta/2$. 
Using the shorthand notation \eqref{eq:shorthand}, we can write \eqref{eq:ansatzh_sncILW} as
\begin{equation}
\label{eq:ansatz_sncILW}
\cU=\ii\sum_{j=1}^{\cN} r_j \mP_j \cA_{r_j}(x-a_j).
\end{equation}
With this notation in place, one can prove Theorem~\ref{thm:sncILW} by the very same computations we presented in our proof of Theorem~\ref{thm:sBO} in Section~\ref{sec:sBOproof}. The reason for this is that the vector-valued function $\cA(z)$ defined in \eqref{eq:cA} satisfies the following identities which are natural generalizations of the identities \eqref{eq:Id1}, \eqref{eq:Id2} and \eqref{eq:Id3}, respectively ($j,k=1,\ldots,\cN$ and $x\in\R$):  
\begin{equation}
\label{eq:Aidentity1}
2\cA_{r_j}(x-a_j)\circ \cA_{r_j}'(x-a_j)=- \cA_{r_j}''(x-a_j) 
\end{equation}
(this is implied by \eqref{eq:Id1} and definitions),
\begin{align}
\label{eq:Aidentity2}
\cA_{r_j}(x-a_j) & \circ \cA_{r_k}'(x-a_k)= -\alpha(a_j-a_k+\ii(r_j-r_k)\delta/2) \cA_{r_k}'(x-a_k) \nonumber \\
&\; -V(a_j-a_k+\ii(r_j-r_k)\delta/2)\big(\cA_{r_j}(x-a_j)-\cA_{r_k}(x-a_k)\big) 
\end{align}
(this is implied by \eqref{eq:Id2} and definitions, together with $\alpha'(z)=-V(z)$ and the fact that $\alpha(z)$ and $V(z)$ both are $2\ii\delta$-periodic), 
and
\begin{equation} 
\label{eq:Aidentity3}
(\cT\cA_{r_j}'(\cdot -a_j))(x)=\ii r_j\cA_{r_j}'(x-a_j) 
\end{equation} 
(this follows from the result, proved in \cite{berntson2020}, that for the vector-valued functions $\cA_\pm(z)$ in \eqref{eq:cA}, $\cA_+'(x-a)$  is an eigenfunction of the matrix operator $\cT$ with eigenvalue $+\ii$ provided that $-3\delta/2 < \im(a) < -\delta/2$, and $\cA_-'(x-a)$ is an eigenfunction of $\cT$ with eigenvalue $-\ii$ provided that $\delta/2 < \im(a) < 3\delta/2$; see \cite[Appendix A.b]{berntson2020}).

The interested reader can find full details on how to prove Theorem~\ref{thm:sncILW} in Appendix~\ref{app:sncILWsolutions}. There, we prove a more general result for solutions that are not necessarily hermitian. The hermitian case of Theorem~\ref{thm:sncILW} can be established by specializing to \eqref{eq:ansatzh} at all points in the proof. 

\section{Further results} 
\label{sec:results}
We briefly summarize further basic results about the sBO and sncILW equations. 

\subsection{Hamiltonian structure}
\label{sec:Hamiltonian}
The sBO equation \eqref{eq:sBO} can be obtained as a Hamiltonian equation $\mU_t=\{\cH_{\rm sBO},\mU\}_{\mathrm{P.B.}}$ from the Hamiltonian ($\mU$ below is short for $\mU(x)$)
\begin{equation} 
\cH_{\rm sBO} = \int \tr\left( \frac13 \mU^3 + \frac12 \mU H\mU_x \right)\,\dd{x} 
\end{equation} 
and the Poisson brackets\footnote{The subscript $\mathrm{P.B.}$ is used to avoid confusion with the anti-commutator.} 
\begin{equation} 
\{ U_{{\mu},{\nu}}(x), U_{{\mu}',{\nu}'}(x')\}_{\mathrm{P.B.}} = \ii\delta(x-x')\big(\delta_{{\nu},{\mu}'}U_{{\mu},{\nu}'}(x)-\delta_{{\mu},{\nu}'}U_{{\mu}',{\nu}}(x)\big) + \delta'(x-x')\delta_{{\nu},{\mu}'}\delta_{{\mu},{\nu}'}  
\end{equation} 
where $U_{{\mu},{\nu}}(x)$ are the matrix elements of $\mU(x)$ and ${\mu},{\nu},{\mu}',{\nu}'=1,\ldots,d$; 
the integration is over $\R$ and $[-L/2,L/2]$ in the real-line and periodic cases, respectively, and $\tr$ is the usual matrix trace (sum of diagonal elements). 

Similarly, the sncILW equation \eqref{eq:sncILW} arises from the Hamiltonian 
\begin{equation} 
\cH_{\rm sncILW} = \int \tr\left( \frac13 \mU^3 +\frac13\mV^3 + \frac12 \mU T\mU_x +  \frac12 \mV T\mV_x + \frac12 \mV \tT\mU_x + \frac12 \mU \tT\mV_x \right)\,\dd{x} 
\end{equation} 
and the Poisson brackets 
\begin{equation} 
\begin{split} 
\{ U_{{\mu},{\nu}}(x), U_{{\mu}',{\nu}'}(x')\}_{\mathrm{P.B.}} &= \ii\delta(x-x')\big(\delta_{{\nu},{\mu}'}U_{{\mu},{\nu}'}(x)-\delta_{{\mu},{\nu}'}U_{{\mu}',{\nu}}(x)\big) + \delta'(x-x')\delta_{{\nu},{\mu}'}\delta_{{\mu},{\nu}'}  ,\\
\{ V_{{\mu},{\nu}}(x), V_{{\mu}',{\nu}'}(x')\}_{\mathrm{P.B.}} &= \ii\delta(x-x')\big(\delta_{{\nu},{\mu}'}V_{{\mu},{\nu}'}(x)-\delta_{{\mu},{\nu}'}V_{{\mu}',{\nu}}(x)\big) - \delta'(x-x')\delta_{{\nu},{\mu}'}\delta_{{\mu},{\nu}'},\\
\{ U_{{\mu},{\nu}}(x), V_{{\mu}',{\nu}'}(x')\}_{\mathrm{P.B.}} &= 0
\end{split} 
\end{equation} 
in the real-line and periodic cases. 

(The verifications of these results are by straightforward computations which we omit.) 

\subsection{Local limits of the sILW equation}
\label{sec:locallimit}
We give details on how the matrix KdV equation \eqref{eq:mKdV} and the (generalization of the) HF equation \eqref{eq:sHF} are obtained as limits $\delta\downarrow 0$ from the sILW equation \eqref{eq:sILW}; note that, while the latter equation is non-local, the former equations both are local.

For simplicity, we only consider the real-line case with $T$ given by \eqref{eq:TT}. 
We recall that
\begin{equation} 
\label{eq:Texpand}
(Tf_x)(x) = -\frac1\delta f(x)+\frac{\delta}3 f_{xx}(x) + O(\delta^3)  
\end{equation} 
as $\delta\downarrow 0$ uniformly for $x\in\R$ \cite[Appendix~A]{scoufis2005}. 
We scale 
\begin{equation} 
\label{eq:scaling1} 
\mU(x,t)\to \frac{\delta}{3}\mU\left(x,\frac{\delta}{3}t\right) 
\end{equation} 
in \eqref{eq:sILW} (and change variables accordingly) to obtain
\begin{equation} 
\mU_t + \{\mU,\mU_x\} +\frac{3}{\delta^2}\mU_x + \frac{3}{\delta}T\mU_{xx} + \ii[\mU,T\mU_x]=0. 
\end{equation} 
Expanding this in powers of $\delta$ using \eqref{eq:Texpand} yields 
\begin{equation} 
\mU_t+\{\mU,\mU_x\}+\mU_{xxx} +\frac{\delta}3\ii[\mU,\mU_x] + O(\delta^2)=0, 
\end{equation} 
which converges to the matrix KdV equation \eqref{eq:mKdV} in the limit $\delta\downarrow 0$. 

Similarly, by scaling 
\begin{equation} 
\label{eq:scaling2} 
\mU(x,t)\to \frac1\delta \mU\left(x,\frac{t}{3} \right), 
\end{equation} 
\eqref{eq:sILW} is transformed to
\begin{equation} 
\mU_t +\frac{3}{\delta}\{\mU,\mU_x\} + \frac{3}{\delta}\mU_x + 3T\mU_{xx}+\frac{3}{\delta}\ii[\mU,T\mU_x]=0. 
\end{equation} 
By inserting this expansion in \eqref{eq:Texpand}, we obtain
 \begin{equation} 
 \mU_t + \frac{3}{\delta}\{\mU,\mU_x\}+ \delta \mU_{xxx}+\ii[\mU,\mU_{xx}]+O(\delta^2)=0, 
 \end{equation}  
yielding the  generalized HF equation \eqref{eq:sHF} in the limit $\delta\downarrow 0$ provided $\{\mU,\mU_x\}=0$; the latter is achieved by imposing the condition $\mU^2=I$ (note that this condition is preserved under the time evolution \eqref{eq:sHF}).

\subsection{Spin generalization of the bidirectional Benjamin-Ono equation}
\label{sec:2sBO} 
The bidirectional BO equation was introduced by Abanov, Bettelheim, and Wiegmann \cite{abanov2009} as a powerful tool to derive a hydrodynamic description of the $A$-type  trigonometric CM system (to mention just one of several interesting applications). In this section, we present a spin generalization of this equation.

Let $\alpha(z)$ \eqref{eq:alpha} and $V(z)$ \eqref{eq:V} in II (trigonometric case).  
We assume that the variables $a_j\in\C$, $|e_j\rangle\in\cV$, $\langle f_j|\in\cV^*$ for $j=1,\ldots,N$ and $b_j\in\C$, $|g_j\rangle\in\cV$, $\langle h_j|\in\cV^*$ for $j=1,\ldots,N$ satisfying the first order equations  \eqref{eq:sCM1b}, \eqref{eq:sCM2b} and \eqref{eq:BT}, together with the constraints \eqref{eq:sCM1c} and  \eqref{eq:sCM2c}; note that these are exactly the equations defining the B\"acklund transformation of the $A$-type sCM system in the trigonometric case (see Proposition~\ref{prop:BT}). Using that, we define the functions
\begin{equation} 
\mU_0(z,t)\coloneqq  -\ii\sum_{j=1}^M |g_j(t)\rangle\langle h_j(t)|\alpha(z-b_j(t)), \quad \mU_1(z,t)\coloneqq  \ii\sum_{j=1}^N |e_j(t)\rangle\langle f_j(t)|\alpha(z-a_j(t)) 
\end{equation} 
of the position variable $z\in\C$ and time $t\in\R$, and 
\begin{equation} 
\mU\coloneqq \mU_0+\mU_1,\quad \tilde{\mU}\coloneqq \mU_0-\mU_1. 
\end{equation} 
One then can verify, by arguments similar to the ones we use to prove Propositions~\ref{prop:sBO1gen}, 
that the following equation is satisfied, 
\begin{equation} 
\label{eq:bsBO}
\mU_t + \{\mU,\mU_z\} +\ii\tilde{\mU}_{zz}-[\mU,\tilde{\mU}_z]=0. 
\end{equation} 
This is the spin generalization of the bidirectional BO equation. It would be interesting to generalize the arguments in  \cite{abanov2009} to derive a hydrodynamic description of the sCM systems from \eqref{eq:bsBO}. 
 
\appendix 
\section{Functional identities}
\label{app:alphaV}
Let $\alpha(z)$ and $V(z)$ be as in \eqref{eq:alpha} and \eqref{eq:V}, respectively. 
These functions obey various well-known functional identities which we use in the derivations of our results. 
For the convenience of the reader, we collect these identities here. 

First, the functions $\alpha(z)$ and $V(z)$ are  odd and even, respectively: 
\begin{equation} 
\label{eq:Id4} 
\alpha(-z)=-\alpha(z),\quad V(-z)=V(z)\quad (z\in\C).
\end{equation} 
Second, 
\begin{equation} 
\label{eq:IdV} 
V(z)=-\alpha'(z)= \alpha(z)^2+C ,
\end{equation} 
with different constants $C$ in the three cases I--III: 
\begin{equation} 
\label{eq:C} 
C\coloneqq \begin{cases} 0 & \text{(I)}\\ 
(\pi/L)^2 &  \text{(II)}\\
-(\pi/2\delta)^2 &  \text{(III)}.
\end{cases}
\end{equation} 
Third,
\begin{equation} 
\label{eq:Idmain} 
\alpha(a-b)\alpha(b-c) + \alpha(b-c)\alpha(c-a) + \alpha(c-a)\alpha(a-b) = C \quad (a,b,c \in \mathbb{C}),
\end{equation} 
with $C$ as in \eqref{eq:C}.

Finally, in case III, $\alpha(z)$ and $V(z)$ both are are both $2\ii\delta$-periodic: 
\begin{equation} \label{eq:Idperiodic}
\alpha(z+2\ii\delta)=\alpha(z),\quad V(z+2\ii\delta)=V(z)\quad (z\in\C)\quad \text{(III)}.
\end{equation} 

\section{Non-hermitian solutions of the sBO equation}
\label{app:sBOsolutions}
In the main text, we presented hermitian solutions of the sBO equation which, as we believe, are the most interesting from a physics point of view. 
We now present more general non-hermitian solutions which are interesting from a mathematics point of view. 
It is easy to adapt the arguments in the main text to prove this more general result.

\subsection{Result}
It is easy to check that the following simplifies to Theorem~\ref{thm:sBO} under the reduction \eqref{eq:reduction}; the latter is equivalent to $\mU=\mU^\dag$.

\begin{theorem} 
\label{thm:sBOgen} 
Let $\{a_j(t),|e_j(t)\rangle,\langle f_j(t)|\}_{j=1}^N$ and $\{b_j(t),|g_j(t)\rangle,\langle h_j(t)|\}_{j=1}^M$ be a solution of the time evolution equations \eqref{eq:sCM1a}--\eqref{eq:sCM1b}
and  \eqref{eq:sCM2a}--\eqref{eq:sCM2b}, respectively, with initial conditions 
\begin{equation} 
\begin{split} 
a_j(0)=a_{j,0},\quad \dot a_j(0)=v_j,\quad |e_j(0)\rangle =  |e_{j,0}\rangle,\quad \langle f_j(0)|=\langle f_{j,0}|\quad &(j=1,\ldots,N),\\
b_j(0)=b_{j,0},\quad \dot b_j(0)=w_j,\quad |g_j(0)\rangle =  |g_{j,0}\rangle,\quad \langle h_j(0)|=\langle h_{j,0}|\quad &(j=1,\ldots,M), 
\end{split} 
\end{equation} 
satisfying the constraints 
\begin{equation} 
\im(a_{j,0})<0 \quad (j=1,\ldots,N), \quad \im(b_{j,0})>0 \quad (j=1,\ldots,M), 
\end{equation} 
\begin{equation} 
\langle f_{j,0}|e_{j,0}\rangle =1 \quad (j=1,\ldots,N),\quad  \langle g_{j,0}|h_{j,0}\rangle =1 \quad (j=1,\ldots,M),
\end{equation} 
and 
\begin{subequations} 
\begin{align} 
\label{eq:BTat=0} 
\begin{split} 
\dot a_{j,0} \langle f_{j,0}| = & 2\ii\sum_{k\neq j}^N  \langle f_{j,0}|e_{k,0}\rangle \langle f_{k,0}|\alpha(a_{j,0}-a_{k,0}) \\
&-2\ii\sum_{k=1}^M  \langle f_{j,0}|g_{k,0}\rangle\langle h_{k,0}|\alpha(a_{j,0}-b_{k,0})
\end{split} \quad (j=1,\ldots,N), \\
\label{eq:BTbt=0} 
\begin{split} 
\dot b_{j,0} |g_{j,0}\rangle =& -2\ii\sum_{k\neq j}^M |g_{k,0}\rangle\langle h_{k,0}|g_{j,0}\rangle\alpha(b_{j,0}-b_{k,0})\\
& +2\ii\sum_{k=1}^N|e_{k,0}\rangle\langle f_{k,0}|g_{j,0}\rangle\alpha(b_{j,0}-a_{k,0})
\end{split} 
\quad (j=1,\ldots,M).   
\end{align} 
\end{subequations} 
Then 
\begin{equation} 
\label{eq:ansatzgen} 
\mU(x,t)= \ii\sum_{j=1}^N |e_j(t)\rangle\langle f_j(t)|\alpha(x-a_j(t)) -  \ii\sum_{j=1}^M |g_j(t)\rangle\langle h_j(t)|\alpha(x-b_j(t)) 
\end{equation} 
is a solution of the sBO equation \eqref{eq:sBO} for all times $t$ provided that the following condition holds true,
\begin{equation}
\label{eq:imajimbj1} 
\im(a_j(t))<0 \quad (j=1,\ldots,N),\quad \im(b_j(t))>0 \quad (j=1,\ldots,M). 
\end{equation} 
\end{theorem}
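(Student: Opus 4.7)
The plan is to mimic the three-step argument used for Theorem~\ref{thm:sBO}: first reduce the sBO equation to a collection of conditions on the spin matrices $\mP_j$ and poles, then decompose $\mP_j=|e_j\rangle\langle f_j|$ to recover the sCM plus B\"acklund system, and finally invoke Proposition~\ref{prop:BT} to show that the B\"acklund constraint is preserved under the flow. The main observation is that the computations of Section~\ref{sec:sBOproof} are already written in a shorthand that treats the two halves of the ansatz as independent; hermiticity was used at the very end of the proof of Theorem~\ref{thm:sBO} only to identify the second half with the hermitian conjugate of the first. Dropping this identification yields the non-hermitian statement with essentially no additional work.

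Concretely, I would first introduce a non-hermitian analogue of \eqref{eq:shorthand} by setting $\cN=N+M$ and
\[
(a_j,\mP_j,r_j)\coloneqq \begin{cases} (a_j,|e_j\rangle\langle f_j|,+1) & (j=1,\ldots,N),\\ (b_{j-N},|g_{j-N}\rangle\langle h_{j-N}|,-1) & (j=N+1,\ldots,\cN),\end{cases}
\]
so that \eqref{eq:ansatzgen} takes the unified form $\mU=\ii\sum_{j=1}^{\cN}r_j\mP_j\alpha(x-a_j)$ as in \eqref{eq:U}. I would then state and prove a non-hermitian analogue of Proposition~\ref{prop:sBO1}: under \eqref{eq:imajimbj1}, the ansatz solves \eqref{eq:sBO} iff the shorthand equations \eqref{eq:BT1}, \eqref{eq:BT2}, \eqref{eq:BT3} hold. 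The proof is verbatim the one in Section~\ref{sec:sBOproof}; the functional identities \eqref{eq:Id1}, \eqref{eq:Id2} and the Hilbert transform eigenvalue identity \eqref{eq:Id3} do not rely on any relationship between the $r_j=+1$ and $r_j=-1$ data, and the eigenvalue $\ii r_j$ in \eqref{eq:Id3} only requires $\im(a_j)<0$ when $r_j=+1$ and $\im(a_j)>0$ when $r_j=-1$, which is exactly \eqref{eq:imajimbj1}.

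Substituting $\mP_j=|e_j\rangle\langle f_j|$ in this shorthand, exactly as in Lemma~\ref{lem:sBOlemma}, converts \eqref{eq:BT2} into the two sets of first-order spin evolution equations \eqref{eq:sCM1b} and \eqref{eq:sCM2b}; the idempotency \eqref{eq:BT3} becomes the constraints \eqref{eq:sCM1c} and \eqref{eq:sCM2c} via \eqref{eq:mPsquare}; and \eqref{eq:BT1} becomes the two B\"acklund equations \eqref{eq:BTa} and \eqref{eq:BTb} (the latter after multiplying from the right by $|g_{j-N}\rangle$, instead of from the left by $|e_j\rangle$, to handle the $r_j=-1$ indices). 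To close the argument, I would invoke Proposition~\ref{prop:BT}: it shows that the first-order B\"acklund equations, together with the first-order spin evolution, are consistent with the second-order position equations \eqref{eq:sCM1a} and \eqref{eq:sCM2a} assumed in the hypothesis. Hence the initial B\"acklund constraints \eqref{eq:BTat=0}-\eqref{eq:BTbt=0} are preserved under the sCM flow for all times, and the reduction above delivers a solution of \eqref{eq:sBO}.

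The main obstacle is essentially bookkeeping: one must carefully translate between the unified $\cN$-particle shorthand and the two separate sets of sCM data, and verify that the signs introduced by $r_j$ combine correctly with the minus sign in front of the second sum in \eqref{eq:ansatzgen} and with the asymmetric appearance of $|g_j\rangle$ versus $\langle f_j|$ in \eqref{eq:BTa}--\eqref{eq:BTb}. Conceptually no new difficulty arises compared with Theorem~\ref{thm:sBO}, because the calculation in Section~\ref{sec:sBOproof} is already formulated in a framework that does not distinguish between hermitian and non-hermitian data.
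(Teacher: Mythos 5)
Your proposal follows essentially the same route as the paper's own proof in Appendix~B: the same unified shorthand with $\cN=N+M$ and signs $r_j$, a non-hermitian analogue of Proposition~\ref{prop:sBO1} proved by repeating the computation of Section~\ref{sec:sBOproof} verbatim, the rank-one decomposition $\mP_j=|e_j\rangle\langle f_j|$, $\mQ_j=|g_j\rangle\langle h_j|$ recovering \eqref{eq:sCM1b}, \eqref{eq:sCM2b}, \eqref{eq:BT}, and the final appeal to Proposition~\ref{prop:BT}. The only quibble is cosmetic: to pass between the matrix equation for $\mQ_j$ and the ket equation \eqref{eq:BTb} one multiplies by $\langle h_j|$ (respectively $|g_j\rangle$) on the right depending on the direction, but this does not affect the correctness of the argument.
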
 

\subsection{Proof of Theorem~\ref{thm:sBOgen}}
We start with the spin-pole ansatz
\begin{equation}
\label{eq:sBOansatzgen}
\mU(x,t)=\ii\sum_{j=1}^N \mP_j(t)\alpha(x-a_j(t))-\ii\sum_{j=1}^M \mQ_j(t)\alpha(x-b_j(t)),
\end{equation}
where $\mP_j=\mP_j(t)$ and $\mQ_j=\mQ_j(t)$ are $d\times d$ matrices and $a_j=a_j(t)\in\C$ and $b_j=b_j(t)\in\C$ satisfy \eqref{eq:imajimbj1}. 
With the notation
\begin{equation}
\label{eq:shorthandgen}
(a_j,\mP_j,r_j)=\begin{cases}
(a_j,\mP_j,+1) & j=1,\ldots,N \\
(b_{j-N},\mQ_{j-N},-1) & j=N+1,\ldots,\cN
\end{cases}, \quad \cN=N+M,
\end{equation}
in generalization of \eqref{eq:shorthand}, \eqref{eq:sBOansatzgen} may be written in the form \eqref{eq:U}. Inserting \eqref{eq:U} with \eqref{eq:shorthandgen} into \eqref{eq:sBO}, the proof of Proposition~\ref{prop:sBO1} may then be followed line-by-line to give the following.

\begin{proposition} 
\label{prop:sBO1gen} 
The function $\mU(x,t)$ in \eqref{eq:sBOansatzgen} satisfies the sBO equation \eqref{eq:sBO} provided \eqref{eq:imajimbj1} and the following equations hold true,
\begin{subequations} 
\label{eq:BT2gen}
\begin{align} 
\label{eq:BT2at} 
\dot a_{j} \mP_{j} = +2\ii\sum_{k\neq j}^N \mP_{j} \mP_{k}\alpha(a_{j}-a_{k}) -2\ii\sum_{k=1}^M \mP_{j}\mQ_{k}\alpha(a_{j}-b_{k})\quad (j=1,\ldots,N), \\
\label{eq:BT2bt} 
\dot b_{j} \mQ_{j} = -2\ii \sum_{k\neq j}^M \mQ_{k}\mQ_{j}\alpha(b_{j}-b_{k})+2\ii\sum_{k=1}^N \mP_{k}\mQ_{j}\alpha(b_{j}-a_{k})\quad (j=1,\ldots,M), 
\end{align} 
\end{subequations}
\begin{subequations} 
\begin{align} 
\dot \mP_j & = -2\ii\sum_{k \neq j}^N[\mP_j,\mP_k]V(a_j-a_k) \quad (j=1,\ldots,N),\\
\dot \mQ_j & = -2\ii\sum_{k \neq j}^N[\mQ_j,\mQ_k]V(b_j-b_k) \quad (j=1,\ldots,M),
\end{align} 
\end{subequations} 
and 
\begin{equation} 
\label{eq:PiQk} 
\mP_j^2=\mP_j  \quad (j=1,\ldots, N),\quad \mQ_j^2=\mQ_j \quad  (j=1,\ldots, M).
\end{equation} 
\end{proposition}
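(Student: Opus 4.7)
The plan is to follow the proof of Proposition~\ref{prop:sBO1} essentially verbatim, exploiting the fact that the unified shorthand~\eqref{eq:shorthandgen} plays precisely the same structural role as~\eqref{eq:shorthand}. With the new shorthand, the ansatz~\eqref{eq:sBOansatzgen} reads
\begin{equation*}
\mU = \ii \sum_{j=1}^{\cN} r_j \mP_j \alpha(x - a_j),
\end{equation*}
which is formally identical to \eqref{eq:U}. This structural identity is the key observation: every manipulation in Section~\ref{sec:sBOproof} is algebraic in $\mP_j$, $r_j$, and $a_j$, and none of them invokes any hermiticity relation of the form $\mP_{j+N} = \mP_j^\dag$ or $a_{j+N} = a_j^*$ that was specific to the original shorthand.

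First I would compute $\mU_t$, $\{\mU, \mU_x\}$, $H\mU_{xx}$ and $\ii[\mU, H\mU_x]$ exactly as in \eqref{eq:Udot}--\eqref{eq:UHUx2}. The derivations of \eqref{eq:UUx2} and \eqref{eq:UHUx2} use only that $\alpha(z)$ is odd, $V(z)$ is even, the functional identities \eqref{eq:Id1}--\eqref{eq:Id2}, and the formal structure of \eqref{eq:U}; thus they carry over without change.

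The one step that requires separate justification is the Hilbert-transform eigenvalue relation \eqref{eq:Id3}, which uses that $\alpha(x-a)$ is an eigenfunction of $H$ with eigenvalue $+\ii$ when $\im(a)<0$ and $-\ii$ when $\im(a)>0$. Under \eqref{eq:shorthandgen}, the index $r_j = +1$ corresponds to $a_j$ with $\im(a_j)<0$ while $r_j=-1$ corresponds to $b_{j-N}$ with $\im(b_{j-N})>0$, so the condition \eqref{eq:imajimbj1} is precisely what guarantees the sign pairing required for \eqref{eq:Id3} to hold with the same eigenvalue $\ii r_j$ as before. This is the only place where \eqref{eq:imajimbj1} is used, and it is the sole point in the argument that is not purely symbolic.

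Assembling these ingredients and equating coefficients of the linearly independent functions $\alpha(x-a_j)$, $\alpha'(x-a_j)$, and $\alpha''(x-a_j)$ produces, for $j=1,\ldots,\cN$, the same three conditions \eqref{eq:BT1}--\eqref{eq:BT3}. Unpacking these via \eqref{eq:shorthandgen}, and using that $1+r_jr_k$ equals $2$ when $r_j=r_k$ and $0$ otherwise, gives precisely \eqref{eq:BT2gen} together with the stated evolution equations for $\mP_j$, $\mQ_j$ and the idempotency conditions \eqref{eq:PiQk}. Since no new algebraic identities are needed, the work reduces to a careful book-keeping of cases $r_j=\pm 1$ and $r_k=\pm 1$ in the sums; the verification of the Hilbert-transform eigenvalue under \eqref{eq:imajimbj1} is the main (and only) conceptually non-routine step.
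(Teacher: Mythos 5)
Your proposal is correct and coincides with the paper's own argument: the paper likewise observes that under the shorthand \eqref{eq:shorthandgen} the ansatz \eqref{eq:sBOansatzgen} takes the form \eqref{eq:U}, so that the proof of Proposition~\ref{prop:sBO1} can be followed line by line, with \eqref{eq:imajimbj1} supplying exactly the sign condition needed for the Hilbert-transform eigenvalue relation \eqref{eq:Id3}. Your identification of that relation as the sole non-symbolic step, and the final case bookkeeping via $1+r_jr_k$, match the paper's reasoning.
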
 

We now make the ansatz 
\begin{equation}
\label{eq:mPmQansatz}
\mP_j= |e_j\rangle\langle f_j| \quad (j=1,\ldots,N),\qquad \mQ_j=|g_j\rangle\langle h_j| \quad (j=1,\ldots,M),
\end{equation}
with vectors $|e_j\rangle,|g_j\rangle\in\cV$ and $\langle f_j|,\langle h_j|\in\cV^*$. Similarly as in the special case \eqref{eq:ansatzh} treated in Theorem~\ref{thm:sBO}, $\mU(x,t)$ \eqref{eq:sBOansatzgen} satisfies the equations in Proposition~\eqref{prop:sBO1gen} if $\{a_j,|e_j\rangle,\langle f_j|\}_{j=1}^N$ and $\{b_j,|g_j\rangle,\langle h_j|\}_{j=1}^M$ satisfy the equations defining the B\"{a}cklund transformation of the sCM system discussed in Section~\ref{sec:sCM_BT}.

\begin{lemma}
Suppose that $\{a_j,|e_j\rangle,\langle f_j|\}_{j=1}^N$ and $\{b_j,|g_j\rangle,\langle h_j|\}_{j=1}^N$ satisfy the equations \eqref{eq:sCM1b}, \eqref{eq:sCM1c}, \eqref{eq:sCM2b}, \eqref{eq:sCM2c}, and \eqref{eq:BT} and $\mP_j$ and $\mQ_j$ are given by \eqref{eq:mPmQansatz}. Then, the function $\mU$ defined in \eqref{eq:sBOansatzgen} satisfies \eqref{eq:BT2gen}--\eqref{eq:PiQk}. 
\end{lemma}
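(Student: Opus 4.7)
The plan is to mimic the proof of Lemma~\ref{lem:sBOlemma} in the hermitian case almost verbatim, promoting the single identity \eqref{eq:BThermitian} to the pair \eqref{eq:BTa}--\eqref{eq:BTb}. The structural observation is that \eqref{eq:BT2gen} splits into two sets of relations that are treated in a slightly asymmetric way: the $\mP_j$-equations \eqref{eq:BT2at} are obtained by left-multiplying \eqref{eq:BTa} by $|e_j\rangle$, while the $\mQ_j$-equations \eqref{eq:BT2bt} are obtained by right-multiplying \eqref{eq:BTb} by $\langle h_j|$. Using $\mP_k=|e_k\rangle\langle f_k|$ and $\mQ_k=|g_k\rangle\langle h_k|$, these two manipulations immediately convert \eqref{eq:BT} into \eqref{eq:BT2gen}.

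For the evolution equations of $\mP_j$ and $\mQ_j$, the plan is to differentiate the defining relations in \eqref{eq:mPmQansatz} using the product rule and substitute \eqref{eq:sCM1b} and \eqref{eq:sCM2b}. Writing
\begin{equation*}
\dot \mP_j = |\dot e_j\rangle\langle f_j| + |e_j\rangle\langle \dot f_j|
\end{equation*}
and plugging in the two lines of \eqref{eq:sCM1b} yields the two terms $|e_k\rangle\langle f_k|e_j\rangle\langle f_j|=\mP_k\mP_j$ and $|e_j\rangle\langle f_j|e_k\rangle\langle f_k|=\mP_j\mP_k$ with opposite signs, assembling into the commutator $-2\ii\sum_{k\neq j}^N[\mP_j,\mP_k]V(a_j-a_k)$; the analogous computation for $\dot\mQ_j$ is identical, with $\{e,f\}$ replaced by $\{g,h\}$ and $a_j$ by $b_j$. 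The idempotency relations \eqref{eq:PiQk} follow from a single-line calculation, e.g.\ $\mP_j^2=|e_j\rangle\langle f_j|e_j\rangle\langle f_j|=|e_j\rangle\langle f_j|=\mP_j$ by \eqref{eq:sCM1c}, and symmetrically for $\mQ_j$ using \eqref{eq:sCM2c}.

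There is no real obstacle here: each of the required identities is a one- or two-line manipulation that uses only the rank-one structure of $\mP_j,\mQ_j$ and the constraints $\langle f_j|e_j\rangle=\langle h_j|g_j\rangle=1$. The only point to keep in mind is the handedness convention, namely multiplying \eqref{eq:BTa} from the left versus \eqref{eq:BTb} from the right; this is dictated by where the free vector $\langle f_j|$ or $|g_j\rangle$ sits in each B\"acklund equation and is precisely what produces the asymmetric placement of $\mP_j,\mQ_j$ in the products $\mP_j\mP_k$, $\mP_j\mQ_k$ in \eqref{eq:BT2at} and $\mQ_k\mQ_j$, $\mP_k\mQ_j$ in \eqref{eq:BT2bt}.
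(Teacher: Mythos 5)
Your proposal is correct and matches the paper's own proof essentially verbatim: left-multiplying \eqref{eq:BTa} by $|e_j\rangle$ and right-multiplying \eqref{eq:BTb} by $\langle h_j|$ to get \eqref{eq:BT2gen}, then differentiating the rank-one projectors with \eqref{eq:sCM1b}, \eqref{eq:sCM2b} and invoking the normalization constraints for idempotency. The paper is simply terser, deferring the latter steps to the proof of Lemma~\ref{lem:sBOlemma}.
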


\begin{proof}
By multiplying \eqref{eq:BTa} from the left by $|e_j\rangle$, one gets
\begin{subequations}
\label{eq:BT2genalt}
\begin{equation}
\dot{a}_j|e_j\rangle\langle f_j| =2\ii \sum_{k\neq j}^N |e_j\rangle \langle f_j | e_k \rangle \langle f_k| \alpha(a_j-a_k)-2\ii \sum_{k=1}^M |e_j\rangle \langle f_j | g_k \rangle \langle h_k| \alpha(a_j-b_k) \quad (j=1,\ldots,N). 
\end{equation} 
and, by multiplying \eqref{eq:BTb} from the right by $\langle h_j|$, one gets
\begin{equation}
\dot{b}_j|g_j\rangle\langle h_j| =-2\ii \sum_{k\neq j}^M | g_k \rangle\langle h_k | g_j \rangle \langle h_j|  \alpha(b_j-b_k)+2\ii \sum_{k=1}^N |e_k \rangle\langle f_k | g_j \rangle \langle h_j| \alpha(b_j-a_k) \quad (j=1,\ldots,M).
\end{equation} 
\end{subequations}
Then, \eqref{eq:BT2genalt} matches \eqref{eq:BT2gen}.
The remainder of the proof of the Lemma is similar to that of Lemma~\ref{lem:sBOlemma} and hence omitted. 
\end{proof}

Then, the theorem is implied by Proposition~\ref{prop:BT}.

\section{Non-hermitian solutions of the sncILW equation}
\label{app:sncILWsolutions}
In this appendix, we present non-hermitian solutions of the sncILW equation analogous to those of the sBO equation presented in Appendix~\ref{app:sBOsolutions}, together with a self-contained proof. 

\subsection{Result}
It is easy to check that the following simplifies to Theorem~\ref{thm:sncILW} under the reduction \eqref{eq:reduction}; the latter is equivalent to $\mU=\mU^\dag$. In the main text, we only specified the key ingredients in the proof of Theorem~\ref{thm:sncILW}. Here, we prove the following more general result using the notation and methods discussed in Section~\ref{sec:sncILW}. The proof structure parallels that of Theorem~\ref{thm:sBO} (and of Theorem~\ref{thm:sBOgen}, see Appendix~\ref{app:sBOsolutions}) presented in Section~\ref{sec:sBOproof}.
\begin{theorem} 
\label{thm:sncILWgen} 
Let $\{a_j(t),|e_j(t)\rangle,\langle f_j(t)|\}_{j=1}^N$ and $\{b_j(t),|g_j(t)\rangle,\langle h_j(t)|\}_{j=1}^M$ be a solution of the time evolution equations \eqref{eq:sCM1a}--\eqref{eq:sCM1b}
and  \eqref{eq:sCM2a}--\eqref{eq:sCM2b}, respectively, with initial conditions 
\begin{equation} 
\begin{split} 
a_j(0)=a_{j,0},\quad \dot a_j(0)=v_j,\quad |e_j(0)\rangle =  |e_{j,0}\rangle,\quad \langle f_j(0)|=\langle f_{j,0}|\quad &(j=1,\ldots,N),\\
b_j(0)=b_{j,0},\quad \dot b_j(0)=w_j,\quad |g_j(0)\rangle =  |g_{j,0}\rangle,\quad \langle h_j(0)|=\langle h_{j,0}|\quad &(j=1,\ldots,M), 
\end{split} 
\end{equation} 
satisfying the constraints 
\begin{equation} 
-\frac{3\delta}{2}<\im(a_{j,0})<-\frac{\delta}{2} \quad (j=1,\ldots,N), \quad \frac{\delta}{2}<\im(b_{j,0})<\frac{3\delta}{2} \quad (j=1,\ldots,M), 
\end{equation} 
\begin{equation} 
\langle f_{j,0}|e_{j,0}\rangle =1 \quad (j=1,\ldots,N),\quad  \langle g_{j,0}|h_{j,0}\rangle =1 \quad (j=1,\ldots,M),
\end{equation} 
and 
\begin{subequations} 
\begin{align} 
\label{eq:BTat=02_sncILW} 
\begin{split} 
\dot a_{j,0} \langle f_{j,0}| = & 2\ii\sum_{k\neq j}^N  \langle f_{j,0}|e_{k,0}\rangle \langle f_{k,0}|\alpha(a_{j,0}-a_{k,0}) \\
&-2\ii\sum_{k=1}^M  \langle f_{j,0}|g_{k,0}\rangle\langle h_{k,0}|\alpha(a_{j,0}-b_{k,0}+\ii\delta), 
\end{split} \quad (j=1,\ldots,N), \\
\label{eq:BTbt=02_sncILW} 
\begin{split} 
\dot b_{j,0} |g_{j,0}\rangle =& -2\ii\sum_{k\neq j}^M |g_{k,0}\rangle\langle h_{k,0}|g_{j,0}\rangle\alpha(b_{j,0}-b_{k,0})\\
& +2\ii\sum_{k=1}^N|e_{k,0}\rangle\langle f_{k,0}|g_{j,0}\rangle\alpha(b_{j,0}-a_{k,0}+\ii\delta),    
\end{split} 
\quad (j=1,\ldots,M).   
\end{align} 
\end{subequations} 
Then,
\begin{equation} 
\label{eq:ansatz2_ncILW} 
\begin{split}
\mU(x,t)=&\;  \ii\sum_{j=1}^N |e_j(t)\rangle\langle f_j(t)|\alpha(x-a_j(t)+\ii\delta/2) -  \ii\sum_{j=1}^M |g_j(t)\rangle\langle h_j(t)|\alpha(x-b_j(t)-\ii\delta/2),  \\
\mV(x,t)=&\; - \ii\sum_{j=1}^N |e_j(t)\rangle\langle f_j(t)|\alpha(x-a_j(t)-\ii\delta/2) +  \ii\sum_{j=1}^M |g_j(t)\rangle\langle h_j(t)|\alpha(x-b_j(t)+\ii\delta/2) 
\end{split}
\end{equation} 
is a solution of the sncILW equation \eqref{eq:sncILW} for all times $t$ provided that the following condition holds true,
\begin{equation}
\label{eq:imajimbj_sncILW}
-\frac{3\delta}{2}<\im(a_j(t))<-\frac{\delta}{2} \quad (j=1,\ldots,N),\quad \frac{\delta}{2}<\im(b_j(t))<\frac{3\delta}{2} \quad (j=1,\ldots,M). 
\end{equation} 
\end{theorem}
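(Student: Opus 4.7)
The plan is to mirror the two-step structure used for Theorem~\ref{thm:sBOgen}: first, reduce the sncILW equation to a constrained pole-spin system by substituting the ansatz into \eqref{eq:sncILW2}; second, recognize the resulting system as a Bäcklund-transformed pair of sCM models and invoke Proposition~\ref{prop:BT} to close. The vectorial notation from Section~\ref{sec:sncILW} and the three identities \eqref{eq:Aidentity1}--\eqref{eq:Aidentity3} will play the role of the identities \eqref{eq:Id1}--\eqref{eq:Id3} in the sBO proof.

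First, I would combine the two components of \eqref{eq:ansatz2_ncILW} into the single vector
\begin{equation*}
\cU(x,t) = \ii \sum_{j=1}^N \mP_j(t)\,\cA_+(x-a_j(t)) - \ii \sum_{j=1}^M \mQ_j(t)\,\cA_-(x-b_j(t)),
\end{equation*}
with $\mP_j = |e_j\rangle\langle f_j|$ and $\mQ_j = |g_j\rangle\langle h_j|$, and further compress these two sums using the shorthand $(a_j,\mP_j,r_j)$ of \eqref{eq:shorthandgen} into $\cU = \ii \sum_{j=1}^{\cN} r_j \mP_j\,\cA_{r_j}(x-a_j)$. Substituting this into \eqref{eq:sncILW2} and expanding $\cU_t$, $\{\cU \ocomma \cU_x\}$, $\cT\cU_{xx}$, and $\ii[\cU \ocomma \cT\cU_x]$ term-by-term, the algebra is line-for-line parallel to Section~\ref{sec:sBOproof}. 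The only new feature is that the scalar prefactors arising from \eqref{eq:Aidentity2} are $\alpha(a_j - a_k + \ii(r_j - r_k)\delta/2)$ and $V(a_j - a_k + \ii(r_j - r_k)\delta/2)$, which, by the $2\ii\delta$-periodicity \eqref{eq:Idperiodic} and the parity \eqref{eq:Id4}, reduce to $\alpha(a_j - a_k)$, $V(a_j - a_k)$ when $r_j = r_k$ and to $\alpha(a_j - a_k + \ii\delta)$, $V(a_j - a_k + \ii\delta)$ when $r_j \neq r_k$.

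Next, I would collect coefficients of the linearly independent vector-valued families $\{\cA_{r_j}(x-a_j)\}_j$, $\{\cA_{r_j}'(x-a_j)\}_j$, $\{\cA_{r_j}''(x-a_j)\}_j$, which produces the sncILW analogue of Proposition~\ref{prop:sBO1gen}: idempotency conditions $\mP_j^2 = \mP_j$ and $\mQ_j^2 = \mQ_j$, spin evolution equations identical in form to \eqref{eq:BT2gen}-adjacent evolution, and first-order pole equations that coincide with the Bäcklund system \eqref{eq:BTa}--\eqref{eq:BTb} except that every cross-chirality scalar $\alpha(a_j - b_k)$ is replaced by $\alpha(a_j - b_k + \ii\delta)$ and analogously for the $b$-sector. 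Substituting the factorizations $\mP_j = |e_j\rangle\langle f_j|$, $\mQ_j = |g_j\rangle\langle h_j|$ and using $\langle f_j|e_j\rangle = \langle h_j|g_j\rangle = 1$ reduces the spin and idempotency equations to \eqref{eq:sCM1b}, \eqref{eq:sCM1c}, \eqref{eq:sCM2b}, \eqref{eq:sCM2c}, exactly as in Lemma~\ref{lem:sBOlemma}.

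Finally, I would invoke a $\ii\delta$-shifted version of Proposition~\ref{prop:BT}. Its proof in Section~\ref{sec:BT_proof} uses only the functional identity \eqref{eq:Idmain}, the relation $\alpha' = -V$, and the parities of $\alpha$ and $V$, all of which are preserved under a constant imaginary shift of one of the arguments (this is where \eqref{eq:Idperiodic} enters decisively). Consequently, the shifted Bäcklund transformation propagates the initial constraints \eqref{eq:BTat=02_sncILW}--\eqref{eq:BTbt=02_sncILW} in time and implies the second-order sCM equations \eqref{eq:sCM1a} and \eqref{eq:sCM2a}. Combined with the strip condition \eqref{eq:imajimbj_sncILW}, which is precisely what makes $\cA_+'(x-a_j)$ and $\cA_-'(x-b_j)$ eigenfunctions of $\cT$ with the required eigenvalues $+\ii$ and $-\ii$ (so that \eqref{eq:Aidentity3} is applicable), this proves that \eqref{eq:ansatz2_ncILW} solves \eqref{eq:sncILW}.

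The main obstacle is the careful $\ii\delta$-bookkeeping: tracking which scalar prefactors acquire the shift, verifying that the hyperbolic periodicity \eqref{eq:Idperiodic} absorbs the even factors of $\ii\delta$ in same-chirality terms, and confirming that the strip condition \eqref{eq:imajimbj_sncILW} is exactly what is needed both for the eigenfunction property \eqref{eq:Aidentity3} and for the Bäcklund argument to make sense on the shifted arguments $a_j - b_k + \ii\delta$. Once this bookkeeping is in place, the reduction to the shifted Bäcklund transformation and the invocation of (the straightforward imaginary-shift extension of) Proposition~\ref{prop:BT} complete the proof in a manner entirely parallel to Theorem~\ref{thm:sBOgen}.
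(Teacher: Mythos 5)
Your proposal is correct and follows essentially the same route as the paper: substitute the compressed ansatz $\cU=\ii\sum_j r_j\mP_j\cA_{r_j}(x-a_j)$ into \eqref{eq:sncILW2}, use \eqref{eq:Aidentity1}--\eqref{eq:Aidentity3} to collect coefficients of $\cA_{r_j}$, $\cA_{r_j}'$, $\cA_{r_j}''$, factorize the spins, and reduce to the B\"acklund system. The only (cosmetic) difference is at the last step: rather than proving an $\ii\delta$-shifted version of Proposition~\ref{prop:BT}, the paper applies the original proposition to the shifted variables $\tilde a_j=a_j-\ii\delta/2$, $\tilde b_j=b_j+\ii\delta/2$ and uses the $2\ii\delta$-periodicity \eqref{eq:Idperiodic} to identify $\alpha(\tilde a_j-\tilde b_k)$ with $\alpha(a_j-b_k+\ii\delta)$ --- which is equivalent to what you describe.
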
 

\subsection{Proof of Theorem~\ref{thm:sncILWgen}}

We start with the spin-pole ansatz
\begin{equation}
\label{eq:sncILWansatzgen}
\cU(x,t)=\ii\sum_{j=1}^N \mP_j(t)\cA_{+}(x-a_j(t))-\ii\sum_{j=1}^M \mQ_j(t)\cA_{-}(x-b_j(t)),
\end{equation}
with $\cA_{\pm}$ as in \eqref{eq:cA} and where $\mP_j=\mP_j(t)$ and $\mQ_j=\mQ_j(t)$ are $d\times d$ matrices and $a_j=a_j(t)\in \C$, and $b_j=b_j(t)\in \C$ satisfy \eqref{eq:imajimbj_sncILW}   . Using the notation \eqref{eq:shorthandgen}, \eqref{eq:sncILWansatzgen} may be written in the form \eqref{eq:ansatz_sncILW}. By inserting \eqref{eq:ansatz_sncILW} with \eqref{eq:shorthandgen} into the sncILW equation \eqref{eq:sncILW2} and performing some computations, we obtain the following result. 

\begin{proposition} 
\label{prop:sncILW1gen} 
The functions $\mU(x,t)$ and $\mV(x,t)$ in \eqref{eq:ansatz2_ncILW} satisfy the sncILW equation \eqref{eq:sncILW} provided the following equations hold true,
\begin{subequations} \label{eq:BT2gen_sncILW}
\begin{align} 
\label{eq:BT2at2_sncILW} 
\dot a_{j} \mP_{j} = +2\ii\sum_{k\neq j}^N \mP_{j} \mP_{k}\alpha(a_{j}-a_{k}) -2\ii\sum_{k=1}^M \mP_{j}\mQ_{k}\alpha(a_{j}-b_{k}+\ii\delta)\quad (j=1,\ldots,N), \\
\label{eq:BT2bt2_sncILW} 
\dot b_{j} \mQ_{j} = -2\ii \sum_{k\neq j}^M \mQ_{k}\mQ_{j}\alpha(b_{j}-b_{k})+2\ii\sum_{k=1}^N \mP_{k}\mQ_{j}\alpha(b_{j}-a_{k}+\ii\delta)\quad (j=1,\ldots,M), 
\end{align} 
\end{subequations}
\begin{subequations} \label{eq:mPdotmQdot_sncILW}
\begin{align} 
\dot \mP_j & = -2\ii\sum_{k \neq j}^N[\mP_j,\mP_k]V(a_j-a_k) \quad (j=1,\ldots,N),\\
\dot \mQ_j & = -2\ii\sum_{k \neq j}^M[\mQ_j,\mQ_k]V(b_j-b_k) \quad (j=1,\ldots,M),
\end{align} 
\end{subequations} 
and 
\begin{equation} 
\label{eq:PiQk2_sncILW} 
\mP_j^2=\mP_j  \quad (j=1,\ldots, N),\quad \mQ_j^2=\mQ_j \quad  (j=1,\ldots, M).
\end{equation} 
\end{proposition}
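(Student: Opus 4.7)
The plan is to mirror the proof of Proposition~\ref{prop:sBO1} (Section~\ref{sec:sBOproof}) almost verbatim, with the scalar function $\alpha(x-a_j)$ replaced by the vector-valued function $\cA_{r_j}(x-a_j)$, ordinary multiplication replaced by the $\circ$-product, and the Hilbert transform replaced by the matrix operator $\cT$. First I would substitute the ansatz \eqref{eq:sncILWansatzgen} in its shorthand form \eqref{eq:ansatz_sncILW} (with $\cN = N+M$ and $r_j$ as in \eqref{eq:shorthandgen}) into the vector form \eqref{eq:sncILW2} of the sncILW equation, and compute each of the four summands $\cU_t$, $\{\cU \ocomma \cU_x\}$, $\cT \cU_{xx}$ and $\ii [\cU \ocomma \cT \cU_x]$ separately, expressing each as a linear combination of $\cA_{r_j}(x-a_j)$, $\cA_{r_j}'(x-a_j)$, and $\cA_{r_j}''(x-a_j)$.

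The time derivative $\cU_t$ produces $\cA_{r_j}$ and $\cA_{r_j}'$ contributions with coefficients $\ii r_j\dot\mP_j$ and $-\ii r_j \mP_j \dot a_j$, respectively. For $\{\cU \ocomma \cU_x\}$, the diagonal $j=k$ terms give $\mP_j^2 \cA_{r_j}''(x-a_j)$ via \eqref{eq:Aidentity1}; for $j \neq k$ I would invoke \eqref{eq:Aidentity2}, producing both an $\cA_{r_k}'$ piece with weight $-\alpha(a_j-a_k + \ii(r_j-r_k)\delta/2)$ and an $(\cA_{r_j} - \cA_{r_k})$ piece with weight $-V(a_j-a_k+\ii(r_j-r_k)\delta/2)$; the latter yields an antisymmetric double sum that vanishes exactly as in \eqref{eq:UUx1}. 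For the $\cT$-terms, the eigenfunction identity \eqref{eq:Aidentity3} gives $\cT \cU_x = -\sum_j \mP_j \cA_{r_j}'(x-a_j)$ and $\cT\cU_{xx} = -\sum_j \mP_j \cA_{r_j}''(x-a_j)$, while the commutator $\ii[\cU \ocomma \cT\cU_x]$ is handled exactly as in \eqref{eq:UHUx1}--\eqref{eq:UHUx2} using \eqref{eq:Aidentity2} once more.

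Assembling everything and grouping by type yields an identity of the shape
\begin{equation*}
\sum_{j=1}^{\cN} [\,\cdots\,]_j\,\cA_{r_j}(x-a_j) + \sum_{j=1}^{\cN} [\,\cdots\,]_j\,\cA_{r_j}'(x-a_j) + \sum_{j=1}^{\cN} (\mP_j^2 - \mP_j)\,\cA_{r_j}''(x-a_j) = 0.
\end{equation*}
Since $\cA_{r_j}$, $\cA_{r_j}'$, $\cA_{r_j}''$ have simple, double, and triple poles at $x=a_j$, respectively, and the $a_j$ are distinct, the three families are linearly independent and each bracketed coefficient must vanish. The $\cA_{r_j}''$ coefficient gives $\mP_j^2 = \mP_j$, which unpacks via \eqref{eq:shorthandgen} into \eqref{eq:PiQk2_sncILW}. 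The $\cA_{r_j}$ coefficient gives \eqref{eq:mPdotmQdot_sncILW}. The $\cA_{r_j}'$ coefficient, after splitting the sum over $k$ according to $r_k = \pm 1$, yields \eqref{eq:BT2at2_sncILW} and \eqref{eq:BT2bt2_sncILW}; the $+\ii\delta$ shifts appearing there are precisely the arguments $\ii(r_j - r_k)\delta/2$ arising from \eqref{eq:Aidentity2} when $r_j \neq r_k$, with any $-\ii\delta$ shifts folded into $+\ii\delta$ using the $2\ii\delta$-periodicity \eqref{eq:Idperiodic} of $\alpha$ in case~III.

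The main obstacle is the bookkeeping of the $r_j$-dependent arguments: one must carefully track how the shift $\ii(r_j-r_k)\delta/2$ propagates through each application of \eqref{eq:Aidentity2}, and verify that the surviving $\alpha$- and $V$-arguments land exactly at the values $a_j - b_k + \ii\delta$ and $b_j - a_k + \ii\delta$ appearing in \eqref{eq:BT2at2_sncILW}--\eqref{eq:BT2bt2_sncILW}. Once this bookkeeping is codified and one exploits the $2\ii\delta$-periodicity to rewrite all shifts in canonical form, the algebraic manipulations are formally identical to those in the sBO proof of Proposition~\ref{prop:sBO1}, and no genuinely new ideas are required.
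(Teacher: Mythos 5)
Your proposal follows essentially the same route as the paper's proof of Proposition~\ref{prop:sncILW1gen}: substitute the shorthand ansatz \eqref{eq:ansatz_sncILW} into the vector form \eqref{eq:sncILW2}, expand each term via the identities \eqref{eq:Aidentity1}--\eqref{eq:Aidentity3}, kill the antisymmetric double sum, and match coefficients of $\cA_{r_j}$, $\cA_{r_j}'$, $\cA_{r_j}''$ to recover \eqref{eq:BT2gen_sncILW}--\eqref{eq:PiQk2_sncILW}, including the use of $2\ii\delta$-periodicity to put the cross-term shifts into the canonical $+\ii\delta$ form. The argument is correct as a sufficiency proof (which is all the proposition claims), and your bookkeeping of the $\ii(r_j-r_k)\delta/2$ shifts matches the paper's.
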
 

\begin{proof}
We compute each term in \eqref{eq:sncILW2}. Using the notation \eqref{eq:shorthandgen}, we start with
\begin{equation}
\label{eq:Ut_sncILW}
\cU_t=  \sum_{j=1}^{\cN} \bigg( \ii r_j \dot{\mP}_j  \mathcal{A}_{r_j}(x-a_j) - \ii r_j \mP_j \dot{a}_j \cA_{r_j}'(x-a_j)\bigg).
\end{equation}
Next, we compute 
\begin{align}
\label{eq:UUx_sncILW}
 \{\cU\ocomma \cU_x\}=&\;   -\sum_{j=1}^{\mathcal{N}}\sum_{k=1}^{\mathcal{N}} r_jr_k \{\mP_j,\mP_k\}  \cA_{r_j}(x-a_j)\circ\cA_{r_k}'(x-a_k)\nonumber \\
=&\; -2\sum_{j=1}^{\cN} \mP_j^2\cA_{r_j}(x-a_j)\circ\cA_{r_j}'(x-a_j)-\sum_{j=1}^{\mathcal{N}}\sum_{k\neq j}^{\mathcal{N}} r_jr_k \{\mP_j,\mP_k\}  \cA_{r_j}(x-a_j)\circ\cA_{r_k}'(x-a_k) \nonumber \\
=&\; \sum_{j=1}^{\cN} \mP_j^2\cA_{r_j}''(x-a_j)+ \sum_{j=1}^{\cN}\sum_{k\neq j}^{\cN} r_jr_k \{\mP_j,\mP_k\} \alpha(a_j-a_k+\ii(r_j-r_k)\delta/2)\cA_{r_k}'(x-a_k) \nonumber \\
&\; +\sum_{j=1}^{\cN}\sum_{k\neq j}^{\cN} r_j r_k \{\mP_j,\mP_k\} V(a_j-a_k+\ii(r_j-r_k)\delta/2)\big(\cA_{r_j}(x-a_j)-\cA_{r_k}(x-a_k)\big) ,
\end{align}
where we have used the identities \eqref{eq:Aidentity1} and \eqref{eq:Aidentity2}. 

The final sum in \eqref{eq:UUx_sncILW} vanishes because the summand is antisymmetric under the interchange $j\leftrightarrow k$. Hence, after re-labelling summation indices $j\leftrightarrow k$ in the second sum in \eqref{eq:UUx_sncILW} using \eqref{eq:Id4}, we are left with
\begin{equation}
\label{eq:UUx2_sncILW}
\{\cU\ocomma\cU_x\}= \sum_{j=1}^{\cN}  \mP_j^2 \cA_{r_j}''(x-a_j)- \sum_{j=1}^{\cN}\sum_{k\neq j}^{\cN} r_jr_k \{\mP_j,\mP_k\}\alpha(a_j-a_k+\ii(r_j-r_k)\delta/2)\cA_{r_j}'(x-a_j).
\end{equation}

To compute the terms in \eqref{eq:sncILW2} involving $\cT$, we use \eqref{eq:Aidentity3}. Applied to \eqref{eq:ansatz_sncILW}, this gives
\begin{equation}
\label{eq:TUx}
\cT \cU_x= -\sum_{j=1}^{\cN} \mP_j \cA_{r_j}'(x-a_j) ,\qquad \cT \cU_{xx}= -\sum_{j=1}^{\cN} \mP_j \cA_{r_j}''(x-a_j),
\end{equation}
where we have used the fact that $\cT$ commutes with differentiation to obtain the second equation from the first. 

From \eqref{eq:ansatz_sncILW} and the first equation in \eqref{eq:TUx}, we compute
\begin{align}
\label{eq:UTUx}
\ii  [\cU\ocomma\cT \cU_x] =&\;  \sum_{j=1}^{\cN}\sum_{k\neq j}^{\cN} r_j [\mP_j,\mP_k]  \cA_{r_j}(x-a_j)\circ \cA_{r_k}'(x-a_k) \nonumber\\
=&\; -\sum_{j=1}^{\cN}\sum_{k\neq j}^{\cN} r_j[\mP_j,\mP_k] \alpha(a_j-a_k+\ii(r_j-r_k)\delta/2)\cA_{r_k}'(x-a_k) \nonumber \\
&\; - \sum_{j=1}^{\cN}\sum_{k\neq j}^{\cN} r_j[\mP_j,\mP_k] V(a_j-a_k+\ii(r_j-r_k)\delta/2)\big(\cA_{r_j}(x-a_j)-\cA_{r_k}(x-a_k)\big),
\end{align}
inserting \eqref{eq:Aidentity2} in the second step. We can rewrite the second sum as follows:
\begin{multline}
\sum_{j=1}^{\cN}\sum_{k\neq j}^{\cN} r_j[\mP_j,\mP_k] V(a_j-a_k+\ii(r_j-r_k)\delta/2)\big(\cA_{r_j}(x-a_j)-\cA_{r_k}(x-a_k)\big) \\
=   \frac12 \sum_{j=1}^{\cN}\sum_{k\neq j}^{\cN} (r_j+r_k)[\mP_j,\mP_k] V(a_j-a_k+\ii(r_j-r_k)\delta/2)\big(\cA_{r_j}(x-a_j)-\cA_{r_k}(x-a_k)\big) \\
=    \sum_{j=1}^{\cN}\sum_{k\neq j}^{\cN} (r_j+r_k)[\mP_j,\mP_k] V(a_j-a_k+\ii(r_j-r_k)\delta/2)\cA_{r_j}(x-a_j),
\end{multline}
since $V(z)$ is an even function. Also changing variables $j\leftrightarrow k$ in the first sum in \eqref{eq:UTUx} using that $\alpha(z)$ is odd and $2\ii\delta$-periodic \eqref{eq:Idperiodic},
 we arrive at
\begin{align}
\label{eq:UTUx2}
\ii  [\cU\ocomma\cT \cU_x] =&\; -\sum_{j=1}^{\cN}\sum_{k\neq j}^{\cN} r_k[\mP_j,\mP_k] \alpha(a_j-a_k+\ii(r_j-r_k)\delta/2)\cA_{r_j}'(x-a_j) \nonumber \\
&\; - \sum_{j=1}^{\cN}\sum_{k \neq j}^{\cN} (r_j+r_k) [\mP_j,\mP_k]V(a_j-a_k+\ii(r_j-r_k)\delta/2)\cA_{r_j}(x-a_j). 
\end{align}
Inserting \eqref{eq:Ut_sncILW}, \eqref{eq:UUx2_sncILW}, the second equation in \eqref{eq:TUx}, and \eqref{eq:UTUx2} into \eqref{eq:sncILW2} yields
\begin{align}
\label{eq:auxsystem2_hyperbolic}
0=&\; \sum_{j=1}^{\cN} \Bigg( \ii r_j \dot{\mP}_j -   \sum_{k\neq j}^{\cN} (r_j+r_k)[\mP_j,\mP_k]V(a_j-a_k+\ii(r_j-r_k)\delta/2)\Bigg)\cA_{r_j}(x-a_j) \nonumber \\
&\; + \sum_{j=1}^{\cN} \Bigg( -\ii r_j\mP_j\dot{a}_j -\sum_{k\neq j}^{\cN} r_k\big(r_j \{\mP_j,\mP_k\}+[\mP_j,\mP_k]  \big)  \alpha(a_j-a_k+\ii(r_j-r_k)\delta/2)   \Bigg) \cA_{r_j}'(x-a_j) \nonumber \\
&\; + \sum_{j=1}^{\cN}  \big( \mP_j^2-\mP_j   \big)    \cA_{r_j}''(x-a_j).
\end{align}
Thus, $\cU$ satisfies the sncILW equation \eqref{eq:sncILW2} if and only if the following conditions are fulfilled:
\begin{align}
\mP_j \dot{a}_j=&\; \ii\sum_{k\neq j}^{\cN} r_k \big(\{\mP_j,\mP_k\}+r_j[\mP_j,\mP_k]\big)\alpha(a_j-a_k+\ii(r_j-r_k)\delta/2), \label{eq:BT1_sncILW}\\
\dot{\mP}_j=&\; - \ii \sum_{k\neq j}^{\cN}(1+r_jr_k)[\mP_j,\mP_k]V(a_j-a_k+\ii(r_j-r_k)\delta/2), \label{eq:BT2_sncILW}\\
\mP_j^2= &\; \mP_j, \label{eq:BT3_sncILW}
\end{align}
for $j=1,\ldots,\cN$. Recalling \eqref{eq:shorthandgen}, one can check that \eqref{eq:BT1_sncILW}, \eqref{eq:BT2_sncILW}, and \eqref{eq:BT3_sncILW} are equivalent to \eqref{eq:BT2gen_sncILW}, \eqref{eq:mPdotmQdot_sncILW}, and \eqref{eq:PiQk2_sncILW} respectively. 
\end{proof}

We now make the ansatz 
\begin{equation}
\label{eq:mPj_sncILW} 
\begin{split}
\mP_j=   |e_j\rangle\langle f_j| \quad (j=1,\ldots,N),  \qquad \mQ_j = |g_j\rangle\langle h_j| \quad (j=1,\ldots,M),
\end{split}
\end{equation} 
with vectors $|e_j\rangle\in\cV$ and $\langle f_j|\in\cV^*$. 
Similar to the sBO case, the functions $\mU(x,t)$, $\mV(x,t)$ defined in \eqref{eq:ansatz2_ncILW} satisfy the equations in Proposition~\ref{prop:sncILW1gen} if $\{a_j,|e_j\rangle,\langle f_j|\}_{j=1}^N$ and $\{b_j,|e_j\rangle,\langle f_j|\}_{j=1}^M$ satisfy the equations defining the B\"acklund transformations of the sCM system discussed in Section~\ref{sec:sCM_BT}.  

\begin{lemma}
Let $\tilde{a}_j\coloneqq a_j-\ii\delta/2$ for $j=1,\ldots,N$ and $\tilde{b}_j=b_j+\ii\delta/2$ for $j=1,\ldots,M$. Suppose that $\{\tilde{a}_j,|e_j\rangle,\langle f_j|\}_{j=1}^N$ and $\{\tilde{b}_j,|g_j\rangle,\langle h_j|\}_{j=1}^M$ satisfy the equations \eqref{eq:sCM1b}, \eqref{eq:sCM1c}, \eqref{eq:sCM2b}, \eqref{eq:sCM2c}, and \eqref{eq:BT} and $a_j$, $b_j$, $\mP_j$ and $\mQ_j$ are given by \eqref{eq:mPj_sncILW}. Then \eqref{eq:BT2gen_sncILW}--\eqref{eq:PiQk2_sncILW} hold.
\end{lemma}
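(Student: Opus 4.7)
The plan is to proceed exactly as in the proofs of Lemma~\ref{lem:sBOlemma} and its non-hermitian analogue in Appendix~\ref{app:sBOsolutions}, with the only new ingredient being the shift $\tilde a_j = a_j - \ii\delta/2$, $\tilde b_j = b_j + \ii\delta/2$ and the $2\ii\delta$-periodicity of $\alpha$ and $V$ in case III (see \eqref{eq:Idperiodic}). First I would verify the idempotency relations \eqref{eq:PiQk2_sncILW}: from $\mP_j = |e_j\rangle\langle f_j|$ and the constraint $\langle f_j|e_j\rangle = 1$ in \eqref{eq:sCM1c} one gets $\mP_j^2 = |e_j\rangle\langle f_j|e_j\rangle\langle f_j| = \mP_j$, and analogously $\mQ_j^2 = \mQ_j$ from \eqref{eq:sCM2c}.

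Next I would derive \eqref{eq:mPdotmQdot_sncILW}. Since $\tilde a_j$ and $a_j$ differ only by a constant, $\dot{\tilde a}_j = \dot a_j$, and moreover $V(\tilde a_j - \tilde a_k) = V(a_j - a_k)$ (the shifts cancel). Using $\mP_j = |e_j\rangle\langle f_j|$ and \eqref{eq:sCM1b} for $\{\tilde a_j, |e_j\rangle, \langle f_j|\}$, I would compute
\begin{equation*}
\dot\mP_j = |\dot e_j\rangle\langle f_j| + |e_j\rangle\langle\dot f_j| = -2\ii\sum_{k\neq j}^N\bigl(\mP_j\mP_k - \mP_k\mP_j\bigr)V(\tilde a_j - \tilde a_k) = -2\ii\sum_{k\neq j}^N[\mP_j,\mP_k]V(a_j - a_k),
\end{equation*}
exactly as in Lemma~\ref{lem:sBOlemma}. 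The same computation with $\mQ_j = |g_j\rangle\langle h_j|$, \eqref{eq:sCM2b}, and $V(\tilde b_j - \tilde b_k) = V(b_j - b_k)$ yields the equation for $\dot\mQ_j$.

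The most delicate step will be \eqref{eq:BT2gen_sncILW}. Multiplying the B\"acklund equation \eqref{eq:BTa} for the tilded variables from the left by $|e_j\rangle$ and using $\dot{\tilde a}_j = \dot a_j$, I would obtain
\begin{equation*}
\dot a_j\,\mP_j = 2\ii\sum_{k\neq j}^N \mP_j\mP_k\,\alpha(\tilde a_j - \tilde a_k) - 2\ii\sum_{k=1}^M \mP_j\mQ_k\,\alpha(\tilde a_j - \tilde b_k).
\end{equation*}
Here $\tilde a_j - \tilde a_k = a_j - a_k$, while $\tilde a_j - \tilde b_k = a_j - b_k - \ii\delta$. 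Invoking the $2\ii\delta$-periodicity of $\alpha$ in case III, one has $\alpha(a_j - b_k - \ii\delta) = \alpha(a_j - b_k + \ii\delta)$, which converts the second sum into precisely the form that appears in \eqref{eq:BT2at2_sncILW}. The corresponding manipulation for \eqref{eq:BT2bt2_sncILW} is obtained by multiplying \eqref{eq:BTb} from the right by $\langle h_j|$ and again applying the periodicity identity to turn $\alpha(\tilde b_j - \tilde a_k) = \alpha(b_j - a_k + \ii\delta)$. All three claims then follow directly.

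The main obstacle, if any, is purely bookkeeping: one has to be careful with the two opposite half-integer shifts in the definitions of $\tilde a_j$ and $\tilde b_j$, and to recognize that their sum $\ii\delta$ together with the $2\ii\delta$-periodicity is exactly what reconciles the form of the BT in Proposition~\ref{prop:BT} (where arguments are differences of tilded variables) with the $\ii\delta$-shifted arguments appearing in \eqref{eq:BT2gen_sncILW}. No new analytic input beyond Proposition~\ref{prop:BT} and the identities in Appendix~\ref{app:alphaV} is needed.
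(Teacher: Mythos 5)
Your proposal is correct and follows essentially the same route as the paper: multiply the tilded B\"acklund relations \eqref{eq:BTa} and \eqref{eq:BTb} by $|e_j\rangle$ from the left and $\langle h_j|$ from the right, respectively, note that the half-shifts cancel in $\tilde a_j-\tilde a_k$ and $\tilde b_j-\tilde b_k$ but combine to $\mp\ii\delta$ in the cross terms, and invoke the $2\ii\delta$-periodicity \eqref{eq:Idperiodic} to match \eqref{eq:BT2gen_sncILW}; the remaining claims follow exactly as in Lemma~\ref{lem:sBOlemma}. No gaps.
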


\begin{proof}
By multiplying \eqref{eq:BTa} with $a_j$ replaced by $\tilde{a}_j$ from the left by $|e_j\rangle$, one gets
\begin{subequations}
\label{eq:BT2genalt_sncILW}
\begin{equation}
\dot{\tilde{a}}_j|e_j\rangle\langle f_j| =2\ii \sum_{k\neq j}^N |e_j\rangle \langle f_j | e_k \rangle \langle f_k| \alpha(\tilde{a}_j-\tilde{a}_k)-2\ii \sum_{k=1}^M |e_j\rangle \langle f_j | g_k \rangle \langle h_k| \alpha(\tilde{a}_j-\tilde{b}_k) \quad (j=1,\ldots,N). 
\end{equation} 
Similarly, by multiplying \eqref{eq:BTb} with $b_j$ replaced by $\tilde{b}_j$ from the right by $\langle h_j|$, one gets
\begin{equation}
\dot{\tilde{b}}_j|g_j\rangle\langle h_j| =-2\ii \sum_{k\neq j}^M | g_k \rangle\langle h_k | g_j \rangle \langle h_j|  \alpha(\tilde{b}_j-\tilde{b}_k)+2\ii \sum_{k=1}^N |e_k \rangle\langle f_k | g_j \rangle \langle h_j| \alpha(\tilde{b}_j-\tilde{a}_k) \quad (j=1,\ldots,M).
\end{equation} 
\end{subequations}
Equations \eqref{eq:BT2genalt_sncILW} are seen to be equivalent to \eqref{eq:BT2gen_sncILW} after recalling \eqref{eq:mPj_sncILW}, the definitions of $\tilde{a}_j$ and $\tilde{b}_j$, and the $2\ii\delta$-periodicity \eqref{eq:Idperiodic} of $\alpha$.

The remainder of the proof is similar to that of Lemma~\ref{lem:sBOlemma} and hence omitted. 
\end{proof}

Then, the theorem is implied by Proposition~\ref{prop:BT}.

\noindent
{\bf Acknowledgement} {\it We would like to thank Rob Klabbers, Enno Lenzmann, Masatoshi Noumi and Junichi Shiraishi for helpful discussions.
The work of B.K.B. was supported by the Olle Engkvist Byggm\"{a}stare Foundation, Grant 211-0122. 
E.L. gratefully acknowledges  support from the European Research Council, Grant Agreement No.\ 2020-810451.
J.L. acknowledges support from the Ruth and Nils-Erik Stenb\"ack Foundation, the Swedish Research Council, Grant No.\ 2015-05430 and Grant No.\ 2021-03877, and the European Research Council, Grant Agreement No. 682537.
}


\begin{thebibliography}{10}

\bibitem{benjamin1967}
T.B. Benjamin.
\newblock Internal waves of permanent form in fluids of great depth.
\newblock {\em J. Fluid Mech.}, 29:559, 1967.

\bibitem{ono1975}
H.~Ono.
\newblock {Algebraic Solitary Waves in Stratified Fluids}.
\newblock {\em J. Phys. Soc. Japan}, 39:1082, 1975.

\bibitem{berntson2020}
B.K. Berntson, E.~Langmann, and J.~Lenells.
\newblock {Non-chiral Intermediate Long Wave equation and inter-edge effects in
  narrow quantum Hall systems}.
\newblock {\em Phys. Rev. B}, 102:155308, 2020.

\bibitem{berntsonlangmann2020}
B.K. Berntson, E.~Langmann, and J.~Lenells.
\newblock On the non-chiral intermediate long wave equation.
\newblock {\em Nonlinearity (in print)}, 2022.
\newblock arXiv preprint: nlin.SI/2005.10781.

\bibitem{berntsonlangmann2021}
B.K. Berntson, E.~Langmann, and J.~Lenells.
\newblock On the non-chiral intermediate long wave equation {II}: periodic
  case.
\newblock {\em Nonlinearity (in print)}, 2022.
\newblock arXiv preprint: nlin.SI/2103.02572.

\bibitem{zhou2015}
T.~Zhou and M.~Stone.
\newblock {Solitons in a continuous classical Haldane--Shastry spin chain}.
\newblock {\em Phys. Lett. A}, 379:2817, 2015.

\bibitem{lenzmann2018}
E.~Lenzmann and A.~Schikorra.
\newblock {On energy-critical half-wave maps into $\mathbb{S}^2$}.
\newblock {\em Invent. Math.}, 213:1, 2018.

\bibitem{lenzmann2018b}
E.~Lenzmann.
\newblock {A short primer on the Half-Wave Maps Equation}.
\newblock {\em J.\'{E}.D.P.}, 2018.

\bibitem{senthil1999}
T.~Senthil, J.B. Marston, and M.P.A. Fisher.
\newblock {Spin quantum Hall effect in unconventional superconductors}.
\newblock {\em Phys. Rev. B}, 60:4245, 1999.

\bibitem{mccann2006}
E.~McCann and V.I. Fal'ko.
\newblock {Landau-Level Degeneracy and Quantum Hall Effect in a Graphite
  Bilayer}.
\newblock {\em Phys. Rev. Lett.}, 96:086805, 2006.

\bibitem{bernevig2006}
B.A. Bernevig and S.-C. Zhang.
\newblock {Quantum Spin Hall Effect}.
\newblock {\em Phys. Rev. Lett.}, 96:106802, 2006.

\bibitem{gibbons1984}
J.~Gibbons and T.~Hermsen.
\newblock {A generalisation of the Calogero-Moser system}.
\newblock {\em Physica D}, 11:337, 1984.

\bibitem{wojciechowski1987}
S.~Wojciechowski.
\newblock {An integrable marriage of the Euler equations to the Calogero-Moser
  system}.
\newblock {\em Physica D}, 28:245, 1987.

\bibitem{olshanetsky1981}
M.A. Olshanetsky and A.M. Perelomov.
\newblock {Classical integrable finite-dimensional systems related to Lie
  algebras}.
\newblock {\em Phys. Rep.}, 71:313, 1981.

\bibitem{chen1979}
H.H. Chen, Y.C. Lee, and N.R. Pereira.
\newblock Algebraic internal wave solitons and the integrable
  {Calogero--Moser--Sutherland} {$N$}‐body problem.
\newblock {\em Phys. Fluids}, 22:187, 1979.

\bibitem{berntsonklabbers2020}
B.K. Berntson, R.~Klabbers, and E.~Langmann.
\newblock Multi-solitons of the half-wave maps equation and spin-pole
  {Calogero-Moser} dynamics.
\newblock {\em J. Phys. A: Math. Theor.}, 53:505702, 2020.

\bibitem{matsuno2022}
Y.~Matsuno.
\newblock Integrability, conservation laws and solitons of a many-body
  dynamical system associated with the half-wave maps equation.
\newblock {\em Physica D}, 430:133080, 2022.

\bibitem{abanov2009}
A.G. Abanov, E.~Bettelheim, and P.~Wiegmann.
\newblock {Integrable hydrodynamics of Calogero-Sutherland model: bidirectional
  Benjamin-Ono equation}.
\newblock {\em J. Phys. A: Math. Theor.}, 42:135201, 2009.

\bibitem{gibbons1983}
J.~Gibbons, T.~Hermsen, and S.~Wojciechowski.
\newblock A {B}{\"a}cklund transformation for a generalised {C}alogero-{M}oser
  system.
\newblock {\em Phys. Lett. A}, 94:251, 1983.

\bibitem{hartnoll2007}
S.A. Hartnoll, P.K. Kovtun, M.~M\"uller, and S.~Sachdev.
\newblock Theory of the {N}ernst effect near quantum phase transitions in
  condensed matter and in dyonic black holes.
\newblock {\em Phys. Rev. B}, 76:144502, 2007.

\bibitem{andreev2011}
A.V. Andreev, S.A. Kivelson, and B.~Spivak.
\newblock {Hydrodynamic Description of Transport in Strongly Correlated
  Electron Systems}.
\newblock {\em Phys. Rev. Lett.}, 106:256804, 2011.

\bibitem{svintsov2013}
D.~Svintsov, V.~Vyurkov, V.~Ryzhii, and T.~Otsuji.
\newblock Hydrodynamic electron transport and nonlinear waves in graphene.
\newblock {\em Phys. Rev. B}, 88:245444, 2013.

\bibitem{calogero1991}
F.~Calogero.
\newblock {\em {Why Are Certain Nonlinear PDEs Both Widely Applicable and
  Integrable?}}, pages 1--62.
\newblock Springer Berlin Heidelberg, Berlin, Heidelberg, 1991.

\bibitem{bettelheim2006}
E.~Bettelheim, A.G. Abanov, and P.~Wiegmann.
\newblock {Nonlinear Quantum Shock Waves in Fractional Quantum Hall Edge
  States}.
\newblock {\em Phys. Rev. Lett.}, 97:246401, 2006.

\bibitem{wiegmann2012}
P.~Wiegmann.
\newblock {Nonlinear Hydrodynamics and Fractionally Quantized Solitons at the
  Fractional Quantum {H}all Edge}.
\newblock {\em Phys. Rev. Lett.}, 108:206810, 2012.

\bibitem{abanov2005}
A.G. Abanov and P.B. Wiegmann.
\newblock {Quantum hydrodynamics, the Quantum Benjamin-Ono equation, and the
  Calogero model}.
\newblock {\em Phys. Rev. Lett.}, 95:076402, 2005.

\bibitem{difrancesco1997}
P.~Di~Francesco, P.~Mathieu, and D.~S\'en\'echal.
\newblock {\em {Conformal field theory}}.
\newblock Graduate texts in contemporary physics. Springer, New York, NY, 1997.

\bibitem{dirac1939}
P.A.M. Dirac.
\newblock A new notation for quantum mechanics.
\newblock {\em Math. Proc. Cambridge Philos. Soc.}, 35:416, 1939.

\bibitem{calogero1976coupled}
F.~Calogero and A.~Degasperis.
\newblock Coupled nonlinear evolution equations solvable via the inverse
  spectral transform, and solitons that come back: the boomeron.
\newblock {\em Lett. Nuovo Cim}, 16:425, 1976.

\bibitem{kodama1981}
Y.~Kodama, J.~Satsuma, and M.J. Ablowitz.
\newblock {Nonlinear Intermediate Long-Wave Equation: Analysis and Method of
  Solution}.
\newblock {\em Phys. Rev. Lett.}, 46:687, 1981.

\bibitem{DLMF}
{\it NIST Digital Library of Mathematical Functions}.
\newblock http://dlmf.nist.gov/, Release 1.0.26 of 2020-03-15.
\newblock F.W.J. Olver, A.B. {Olde Daalhuis}, D.W. Lozier, B.I. Schneider, R.F.
  Boisvert, C.W. Clark, B.R. Miller, B.V. Saunders, H.S. Cohl, and M.A.
  McClain, eds.

\bibitem{berntsonklabbers2021}
B.K. Berntson, R.~Klabbers, and E.~Langmann.
\newblock {The non-chiral intermediate Heisenberg ferromagnet equation}.
\newblock arXiv preprint: math-ph/2110.06239.

\bibitem{scoufis2005}
G.~Scoufis and C.M. Cosgrove.
\newblock An application of the inverse scattering transform to the modified
  intermediate long wave equation.
\newblock {\em J. Math. Phys.}, 46:103501, 2005.

\bibitem{lax1968}
P.D. Lax.
\newblock Integrals of nonlinear equations of evolution and solitary waves.
\newblock {\em Comm. Pure Appl. Math.}, 21:467, 1968.

\bibitem{goncharenko2001}
V.M. Goncharenko.
\newblock {Multisoliton Solutions of the Matrix KdV Equation}.
\newblock {\em Theor. Math. Phys.}, 126:81, 2001.

\bibitem{carey1999}
A.L. Carey and E.~Langmann.
\newblock {Loop Groups, Anyons and the Calogero--Sutherland Model}.
\newblock {\em Comm. Math. Phys.}, 201:1, 1999.

\bibitem{kulkarni2009}
M.~Kulkarni, F.~Franchini, and A.G. Abanov.
\newblock {Nonlinear dynamics of spin and charge in spin-Calogero model}.
\newblock {\em Phys. Rev. B}, 80:165105, 2009.

\bibitem{xing2015}
L.~Xing.
\newblock {\em {Classical hydrodynamics of Calogero-Sutherland models}}.
\newblock PhD thesis, University of Illinois at Urbana-Champaign, 2015.

\bibitem{hikami1993}
K.~Hikami and M.~Wadati.
\newblock Integrability of {C}alogero-{M}oser {S}pin {S}ystem.
\newblock {\em J. Phys. Soc. Japan}, 62:469, 1993.

\bibitem{krichever1995}
I.~Krichever, O.~Babelon, E.~Billey, and M.~Talon.
\newblock {Spin generalization of the Calogero-Moser system and the matrix KP
  equation}.
\newblock In S.P Novikov, editor, {\em Topics in Topology and Mathematical
  Physics}, volume 170, pages 83--120. American Mathematical Society, 1995.

\bibitem{king2009}
F.W. King.
\newblock {\em {Hilbert Transforms}}, volume~1 of {\em Encyclopedia of
  Mathematics and its Applications}.
\newblock Cambridge University Press, 2009.

\end{thebibliography}
\end{document}